\newcommand\remove[1]{}
\theoremstyle{plain}
\newtheorem{lemma}{Lemma}[section]
\newtheorem{theorem}[lemma]{Theorem}
\newtheorem{corollary}[lemma]{Corollary}
\newtheorem{rem}{Remark}[section]
\newtheorem{thm}[lemma]{Theorem}
\theoremstyle{definition}
\newtheorem{definition}[lemma]{Definition}
\DeclareMathOperator{\poly}{poly}
\newcommand\card[1]{\left| #1 \right|}
\newcommand\sett[2]{\left\{ \left. #1 \;\right\vert #2 \right\}}
\newcommand\set[1]{{\left\{ #1 \right\}}}
\renewcommand\emptyset{\phi}
\newcommand\Prob[2]{{\Pr_{#1}\left[ {#2} \right]}}
\newcommand\Expc[2]{{\mathop{\bf E}_{#1}\left[ {#2} \right]}}
\newcommand\ceil[1]{\lceil{#1}\rceil}
\newcommand\defeq{\doteq}
\newcommand\floor[1]{\lfloor{#1}\rfloor}
\newcommand\N{\mathbb{N}}
\newcommand\field{\mathbb{F}}
\newcommand\dims{m}
\newcommand\vecspace{\field^\dims}
\newcommand\degree{d}
\newif\ifrandom
\author{
	Ofer Grossman\thanks{
		{\tt ogrossma@mit.edu}.
		Department of Mathematics,
		MIT.
	} \and
	Dana Moshkovitz\thanks{
		{\tt dmoshkov@csail.mit.edu}.
		Department of Electrical Engineering and Computer Science,
		MIT.
		This material is based upon work supported by the National Science Foundation under grants number 1218547 and 1452302. 
}
}
\begin{document}
\pagenumbering{gobble}

\title{Amplification and Derandomization Without Slowdown}

\maketitle

\begin{abstract}
We present techniques for decreasing the error probability of randomized algorithms and for converting randomized algorithms to deterministic (non-uniform) algorithms. 
Unlike most existing techniques that involve repetition of the randomized algorithm and hence a slowdown, our techniques produce algorithms with a similar run-time to the original randomized algorithms. 
The amplification technique is related to a certain stochastic multi-armed bandit problem.
The derandomization technique -- which is the main contribution of this work -- points to an intriguing connection between derandomization and sketching/sparsification.

We demonstrate the techniques by showing the following applications:
\begin{enumerate}
\item \textbf{Dense Max-Cut:} A Las Vegas algorithm that given a $\gamma$-dense $G=(V,E)$ that has a cut containing $1-\varepsilon$ fraction of the edges, finds a cut that contains $1-O(\varepsilon)$ fraction of the edges. The algorithm runs in time $\tilde{O}(\card{V}^2(1/\varepsilon)^{O(1/\gamma^2 + 1/\varepsilon^2)})$ and has error probability exponentially small in $\card{V}^2$. It also implies a deterministic non-uniform algorithm with the same run-time (note that the input size is $\Theta(\card{V}^2)$). 
 
\item \textbf{Approximate Clique:} A Las Vegas algorithm that given a graph $G=(V,E)$ that contains a clique on $\rho\card{V}$ vertices, and given $\varepsilon>0$, finds a set on $\rho\card{V}$ vertices of density at least $1-\varepsilon$. The algorithm runs in time $\tilde{O}(\card{V}^2 2^{O(1/(\rho^3\varepsilon^2))})$ and has error probability exponentially small in $\card{V}$. We also show a deterministic non-uniform algorithm with the same run-time.

\item \textbf{Free Games:} A Las Vegas algorithm and a non-uniform deterministic algorithm that given a free game (constraint satisfaction problem on a dense bipartite graph) with value at least $1-\varepsilon_0$ and given $\varepsilon>0$, find a labeling of value at least $1-\varepsilon_0-\varepsilon$. The error probability of the randomized algorithm is exponentially small in the number of vertices and labels. The run-time of the algorithms is similar to that of algorithms with constant error probability.


\item \textbf{From List Decoding To Unique Decoding For Reed-Muller Code:} A randomized algorithm with error probability exponentially small in the input size that given a word $f$ and $0<\epsilon ,\rho<1$ finds a short list such that every low degree polynomial that is $\rho$-close to $f$ is $(1-\epsilon)$-close to one of the words in the list. The algorithm runs in nearly linear time in the input size, and implies a deterministic non-uniform algorithm with similar run-time. 
The run-time of our algorithms compares with that of the most efficient algebraic algorithms, but our algorithms are combinatorial and much simpler. 
\end{enumerate}
\end{abstract}

\newpage
\pagenumbering{arabic}
\setcounter{page}{1}

\section{Introduction}

\subsection{Amplification}\label{ss:coin}

Given a randomized algorithm that runs in time $t$ and has error probability $1/3$, can we find a randomized algorithm that runs in similar time and has a substantially smaller error probability $e^{-\Omega(k)}$? One can achieve such a low error probability by repeating the algorithm $k$ times. 
However, the resulting algorithm is slower by a factor of $k$ than the original algorithm, which is a significant slowdown when $k$ is large (for instance, consider $k$ that equals the input size $n$, or equals $n^{\epsilon}$ for some constant $\epsilon>0$). 
In this work we show that in many situations one can decrease the error probability of the algorithm to $e^{-\Omega(k)}$ without any substantial slowdown. These situations occur when there is an additional randomized algorithm for evaluating the quality of the randomized choices of the algorithm that is more efficient than the overall algorithm. 

We show how to capitalize on the existence of such a testing algorithm using an algorithm for a stochastic multi-armed bandit problem that we define.
In this problem, which we call the {\em biased coin problem}, there is a large pile of coins, and $2/3$ fraction of the coins are biased, meaning that they fall on heads with high probability. The coins are unmarked and the only way to discover information about a coin is to toss it. The task is to find one biased coin with certainty $1-e^{-\Omega(k)}$ using as few coin tosses as possible. The analogy between the biased coin problem and amplification is that the coins represent possible random choices of the algorithm, many of which are good. The task is to find one choice that is good with very high probability. Tossing a coin corresponds to testing the random choice of the algorithm.

\subsection{Derandomization}
What speed-up does randomization buy?
Impagliazzo and Wigderson~\cite{IW} showed that, under plausible hardness assumptions, randomness can only speed up a polynomial-time computation by a polynomial factor. Their deterministic algorithm, which invokes the randomized algorithm on randomness strings generated by enumerating over all possible seeds of a pseudorandom generator, slows down the run-time by at least a linear factor. To avoid the reliance on unproven assumptions, researchers typically use properties of the concrete randomized algorithms they wish to derandomize and design (or use off-the-shelf) pseudorandom generators for them (e.g., pairwise independent, $\varepsilon$-biased sets, $k$-wise-independent and almost $k$-wise independent; see, e.g.,~\cite{LW,NN,AGHP}). Here too derandomization slows down the run-time by at least a linear factor. 

A different derandomization method by Adleman~\cite{Adleman} uses amplification for derandomization and does not rely on any unproven assumptions. It generates a non-uniform deterministic algorithm by first decreasing the error probability of the randomized algorithm below $2^{-n}$ where $n$ is the input size. Then, there must exist a randomness string that works for all $2^n$ inputs, and this randomness string can be hard-wired to a non-uniform algorithm. Due to the slowdown in amplification discussed in the previous sub-section, this technique too slows down the run-time by a linear factor in $n$.  

There is a general derandomization method that typically does not increase the run-time significantly, namely {\em the method of conditional probabilities}. It is used, for instance, for finding an assignment that satisfies $7/8$ fraction of the clauses in a \textsc{3Sat} formula. However, this method works only in very special cases. 
In this work we'll be interested in derandomizing algorithms without slowing down the run-time significantly, in cases where the method of conditional expectations does not apply. 

Our derandomization method builds on Adleman's technique but avoids its slowdown, by using a new connection to sketching and sparsification. Briefly, the connection is as follows: consider a verifier that given an input and a randomness string for the randomized algorithm tests whether the outcome of the randomized algorithm is correct. If the verifier can perform its test with only a size-$n'$ sketch of the input (we call such a verifier an {\em oblivious verifier}), then Adleman's union bound can be performed over $2^{n'}$ representative inputs, rather than over $2^n$ inputs. This means that it suffices to amplify the error probability below $2^{-n'}$. This saving, together with the amplification technique discussed in the previous sub-section, allows us to derandomize without slowdown. 

\subsection{Context}

The main existing approach to derandomization -- the one based on pseudorandom generators -- focuses on shrinking the number of random strings. This is possible since the algorithm is limited (by its run-time or by the simple way it uses the randomness) and cannot distinguish the set of all randomness strings from a small subset of it (pseudorandom strings). In contrast, our approach focuses on shrinking the number of inputs one needs to argue about. We show that it's enough that a randomness string leads to a correct output for all sketches.

Crucially, we do not argue that the algorithm doesn't make use of its entire input, or that inputs with the same sketch are indistinguishable, or that inputs with the same sketch are not distinguished by the algorithm. The algorithms we consider depend on all of their input. 
Our argument relies on the existence of a {\em verifier} aimed at certifying that randomness is good for an input, and which doesn't distinguish between inputs with the same sketch. Surprisingly, we are able to devise sketches and design such oblivious verifiers for many natural algorithms.

The sketch that the oblivious verifier uses can be hard to compute, and it may reveal to the verifier a correct output. The verifier need not (and generally will not be) efficient. The only requirement is that the number of bits in the sketch is small and that the verifier is deterministic (though the construction of the sketch can be probabilistic -- we only need existence of a sketch). 
Our applications include problems on dense graphs where sketching can be done using uniform samples.
We hope that the large body of work on sparsification and sketching (see, e.g., ~\cite{Karger,BK,TZ} and the many works that followed them) could be used for more sophisticated applications of our methods.

\remove{
It is interesting to note the duality between our derandomization and derandomization through pseudorandom generators. Both methods rely on the simplicity of the randomized algorithm and the matrix that describes its behavior. When considering pseudorandom generators we argue that a small collection of randomness strings (matrix columns) suffices, while when considering oblivious verifiers we argue that a small collection of inputs (matrix rows) suffice.  
}

\subsection{Non-Uniform Algorithms, Preprocessing and Amortization}

Our derandomization produces non-uniform algorithms, i.e., algorithms that are designed with a specific input size in mind. The knowledge of the input size is manifested by an ``advice'' string that depends on the input size. The size of the advice counts toward the run-time of the algorithm (so, for instance, advice that consists of the output for each possible input leads to an exponential-time algorithm). Equivalently, non-uniform algorithms are described as sequences of circuits, one for each input size. Sorting networks are an instance of non-uniform algorithms.

In some cases non-uniform algorithms imply uniform algorithms with the same asymptotic run time. This is the case with \textsc{Matrix-Multiplication} and \textsc{Minimum-Spanning-Tree}~\cite{PeRa}. More generally, whenever a problem on inputs of size $n$ can be reduced to the same problem on $n/s$ inputs of size $s$ each for, say, $s=\log\log\log n$, a non-uniform algorithm for the problem implies a uniform algorithm. The uniform algorithm uses exhaustive search to find the advice for inputs of size $s$ (checking all possible advices, and for each, all possible inputs). It then uses the reduction to find the sub-problems and the non-uniform algorithm to solve the sub-problems.

Even when a reduction of this sort does not exist,
one can either designate the search for a good advice as a preprocessing phase after which the algorithm is correct on all inputs, or amortize the cost of searching for a good advice across inputs. If the space of possible advice strings contains $2^a$ possibilities (where $a$ can be as small as $O(\log n)$ if the space is the set of possible outputs of a pseudorandom generator), and one can amortize the cost over $2^a$ inputs, then one obtains the desired run-time uniformly, amortized.



\subsection{Applications}

We demonstrate our techniques with applications for \textsc{Max-Cut} on dense graphs, (approximate) \textsc{Clique} on graphs that contain large cliques, free games (constraint satisfaction problems on dense bipartite graphs), and reducing the Reed-Muller list decoding problem to its unique decoding problem. 
All our algorithms run in nearly linear time in their input size, and all of them beat the current state of the art algorithms in one aspect or another. 
The biggest improvement is in the algorithm for free games that is more efficient by orders of magnitude than the best deterministic algorithms.
The algorithm for \textsc{Max-Cut} can efficiently handle sparser graphs than the best deterministic algorithm, the algorithm for (approximate) \textsc{Clique} can efficiently handle smaller cliques than the best deterministic algorithm; and the algorithm for the Reed-Muller code achieves similar run-time as sophisticated algebraic algorithms despite being much simpler. In general, our focus is on demonstrating the utility and versatility of the techniques and not on obtaining the most efficient algorithm for each problem. In the open problems section we point to several aspects where we leave room for improvement.

\subsubsection{Max Cut on Dense Graphs} 

Given a graph $G=(V,E)$, a cut in the graph is defined by $C\subseteq V$. The value of the cut is the fraction of edges $e=(u,v)\in E$ such that $u\in C$ and $v\in V-C$. We say that a graph is $\gamma$-dense if it contains $\gamma\card{V}^2/2$ edges. For simplicity we assume that the graph is regular, so every vertex has degree $\gamma\card{V}$. 
Given a regular $\gamma$-dense graph that has a cut of value at least $1-\varepsilon$ for $\varepsilon < 1/4$, we'd like to find a cut of value roughly $1-\varepsilon$. Understanding this problem on general (non-dense) graphs is an important open problem: (a weak version of) the Unique Games Conjecture~\cite{Khot}. However, for dense graphs, it is possible to construct a cut of value $1-\varepsilon-\zeta$ efficiently~\cite{Vega,AKK95,GGR,MS}.
The best randomized algorithms are an algorithm of Mathieu and Schudy~\cite{MS} that runs in time $O(\card{V}^2 + 2^{O(1/\gamma^2\zeta^2)})$ and an algorithm of Goldreich, Goldwasser and Ron~\cite{GGR} that runs in time $O(\card{V}(1/\gamma\zeta)^{O(1/\gamma^2\zeta^2)} + (1/\gamma\zeta)^{O(1/\gamma^3\zeta^{3})})$ (Note that the algorithm of~\cite{GGR} runs in sub-linear time. This is possible since it is an Atlantic City algorithm).
Both algorithms have constant error probability. We obtain a Las Vegas algorithm with exponentially small error probability, and deduce a deterministic non-uniform algorithm. This is the simplest application of our techniques. It uses the biased coin algorithm, but does not require any sketches.
 
\begin{theorem}\label{t:max-cut} There is a Las Vegas algorithm that given a $\gamma$-dense graph $G$ that has a cut of value at least $1-\varepsilon$ for $\varepsilon < 1/4$, and given $\zeta < 1/4-\varepsilon$, finds a cut of value at least $1-\varepsilon - O(\zeta)$, except with probability exponentially small in $\card{V}^2$. The algorithm runs in time $\tilde{O}(\card{V}^2(1/\zeta)^{O(1/\gamma^2 + 1/\zeta^2)})$. It also implies a non-uniform deterministic algorithm with the same run-time.
\end{theorem}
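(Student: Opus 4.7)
The approach is to start from a known constant-error randomized algorithm $\mathcal{A}$ for dense Max-Cut, amplify its error down to $e^{-\Omega(\card{V}^2)}$ via the biased coin technique of Section~\ref{ss:coin}, and then derive the non-uniform deterministic version by Adleman's union bound~\cite{Adleman}. For $\mathcal{A}$ take one of the known algorithms~\cite{Vega,AKK95,GGR,MS}; Mathieu--Schudy~\cite{MS} is convenient because it already fits the target per-call budget $\tilde{O}(\card{V}^2(1/\zeta)^{O(1/\gamma^2+1/\zeta^2)})$ while succeeding with probability $\geq 2/3$. Pair it with a simple randomized tester for cut quality: given a candidate cut $C$, draw a uniformly random edge of $G$ and report whether it is cut. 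Each draw costs $O(1)$, the probability of ``cut'' equals the value of $C$, and a block of $O(1/\zeta^2)$ independent draws separates cuts of value $\geq 1-\varepsilon-O(\zeta)$ from cuts of value $\leq 1-\varepsilon-\Omega(\zeta)$ with constant confidence, so the block acts as a bona fide biased-versus-unbiased coin.

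Amplification then reduces to a direct instance of the biased coin problem: each seed $r$ for $\mathcal{A}$ is a coin, biased means $\mathcal{A}(G;r)$ is a good cut, and a toss is one tester-block call on the cut produced by $r$ using fresh randomness. Drawing $N=\Theta(\card{V}^2)$ independent seeds makes at least (say) half of the coins biased with probability $1-e^{-\Omega(\card{V}^2)}$ by Chernoff, and the biased coin algorithm of Section~\ref{ss:coin} then isolates one biased seed with confidence $1-e^{-\Omega(\card{V}^2)}$. For the Las Vegas guarantee, exactly verify the surviving cut in $O(\card{V}^2)$ time and restart on the (exponentially rare) failure. For the non-uniform deterministic algorithm, push the error below $2^{-c\card{V}^2}$; since there are at most $2^{\binom{\card{V}}{2}}\leq 2^{\card{V}^2}$ possible inputs of size $\card{V}$, a union bound yields a single randomness string that succeeds on every such input, which is hard-wired as advice.

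The main obstacle is cost accounting. Naively running $\mathcal{A}$ on $\Theta(\card{V}^2)$ fresh seeds costs $\Theta(\card{V}^2)$ times the per-call budget, an extra factor of $\card{V}^2$ over what we can afford. The saving must come from the biased coin algorithm together with a decoupling of cut production and cut evaluation: each seed is represented implicitly (for Mathieu--Schudy, by a sample and a split), the tester only evaluates a cut on randomly sampled edges, and the expensive full reconstruction is performed only for the seed that survives selection. Verifying that the biased coin algorithm's toss bound from Section~\ref{ss:coin}, multiplied by the $O(1/\zeta^2)$ per-test cost, still fits inside $\tilde{O}(\card{V}^2(1/\zeta)^{O(1/\gamma^2+1/\zeta^2)})$ is the main quantitative step; once this is in place, the Chernoff concentration and the Adleman union bound are routine.
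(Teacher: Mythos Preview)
Your plan has the right ingredients and is close to the paper's, but the coin setup has a gap that the paper resolves with its grouped variant of the biased-coin algorithm. You set coin $=$ seed $r$ of a base algorithm $\mathcal{A}$ and say a toss samples an edge and checks whether $\mathcal{A}(G;r)$ cuts it. For the sample-and-enumerate algorithms you cite, however, the seed fixes only the sample $S\subseteq V$; the output cut is $C_{S,H^\star}$ where $H^\star\subseteq S$ is the split maximizing the induced cut's value, and determining $H^\star$ is precisely the expensive step you wish to defer. Your parenthetical ``a sample and a split'' does not resolve this: the split is an \emph{output} of $\mathcal{A}$, not part of its seed, and if you randomized the split as part of the seed then only a $2^{-\card{S}}$ fraction of seeds would be good, violating the biased-coin premise. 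Since the biased-coin procedure may restart up to $\tilde{O}(\card{V}^2)$ times before the $e^{-\Omega(\card{V}^2)}$ guarantee takes effect, paying even one full call to $\mathcal{A}$ per coin costs $\tilde{O}(\card{V}^2)\cdot T_{\mathcal{A}}$, a $\card{V}^2$ overshoot when $T_{\mathcal{A}}=\Theta(\card{V}^2)$ as for Mathieu--Schudy.

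The paper avoids this by never committing to a single $H$ per sample. It casts the problem as the \emph{grouped} biased-coin problem of Section~\ref{s:approx-coin}: drawing $S$ draws a group of $g=2^{\card{S}}$ coins, one per $H\subseteq S$, the bias of the $(S,H)$-coin being the value of $C_{S,H}$. A single toss of a specific $(S,H)$-coin costs only $O(\card{S})$: draw a random edge $(u,v)$ and decide membership of $u,v$ in $C_{S,H}$ by comparing their $S$-neighborhoods to $H$. The grouped algorithm tosses all $g$ coins in each phase and tracks the maximum fraction of heads; with $n=\card{V}^2$ and $g=2^{\card{S}}=(1/\zeta)^{O(1/\gamma^2+1/\zeta^2)}$ it uses $\tilde{O}(ng/\zeta^2)$ tosses of cost $O(\card{S})$ each, landing at the claimed $\tilde{O}(\card{V}^2(1/\zeta)^{O(1/\gamma^2+1/\zeta^2)})$. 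With this correction, the remainder of your plan --- exact verification of the surviving cut for the Las Vegas guarantee, and Adleman's union bound over the at most $2^{\card{V}^2}$ inputs for the non-uniform deterministic algorithm --- goes through as you describe. (The step where you pre-draw $\Theta(\card{V}^2)$ seeds is also unnecessary: the biased-coin procedure draws coins on demand from the full distribution.)
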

Note that run-time $\Omega(\gamma\card{V}^2)$ is necessary for a deterministic algorithm, since the input size is $\gamma\card{V}^2$.
A deterministic $O(\card{V}^2\poly(1/\gamma\zeta) + 2^{\poly(1/\gamma\zeta)})$-time algorithm follows from a recent deterministic version of the Frieze-Kannan regularity lemma~\cite{Szeme,FrKa,DKMRS,DKMRS0,ADLRY}, however the $\poly(\cdot)$ term in the exponent hides large constant exponents. Therefore, our algorithm handles efficiently graphs that are sparser than those handled efficiently by the existing deterministic algorithm.


\subsubsection{Approximate Clique} 

The input is $0<\varepsilon,\rho<1$ and an undirected graph $G=(V,E)$ for which there exists a set $C\subseteq V$, $\card{C}\geq \rho\card{V}$, that spans a clique. The goal is to find a set $D\subseteq V$, $\card{D}\geq \rho\card{V}$, whose edge density is at least $1-\varepsilon$, i.e., if $E(D)\subseteq E$ is the set of edges whose endpoints are in $D$, then $\card{E(D)}/\binom{\card{D}}{2}\geq 1-\varepsilon$.
Goldreich, Goldwasser and Ron~\cite{GGR} gave a randomized $O(\card{V} (1/\varepsilon)^{O(1/(\rho^3\varepsilon^2))})$ time algorithm for this problem with constant error probability (Note that this is a sub-linear time algorithm. This is possible since it is an Atlantic City algorithm). A deterministic $O(\card{V}^2\poly(1/\rho,1/\varepsilon) + 2^{\poly(1/\rho,1/\varepsilon)})$ time algorithm with worse dependence on $\rho$ and $\varepsilon$ follows from a deterministic version of the Frieze-Kannan regularity lemma~\cite{Szeme,FrKa,DKMRS,DKMRS0,ADLRY}. We obtain a randomized algorithm with exponentially small error probability in $\card{V}$, and use sketching to obtain a non-uniform deterministic algorithm. Our algorithms have better dependence in $\rho$ and $\varepsilon$ than the existing deterministic algorithm, and can therefore handle efficiently graphs with smaller cliques than the existing deterministic algorithm and output denser sets.

\begin{theorem}\label{t:clique} The following hold: 
\begin{enumerate}
\item There is a Las Vegas algorithm that given $0<\rho,\varepsilon<1$, and a graph $G=(V,E)$ with a clique on $\rho\card{V}$ vertices, finds a set of $\rho\card{V}$ vertices and density at least $1-\varepsilon$, except with probability exponentially small in $\card{V}$. The algorithm runs in time $\tilde{O}(\card{V}^2 2^{O(1/(\rho^3\varepsilon^2))})$. 

\item There is a deterministic non-uniform algorithm that given $0<\rho,\varepsilon<1$, and a graph $G=(V,E)$ with a clique on $\rho\card{V}$ vertices, finds a set of $\rho\card{V}$ vertices and density at least $1-\varepsilon$. The algorithm runs in time $\tilde{O}(\card{V}^2 2^{O(1/(\rho^3\varepsilon^2))})$.

\end{enumerate} 
\end{theorem}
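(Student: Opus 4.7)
The base randomized algorithm is the Goldreich-Goldwasser-Ron sampling scheme. Draw a uniform sample $S \subseteq V$ of $k = O(1/(\rho^3 \varepsilon^2))$ vertices. For each of the $2^k$ subsets $T \subseteq S$, viewed as a guess for $S \cap C$ (where $C$ is the promised clique), compute $D_T = \{v \in V : |N(v) \cap T| \geq (1-\varepsilon/4)|T|\}$, and output the first $D_T$ with $|D_T| \geq \rho|V|$ and internal edge density at least $1-\varepsilon$. A Chernoff-plus-union-bound argument shows that with constant probability over $S$, the guess $T^\star = S \cap C$ yields a valid output. A single trial takes time $\tilde{O}(|V|^2 \cdot 2^{O(1/(\rho^3\varepsilon^2))})$, dominated by verifying the size and density of each of the $2^k$ candidate sets.

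For part (1), I plan to apply the biased coin amplification of Section~\ref{ss:coin}. The ``coins'' are random samples $S$; a $2/3$ fraction are biased (useful), and the tester that checks whether a candidate set $D$ meets the size and density requirements runs in $\tilde{O}(|V|^2)$ time, which is negligible relative to the $2^{O(k)}$ factor of a trial. The biased coin procedure then locates a good $S$ with error $e^{-\Omega(|V|)}$ without substantial slowdown, yielding a Las Vegas algorithm with the claimed runtime.

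For part (2), I would combine the amplified randomized algorithm with Adleman's union bound, applied over sketches rather than over inputs. The sketch of $G$ is a uniform random sample of $n' = O(|V|)$ vertex pairs together with their edge/non-edge labels; its bit length is $\tilde{O}(|V|)$. The oblivious verifier estimates, from the sketch and the randomness $S$, whether the algorithm's output is a set of size $\geq \rho|V|$ whose \emph{sampled} edge density is at least $1 - \varepsilon/2$. A Chernoff argument shows that for any subset of $V$ of size $\geq \rho|V|$, the sampled density is within $\varepsilon/4$ of the true density with probability $1 - 2^{-\Omega(|V|)}$ over the sketch, so the verifier's verdict is effectively a function of the sketch and $S$ alone. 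Amplifying the randomized algorithm below error $2^{-n'}$ via the biased coin technique and union-bounding over all $2^{n'}$ sketches produces a randomness string that works for every input, which is hard-wired into the non-uniform algorithm.

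The main obstacle will be rigorously establishing the obliviousness of the verifier: different graphs $G$ consistent with the same sketch can produce different algorithmic outputs $D_S(G)$, and one must argue that the accept/reject decision nonetheless factors through the sketch. The key to overcoming this is that the existence of a $\rho|V|$-clique makes $G$ dense (at least $\rho^2|V|^2/2$ edges), so uniform sampling of vertex pairs serves as a reliable sparsifier, and the relevant global quantity, namely the edge density on any sufficiently large vertex set, concentrates strongly around its sampled estimator. With care in choosing the verifier's acceptance threshold, the gap between the sampled density and its expectation uniformly over all candidate output sets will be exponentially small, which is what makes the union bound over sketches legitimate.
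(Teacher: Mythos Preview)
There is a genuine gap in both parts.

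For part~(1), the biased coin technique gives a speedup only when the tester is strictly cheaper than a full trial; the point is to spend most of the coin tosses certifying a good coin rather than re-running the algorithm. In your setup the coin is a sample $S$ with bias in $\{0,1\}$, and the only way to ``toss'' it is to run the full trial: enumerate all $2^k$ subsets $T$, compute each $D_T$, and check its density. This is naive repetition, and $\Theta(|V|)$ independent trials are needed for error $e^{-\Omega(|V|)}$, giving $\tilde O(|V|^3 2^k)$ rather than the claimed $\tilde O(|V|^2 2^k)$. The paper's fix is a nontrivial \emph{simulated} coin toss: a coin is a pair $(U,U')$ with $U'\subseteq U$, its bias is the average $f_v$ over the top $\rho|V|$ vertices of $\Gamma(U')$, and a single toss samples $V'\subseteq V$ of size $\tilde O(k)$ and computes $f_v$ only for $v\in V'$, at cost $\tilde O(|V|)$ rather than $\Theta(|V|^2)$. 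This is precisely the ``testing algorithm more efficient than the overall algorithm'' that the biased coin framework requires, and it is what your proposal is missing.

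For part~(2), your verifier is supposed to decide, from the sketch and the randomness alone, whether the algorithm's output has high density; but the output $D_T$ depends on the full graph $G$, not just on the sketch, so the verifier cannot compute it. Your workaround---argue that the sampled density concentrates \emph{uniformly} over all candidate output sets---requires a union bound over $\binom{|V|}{\rho|V|}\approx 2^{H(\rho)|V|}$ sets, while a Chernoff bound from $O(|V|)$ sampled pairs gives failure probability only $e^{-\Theta(\rho^2\varepsilon^2|V|)}$ per set; the union bound does not close for small $\rho,\varepsilon$. The paper's sketch is different in kind: it stores all edges touching a random set $R\subseteq V$ of $\poly(\log|V|,1/\rho,1/\varepsilon)$ vertices, which lets the verifier approximate $f_v$ for every individual $v$ and hence the bias of every coin $U'$. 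Even so, the verifier still cannot determine the algorithm's restart decisions from the sketch, so it \emph{enumerates all possible execution paths} (guessing each restart and tracking the set of graphs consistent with the guesses so far), bounding both the number of paths and the rejection probability along each. This path-enumeration idea is essential and is absent from your proposal.
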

The sketch for approximate clique consists of all the edges that touch a small random set of vertices. We show that such a sketch suffices to estimate the density of the sets considered by the algorithm and the quality of the random samples of the algorithm.

\subsubsection{Free Games} 

A {\em free game} $\mathcal{G}$ is defined by a complete bipartite graph $G=(X,Y,X\times Y)$, a finite alphabet $\Sigma$ and constraints $\pi_e \subseteq \Sigma\times\Sigma$ for all $e\in X\times Y$. For simplicity we assume $\card{X}=\card{Y}$. A labeling to the vertices is given by $f_X:X\to\Sigma$, $f_Y:Y\to\Sigma$. 
The value achieved by $f_X,f_Y$, denoted $val_{f_X,f_Y}(\mathcal{G})$, is the fraction of edges that are satisfied by $f_X$, $f_Y$, where an edge $e=(x,y)\in X\times Y$ is satisfied by $f_X$, $f_Y$ if $(f_X(x),f_Y(y))\in \pi_e$.
The value of the instance, denoted $val(\mathcal{G})$, is the maximum over all labelings $f_X:X\to\Sigma$, $f_Y:Y\to\Sigma$, of $val_{f_X,f_Y}(\mathcal{G})$.
Given a game $\mathcal{G}$ with value $val(\mathcal{G})\geq 1-\varepsilon$, the task is to find a labeling to the vertices $g_X:X\to\Sigma$, $g_Y:Y\to\Sigma$, that satisfies at least $1-O(\varepsilon)$ fraction of the edges.

Free games have been researched in the context of one round two prover games (see~\cite{Feige95errorreduction} and many subsequent works on parallel repetition of free games) and two prover AM~\cite{AIM}.
They unify a large family of problems on dense bipartite graphs obtained by considering different constraints. For instance, for \textsc{Max-2Sat} we have $\Sigma = \set{T,F}$, and $\pi_e$ contains all $(a,b)$ that satisfy $\alpha\vee\beta$ where $\alpha$ is either $a$ or $\neg a$ and $\beta$ is either $b$ or $\neg b$.
Note that on a small fraction of the edges the constraints can be ``always satisfied'', so one can optimize over any dense graph, not just over the complete graph (the density of the graph is crucial: if fewer than $\varepsilon \card{X}\card{Y}$ of the edges have non-trivial constraints, then any labeling satisfies $1-\varepsilon$ fraction of the edges). 

There are randomized algorithms for free games that have constant error probability~\cite{AKK95,AVKK,BMHS11,AIM}, as well as a derandomization that incurs a polynomial slowdown~\cite{AKK95}. In addition, deterministic algorithms for free games of value $1$ are known~\cite{MM-dense}.
We show a randomized algorithm with exponentially small error probability in $\card{X}\card{\Sigma}$ and a non-uniform deterministic algorithm whose running time is similar to that of the randomized algorithms with constant error probability.

\begin{theorem}\label{t:free} The following hold: 
\begin{enumerate}
\item There is a Las Vegas algorithm that given a free game $\mathcal{G}$ with vertex sets $X,Y$, alphabet $\Sigma$, and $val(\mathcal{G})\geq 1-\varepsilon_0$, and given $\varepsilon >0$, finds a labeling to the vertices that satisfies $1-\varepsilon_0-O(\varepsilon)$ fraction of the edges, except with probability exponentially small in $\card{X}\card{\Sigma}$. The algorithm runs in time $\tilde{O}(\card{X}\card{Y} \card{\Sigma}^{O((1/\varepsilon^2)\log(\card{\Sigma}/\varepsilon))})$. 

\item There is a deterministic non-uniform algorithm that given a free game $\mathcal{G}$ with vertex sets $X,Y$, alphabet $\Sigma$, and $val(\mathcal{G})\geq 1-\varepsilon_0$, and given $\varepsilon > 0$, finds a labeling to the vertices that satisfies $1-\varepsilon_0 - O(\varepsilon)$ fraction of the edges. The algorithm runs in time $\tilde{O}(\card{X}\card{Y} \card{\Sigma}^{O((1/\varepsilon^2)\log(\card{\Sigma}/\varepsilon))})$.

\end{enumerate} 
\end{theorem}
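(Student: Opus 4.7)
The proof will follow the same three-layer template used for Theorems~\ref{t:max-cut} and~\ref{t:clique}: a classical sampling algorithm that succeeds with constant probability, the biased-coin amplifier of Section~\ref{ss:coin} to drive the error down without slowdown, and an oblivious verifier based on a short random sketch of the input to produce the non-uniform deterministic version.

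For the randomized core I would use the standard ``sample the $X$-side and take the best response on $Y$'' algorithm. Sample $S\subseteq X$ uniformly of size $s=O(\varepsilon^{-2}\log(\card{\Sigma}/\varepsilon))$ and enumerate all $\card{\Sigma}^{s}$ labelings $f_S\colon S\to\Sigma$; for each, define
$$g_Y(y)\;=\;\argmin_{\sigma\in\Sigma}\;-\,\card{\{x\in S:(f_S(x),\sigma)\in\pi_{(x,y)}\}}$$
and let $g_X$ outside $S$ be the best response to $g_Y$. Fix an optimal labeling $(f_X^{\ast},f_Y^{\ast})$. A Hoeffding bound together with a union bound over $y\in Y$ and $\sigma\in\Sigma$ shows that for our choice of $s$, with constant probability over $S$ the empirical frequencies $\tfrac{1}{s}\card{\{x\in S:(f_X^{\ast}(x),\sigma)\in\pi_{(x,y)}\}}$ are within $\varepsilon$ of the true ones for all $(y,\sigma)$; when that event holds, taking $f_S=f_X^{\ast}|_S$ produces a labeling of value at least $1-\varepsilon_0-O(\varepsilon)$. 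Enumerating all $f_S$, computing $g_Y$ in time $O(\card{Y}\card{\Sigma}s)$ and evaluating each candidate exactly in $O(\card{X}\card{Y})$ time, then returning the best, fits in $\tilde O(\card{X}\card{Y}\,\card{\Sigma}^{O(\varepsilon^{-2}\log(\card{\Sigma}/\varepsilon))})$.

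For the Las Vegas guarantee with error $\exp(-\Omega(\card{X}\card{\Sigma}))$ I would invoke the biased-coin algorithm of Section~\ref{ss:coin}: each coin is an independent choice of $S$, which is biased precisely when $S$ yields a candidate of value $\geq 1-\varepsilon_0-O(\varepsilon)$. A toss consists of estimating the candidate's value on a small random batch of edges, and exact verification on all $\card{X}\card{Y}$ edges is performed only once, when the procedure commits to an output; this keeps the testing cost strictly dominated by the enumeration cost and turns the algorithm into a Las Vegas one.

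For the non-uniform derandomization I would construct an oblivious verifier whose sketch is a uniformly random collection $T$ of edges together with their constraints, with $\card{T}$ chosen so that, by a Hoeffding bound union-bounded over the at most $\card{X}^{O(s)}\card{\Sigma}^{O(s)}$ labelings the enumeration can produce, the empirical value on $T$ is within $\varepsilon$ of the true value for every such labeling. The verifier accepts a randomness string $(S,f_S)$ iff some resulting candidate has sketched value $\geq 1-\varepsilon_0-O(\varepsilon)$, a decision that depends on the input only through the sketch; applying Adleman's argument over the $2^{n'}$ possible sketches and using the amplified error probability below $2^{-n'}$ yields a single advice string that works for every sketch and hence for every input of the given size. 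The step I expect to be the main obstacle is the calibration of $\card{T}$: each constraint naively costs $\card{\Sigma}^{2}$ bits, so the sketch length must be balanced against the amplification budget exponential in $\card{X}\card{\Sigma}$, and the sampling schedule of the biased-coin algorithm must be re-tuned accordingly—paralleling the analogous bookkeeping in the proofs of Theorems~\ref{t:max-cut} and~\ref{t:clique}.
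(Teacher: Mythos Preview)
Your sampling core and your use of the biased-coin amplifier to get error probability $\exp(-\Omega(\card{X}\card{\Sigma}))$ are essentially what the paper does for part~1, modulo the cosmetic difference that the paper treats one choice of $S$ as a \emph{group} of coins indexed by $h:S\to\Sigma$ and tosses by sampling $X'\subseteq X$ rather than a batch of edges. That part is fine.

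The gap is in the derandomization step. Your sketch is a random edge set $T$ together with its constraints, and your verifier is supposed to accept when ``some resulting candidate has sketched value $\ge 1-\varepsilon_0-O(\varepsilon)$''. But the candidate labeling is \emph{not a function of the sketch}: for a given $(S,h)$, the induced label $g_Y(y)$ is determined by the constraints $\pi_{(s,y)}$ for all $s\in S$, and $g_X(x)$ by all $\pi_{(x,y)}$ for $y\in Y$. A uniform edge sample $T$ of the size you propose will, for a typical $y$, contain none of the edges $S\times\{y\}$, so the verifier has no way to reconstruct $g_Y(y)$, let alone evaluate the candidate on $T$. Your Hoeffding/union bound ``over the at most $\card{X}^{O(s)}\card{\Sigma}^{O(s)}$ labelings the enumeration can produce'' is about labelings that the verifier cannot compute from the sketch, so the sentence ``a decision that depends on the input only through the sketch'' is exactly the point that fails.

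The paper's fix is structural: the sketch is the sub-game on $X\times R$ for a small random $R\subseteq Y$. Because the sketch contains \emph{all} constraints with one endpoint in $R$, the verifier can compute $f_{S,h,Y}(y)$ for every $y\in R$ (it has all of $S\times R$), and then for every $x\in X$ the restricted best-response value $g^{R}_{S,h,x}$ (it has $\{x\}\times R$); this yields a sketch-computable estimate $bias_{S,h}^{X,R}$ that is shown to be $\gamma$-close to $bias_{S,h}$ simultaneously for all $(S,h)$. A second subtlety you do not address is that the restart decisions of the biased-coin procedure depend on quantities ($bias_{S,h}^{X'}$) that the sketch only approximates, so the verifier cannot follow a single execution; the paper deals with this by enumerating all feasible restart histories and maintaining a counter of witnessed low-probability events, rejecting only when that counter overflows.
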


The sketch of a free game consists of the restriction of the game to a small random subset of $Y$. We show that the sketch suffices to estimate the value of the labelings considered by the algorithm and the random samples the algorithm makes.

\subsubsection{From List Decoding to Unique Decoding of Reed-Muller Code}

\begin{definition}[Reed-Muller code]
The Reed-Muller code defined by a finite field $\field$ and natural numbers $\dims$ and $\degree$ consists of all $\dims$-variate polynomials of degree at most $\degree$ over $\field$.
\end{definition}

Let $0<\epsilon<\rho<1$. In the list decoding to unique decoding problem for the Reed-Muller code, the input is a function $f:\vecspace\to\field$ and the goal is to output a list of $l=O(1/\epsilon)$ functions $g_1,\ldots,g_l:\vecspace\to\field$, such that for every $\dims$-variate polynomial $p$ of degree at most $\degree$ over $\field$ that agrees with $f$ on at least $\rho$ fraction of the points $x\in\vecspace$, there exists $g_i$ that agrees with $p$ on at least $1-\epsilon$ fraction of the points $x\in\vecspace$. 

There are randomized, self-correction-based, algorithms for this problem (see~\cite{STV} and the references there). There are also deterministic list decoding algorithms for the Reed-Solomon and Reed-Muller codes that can solve this problem: The algorithms of Sudan~\cite{Sudan} and Guruswami-Sudan~\cite{GuSu} run in large polynomial time, as they involve solving a system of linear equations and factorization of polynomials. There are efficient implementations of these algorithms that run in time $\tilde{O}(\card{\vecspace})$ (see~\cite{Alek} and the references there), but they involve deeper algebraic technique. Our contribution is simple, combinatorial, algorithms, randomized and deterministic, with nearly-linear run-time. This application too relies on the biased coin algorithm but does not require sketching.

\begin{theorem}\label{t:RM}
Let $\field$ be a finite field, let $\degree$ and $\dims>3$ be natural numbers and let $0<\rho,\epsilon<1$, such that $\degree\leq \card{\field}/10$, $\epsilon >\sqrt[3]{2/\card{\field}}$ and $\rho>\epsilon + 2\sqrt{\degree/\card{\field}}$. 
There is a randomized algorithm that given $f:\vecspace\to\field$, finds a list of $l=O(1/\rho)$ functions $g_1,\ldots,g_l:\vecspace\to\field$, such that for every $\dims$-variate polynomial $p$ of degree at most $\degree$ over $\field$ that agrees with $f$ on at least $\rho$ fraction of the points $x\in\vecspace$, there exists $g_i$ that agrees with $p$ on at least $1-\epsilon$ fraction of the points $x\in\vecspace$. The algorithm has error probability exponentially small in $\card{\vecspace}\log\card{\field}$ and it runs in time $\tilde{O}(\card{\vecspace}\poly(\card{\field}))$. It implies a deterministic non-uniform algorithm with the same run-time. 



\end{theorem}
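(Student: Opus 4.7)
The plan is to produce each list element via a random ``pivot'' $y\in\vecspace$ and a self-correction of $f$ on lines through $y$, and to boost the success probability to $\exp(-\Omega(\card{\vecspace}\log\card{\field}))$ using the biased coin algorithm of Section~\ref{ss:coin} without any multiplicative slowdown. For a pivot $y$ with $v=f(y)$, define a candidate $g_y$ by: for each $x\neq y$ take the affine line $\ell$ through $y,x$, apply a univariate list decoder to $f|_\ell$, keep the unique polynomial $q$ of degree $\leq\degree$ in the decoded list with $q(y)=v$, and set $g_y(x)=q(x)$. Each $g_y$ is tabulated at all $\card{\vecspace}$ points in $\tilde O(\card{\vecspace}\poly(\card{\field}))$ time.

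Call $y$ \emph{good for} a close polynomial $p$ if $p(y)=f(y)$ and for at least $1-\epsilon$ fraction of $x$ the univariate decoder on $f|_\ell$ returns $p|_\ell$ as the unique polynomial through $(y,f(y))$. The $\card{\field}-1$ non-$y$ points on a line through $y$ are pairwise independent, so by Chebyshev concentration combined with the hypotheses $\rho>\epsilon+2\sqrt{\degree/\card{\field}}$ and $\degree\leq\card{\field}/10$, an $\Omega(\rho)$ fraction of $y$'s are good for $p$, and in that case $g_y$ is $(1-\epsilon)$-close to $p$. The Johnson bound ensures at most $O(1/\rho)$ close polynomials exist, and by Schwartz--Zippel on $p-p'$ the good-pivot sets for distinct $p\neq p'$ overlap on only a $\degree/\card{\field}$ fraction, so the per-polynomial coverages are essentially disjoint.

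To produce the list, iteratively apply the biased coin procedure: at each of $l=O(1/\rho)$ iterations, the coins are pivots $y\in\vecspace$; a coin is biased if $g_y$ is $(1-\epsilon)$-close to a close polynomial not yet represented in the list; a toss samples a random $x\in\vecspace$, queries $f(x)$ and $g_y(x)$, and outputs heads if $g_y(x)=f(x)$ while $g_y$ also appears distinct (on a few random samples) from the candidates already in the list. By the analysis above, an $\Omega(\rho)$ fraction of coins remains biased at every iteration until all close polynomials are covered, and the biased coin algorithm (in its generalization to smaller biased fractions) finds a biased coin with error $\exp(-\Omega(k))$ in $\tilde O(k/\rho)$ tosses. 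Setting $k=\Theta(\card{\vecspace}\log\card{\field})$ and union-bounding over the $O(1/\rho)$ iterations drives the total error below $\exp(-\Omega(\card{\vecspace}\log\card{\field}))$; each toss costs $\tilde O(\poly(\card{\field}))$, and tabulating the final $l$ candidates costs $\tilde O(\card{\vecspace}\poly(\card{\field}))$, matching the claimed run-time.

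The main obstacle is engineering the coin test so that biased and non-biased coins genuinely separate by a constant gap, because a pivot $y$ that is not good for any close polynomial may still yield $g_y$ that coincidentally agrees with $f$ on many points. Addressing this calls for combining an agreement-with-$f$ check with a light low-degree test on $g_y$ and a distinctness check against candidates already in the list, all while keeping per-toss cost $\tilde O(\poly(\card{\field}))$. Once the coin problem is instantiated, the non-uniform deterministic algorithm follows by Adleman's union bound: since amplification drives the error below $\card{\field}^{-\card{\vecspace}}$ (the number of possible inputs of this size), some fixed random string works for all inputs and can be hard-wired as advice.
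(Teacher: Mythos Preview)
Your route diverges from the paper's and leaves the key step --- the coin test --- unresolved.

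The paper uses a line--plane hierarchy rather than your point--line one. A coin is a random line $\ell$ together with the full decoded list $p^{(1)}_\ell,\ldots,p^{(k)}_\ell$ of univariate polynomials $(\rho-\epsilon)$-close to $f|_\ell$; a toss picks $z\in\vecspace$, list-decodes $f$ on the plane spanned by $\ell$ and $z$, and returns heads iff every $p^{(i)}_\ell$ has a unique extension among the plane's decoded polynomials. The bias is thus exactly the fraction of $z$ at which interpolation succeeds simultaneously for all $i$, so a coin of bias $\geq 1-\epsilon$ \emph{directly certifies} that the interpolated $g_1,\ldots,g_k$ cover every close polynomial to within $\epsilon$. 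One good line recovers the entire output list at once; there is no iteration, no distinctness filtering, and no proxy test. The uniqueness step works because two distinct degree-$\degree$ bivariate polynomials agreeing on an entire line is a Schwartz--Zippel event of probability $\leq \degree/\card{\field}$ (Lemma~\ref{l:RM-uniqueness}), a much stronger constraint than the single-point collision your point--line scheme would need to control.

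In your scheme the property you want (``$g_y$ is $(1-\epsilon)$-close to some $\rho$-close polynomial'') and the quantity your toss measures ($\Pr_x[g_y(x)=f(x)]$) are not the same thing, and there is no gap between them. For a pivot $y$ good for $p$, the agreement of $g_y$ with $f$ is only about $\rho$ (since $g_y\approx p$ and $p$ agrees with $f$ on merely a $\rho$ fraction), not near $1$; and a pivot good for no close polynomial can still produce $g_y$ with agreement $\approx\rho$ with $f$. You correctly identify this as ``the main obstacle'' and propose layering on a low-degree test and distinctness checks, but you do not carry out the construction or its analysis --- and certifying that $g_y$ is near a low-degree polynomial which is itself $\rho$-close to $f$ is essentially the problem you started with. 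Until that test is actually specified and shown to produce the required constant gap between good and bad coins, the biased-coin machinery cannot be invoked and the argument is incomplete.
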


Note that the standard choice of parameters for the Reed-Muller code has $\card{\field} = \poly\log\card{\vecspace}$, and in this case our algorithms run in nearly linear time $\tilde{O}(\card{\vecspace})$. 






\subsection{Previous Work} 

The biased coin problem introduced in Sub-section~\ref{ss:coin} is related to the stochastic multi-armed bandit problem studied in~\cite{EMM,MT}, however, in the latter there might be only one biased coin, whereas in our problem we are guaranteed that a constant fraction of the coins are biased. This makes a big difference in the algorithms one would consider for each problem and in their performance. In the setup considered by~\cite{EMM,MT} one has to toss all coins, and the algorithms focus on which coins to eliminate. In our setup it is likely that we find a biased coin quickly, and the focus is on certifying bias. In~\cite{EMM,MT} an $\Omega(k^2)$ lower bound is proved for the number of coin tosses needed to find a biased coin with probability $1-e^{-\Omega(k)}$, whereas we present an $\tilde{O}(k)$ upper bound for the case of a constant fraction of biased coins.   

The connection that we make between derandomization and sketching adds to a long list of connections that have been identified over the years between derandomization, compression, learning and circuit lower bounds, e.g., circuit lower bounds can be used for pseudorandom generators and derandomization~\cite{IW};
learning goes hand in hand with compression, and can be used to prove circuit lower bounds~\cite{FoKla};
simplification under random restrictions can be used to prove circuit lower bounds~\cite{Subbo} and construct pseudorandom generators~\cite{IMZ}. Sparsification of the distinguisher of a pseudorandom generator (e.g., for simple distinguishers like DNFs) can lead to more efficient pseudorandom generators and derandomizations~\cite{GMR}. 
Our connection differs from all those connections.
In particular, previous connections are based on pseudorandom generators, while our approach is dual and focuses on shrinking the number of inputs.
 
The idea of saving in a union bound by only considering representatives is an old idea with countless appearances in math and theoretical computer science, including derandomization (one example comes from the notion of an $\varepsilon$-net and its many uses; another example is~\cite{GMR} we mentioned above). Our contribution is in the formulation of an oblivious verifier and in designing sketches and oblivious verifiers.

Our applications have Atlantic City algorithms that run in sub-linear time and have a constant error probability.
There are works that aim to derandomize sub-linear time algorithms. Most notably, as mentioned before, there is a deterministic version of the Frieze-Kannan regularity lemma~\cite{Szeme,FrKa,DKMRS,DKMRS0,ADLRY}, which is relevant to some of our applications but not to others. Another work is~\cite{Zimand} that generates deterministic {\em average case} algorithms for decision problems with certain sub-linear run time while incurring a slowdown.

\section{Preliminaries}

\subsection{Conventions and Inequalities}

\begin{lemma}[Chernoff bounds]
Let $X_1,\ldots,X_n$ be i.i.d random variables taking values in $\set{0,1}$. Let $\varepsilon >0$. Then,
$$\Prob{}{\frac{1}{n}\sum X_i \geq \frac{1}{n}\sum \Expc{}{X_i} + \varepsilon} \leq e^{-2\varepsilon^2 n},\;\;\Prob{}{\frac{1}{n}\sum X_i \leq \frac{1}{n}\sum \Expc{}{X_i} - \varepsilon} \leq e^{-2\varepsilon^2 n}.$$
The same inequalities hold for random variables taking values in $[0,1]$ (Hoeffding bound). The multplicative version of the Chernoff bound states:
$$\Prob{}{\sum X_i \geq (1+\varepsilon)\cdot\sum \Expc{}{X_i}} \leq e^{-\varepsilon^2 \sum \Expc{}{X_i}/3},\;\;\Prob{}{\sum X_i \leq (1 - \varepsilon)\cdot\sum \Expc{}{X_i}} \leq e^{-\varepsilon^2 \sum \Expc{}{X_i}/2}.$$
\end{lemma}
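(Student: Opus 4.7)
The plan is to use the standard exponential moment (Chernoff) method. For any $t>0$ and any threshold $a$,
\[ \Prob{}{\textstyle\sum X_i \geq a} = \Prob{}{e^{t\sum X_i} \geq e^{ta}} \leq e^{-ta}\Expc{}{e^{t\sum X_i}} = e^{-ta}\prod_{i=1}^n \Expc{}{e^{tX_i}}, \]
by Markov's inequality and independence. The whole proof then reduces to bounding the moment generating function $\Expc{}{e^{tX_i}}$ and optimizing $t$. The lower-tail statements follow by the same argument applied to $-X_i$ (or equivalently to $1-X_i$), so I would only write out the upper tails in full.

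First I would prove the additive / Hoeffding version. The key lemma is: for any random variable $X$ taking values in $[0,1]$ with mean $\mu$, one has $\Expc{}{e^{t(X-\mu)}} \leq e^{t^2/8}$. I would derive this in the usual way: by convexity of $x\mapsto e^{tx}$ on $[0,1]$, $e^{tx}\leq xe^t + (1-x)$, so $\Expc{}{e^{tX}}\leq \mu e^t + (1-\mu)$; then I would write the logarithm of $e^{-t\mu}(\mu e^t + 1-\mu)$ as a function $\varphi(t)$, check that $\varphi(0)=\varphi'(0)=0$, and show $\varphi''(t)\leq 1/4$ by noting it has the form $p(1-p)$ for some $p\in[0,1]$. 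Taylor's theorem then gives $\varphi(t)\leq t^2/8$. Plugging this into the Chernoff inequality yields $\Prob{}{\sum X_i \geq \sum\Expc{}{X_i}+n\varepsilon}\leq e^{-tn\varepsilon + nt^2/8}$, and choosing $t=4\varepsilon$ gives the stated bound $e^{-2\varepsilon^2 n}$. This argument works verbatim for $\set{0,1}$-valued variables and for $[0,1]$-valued variables, covering both claims.

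Next I would prove the multiplicative version for $\set{0,1}$-valued $X_i$ with $p_i=\Expc{}{X_i}$ and $\mu=\sum p_i$. Here $\Expc{}{e^{tX_i}} = 1 + p_i(e^t-1) \leq e^{p_i(e^t-1)}$ using $1+x\leq e^x$, so $\prod_i \Expc{}{e^{tX_i}} \leq e^{\mu(e^t-1)}$. For the upper tail with $t=\ln(1+\varepsilon)>0$ this gives
\[ \Prob{}{\textstyle\sum X_i \geq (1+\varepsilon)\mu} \leq \left(\frac{e^{\varepsilon}}{(1+\varepsilon)^{1+\varepsilon}}\right)^{\mu}, \]
and the numeric inequality $(1+\varepsilon)\ln(1+\varepsilon)-\varepsilon \geq \varepsilon^2/3$ for $0<\varepsilon\leq 1$ (checked by comparing Taylor series, or by verifying the derivatives at $0$) finishes the upper tail as $e^{-\varepsilon^2\mu/3}$. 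The lower tail uses $t=-\ln(1-\varepsilon)<0$ in the analogous one-sided version and the stronger inequality $(1-\varepsilon)\ln(1-\varepsilon)+\varepsilon \geq \varepsilon^2/2$, giving $e^{-\varepsilon^2\mu/2}$.

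The main obstacle is really just the two numerical inequalities at the end of the multiplicative case; they are standard but the constants $1/3$ and $1/2$ are slightly delicate and must be verified on the full range $\varepsilon\in(0,1]$ rather than only infinitesimally. Everything else (Markov's inequality, independence factorization, Hoeffding's lemma via convexity, optimizing a quadratic) is routine, and the symmetric arguments for the lower tails require no new ideas.
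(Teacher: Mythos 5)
Your proof is correct and is the standard exponential-moment argument; the paper itself offers no proof of this lemma, stating it as a standard preliminary fact, so there is nothing to deviate from. One small point worth keeping: your restriction to $0<\varepsilon\leq 1$ when verifying $(1+\varepsilon)\ln(1+\varepsilon)-\varepsilon\geq\varepsilon^2/3$ is genuinely needed (the inequality, and hence the multiplicative upper-tail bound as written, fails for $\varepsilon$ somewhat larger than $1$, e.g.\ $\varepsilon=2$), whereas the paper's statement says only $\varepsilon>0$; this is harmless for the paper's applications, where the multiplicative bound is used with small deviations, but your more careful bookkeeping is the right way to state it.
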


When we say that a quantity is {\em exponentially small in $k$} we mean that it is of the form $2^{-\Omega(k)}$. We use $\exp(-n)$ to mean $e^{-n}$.

\subsection{Non-Uniform and Randomized Algorithms}

\begin{definition}[Non-uniform algorithm]
A non-uniform algorithm that runs in time $t(n)$ is given by a sequence $\set{C_n}$ of Boolean circuits, where for every $n\geq 1$, the circuit $C_n$ gets an input of size $n$ and satisfies $\card{C_n}\leq t(n)$.

Alternatively, 
a non-uniform algorithm that runs in time $t(n)$ on input of size $n$ is given an advice string $a=a(n)$ of size at most $t(n)$ (note that $a(n)$ depends on $n$ but not on the input!). The algorithm runs in time $t(n)$ given the input and the advice.
\end{definition}

The class of all languages that have non-uniform polynomial time algorithms is called P$/\poly$.

There are two main types of randomized algorithms: the strongest are Las Vegas algorithms that may not return a correct output with small probability, but would report their failure. Atlantic City algorithms simply return an incorrect output a small fraction of the time.

\begin{definition}[Las Vegas algorithm]
A {\em Las Vegas} algorithm that runs in time $t(n)$ on input of size $n$ is a randomized algorithm that always runs in time at most $t(n)$, but may, with a small probability return $\bot$. In any other case, the algorithm returns a correct output.
\end{definition}
The probability that a Las Vegas algorithm returns $\bot$ is called its {\em error probability}. In any other case we say that the algorithm succeeds. 
\begin{definition}[Atlantic City algorithm]
An {\em Atlantic City} algorithm that runs in time $t(n)$ on input of size $n$ is a randomized algorithm that always runs in time at most $t(n)$, but may, with a small probability, return an incorrect output. 
\end{definition}
The probability that an Atlantic City algorithm returns an incorrect output is called its {\em error probability}. In any other case we say that the algorithm succeeds.

Note that a Las Vegas algorithm is a special case of Atlantic City algorithms. Atlantic City algorithms that solve decision problems return the same output the majority of the time. For search problems we have the following notion:

\begin{definition}[Pseudo-deterministic algorithm, \cite{GatG}]
A {\em Pseudo-deterministic} algorithm is an Atlantic City algorithm that returns the same output except with a small probability, called its {\em error probability}.
\end{definition}

\section{Derandomization by Oblivious Verification}\label{s:method}

In this section we develop a technique for converting randomized algorithms to deterministic non-uniform algorithms. The derandomization technique is based on the notion of ``oblivious verifiers'', which are verifiers that deterministically test the randomness of an algorithm while accessing only a sketch (compressed version) of the input to the algorithm.
If the verifier accepts, the algorithm necessarily succeeds on the input and the randomness. In contrast, the verifier is allowed to reject randomness strings on which the randomized algorithm works correctly, as long as it does not do so for too many randomness strings. \begin{definition}[Oblivious verifier] Suppose that $A$ is a randomized algorithm that on input $x\in\set{0,1}^N$ uses $p(N)$ random bits. Let $s:\N\to\N$ and $\varepsilon:\N\to [0,1]$.
An $(s,\varepsilon)$-{\em oblivious verifier} for $A$ is a deterministic procedure that gets as input $N$, a sketch $\hat{x}\in\set{0,1}^{s(N)}$ and $r\in\set{0,1}^{p(N)}$, either accepts or rejects, and satisfies the following:
\begin{itemize}
\item Every $x\in\set{0,1}^N$ has a sketch $\hat{x}\in\set{0,1}^{s(N)}$.
\item For every $x\in\set{0,1}^N$ and its sketch $\hat{x}\in\set{0,1}^{s(N)}$, for every $r\in\set{0,1}^{p(N)}$, if the verifier accepts on input $\hat{x}$ and $r$, then $A$ succeeds on $x$ and $r$.
\item For every $x\in\set{0,1}^N$ and its sketch $\hat{x}\in\set{0,1}^{s(N)}$,  the probability over $r\in\set{0,1}^{p(N)}$ that the verifier rejects is at most $\varepsilon(N)$.
\end{itemize}
\end{definition} 
Note that $\varepsilon$ of the oblivious verifier may be somewhat larger than the error probability of the algorithm $A$, but hopefully not much larger.
We do not limit the run-time of the verifier, but the verifier has to be deterministic. Indeed, the oblivious verifiers we design run in deterministic exponential time. We do not limit the time for computing the sketch $\hat{x}$ from the input $x$ either. Indeed, we use the probabilistic method in the design of our sketches. Crucially, the sketch depends on the input $x$, but is independent of $r$.


Our derandomization theorem shows how to transform a randomized algorithm with an oblivious verifier into a deterministic (non-uniform) algorithm whose run-time is not much larger than the run-time of the randomized algorithm. Its idea is as follows. An oblivious verifier allows us to partition the inputs so inputs with the same sketch are bundled together, and the number of inputs effectively shrinks. This allows us to apply a union bound, just like in Adleman's proof~\cite{Adleman}, but over many fewer inputs, to argue that there must exist a randomness string for (a suitable repetition of) the randomized algorithm that works for all inputs. 


\begin{theorem}[Derandomizing by verifying from a sketch]\label{t:derandomization} For every $t\geq 1$, if a problem has a Las Vegas algorithm that runs in time $T$ and a corresponding $(s,\varepsilon)$-oblivious verifier for $\varepsilon< 2^{-s/t}$, then the problem has a non-uniform deterministic algorithm that runs in time $T\cdot t$ and always outputs the correct answer.
\end{theorem}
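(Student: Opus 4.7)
The plan is to follow Adleman's non-uniform derandomization strategy, but apply the union bound over sketches rather than over inputs, which is where the savings come from. First I would define a repeated algorithm $A'$ which, on input $x\in\set{0,1}^N$ and randomness $r=(r_1,\dots,r_t)\in(\set{0,1}^{p(N)})^t$, simply runs $A$ on $x$ with each $r_i$ in turn and outputs the first non-$\bot$ answer (returning $\bot$ only if all $t$ runs fail). Clearly $A'$ has run-time $T\cdot t$.

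Next I would analyze failure via the oblivious verifier. For any fixed $x$ with sketch $\hat{x}$, the oblivious-verifier property guarantees $\Pr_{r_i}[\text{verifier rejects }(\hat{x},r_i)]\leq \varepsilon$; moreover, if the verifier accepts $(\hat{x},r_i)$ then $A$ succeeds on $(x,r_i)$, so in that case $A'$ definitely returns a correct answer. Therefore $A'$ fails on $(x,r)$ only if the verifier rejects all $t$ strings $(\hat{x},r_i)$, which by independence occurs with probability at most $\varepsilon^t$.

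The crucial step is the union bound. Since the verifier depends on $x$ only through its sketch $\hat{x}\in\set{0,1}^{s(N)}$, the event ``verifier rejects $(\hat{x},r_1),\ldots,(\hat{x},r_t)$'' is determined by $\hat{x}$, not by $x$. Hence
\[
\Prob{r}{\exists\,\hat{x}\in\set{0,1}^{s(N)}\text{ such that all }t\text{ verifications reject}}\;\leq\;2^{s(N)}\cdot\varepsilon^t\;<\;2^{s(N)}\cdot 2^{-s(N)}\;=\;1,
\]
using the hypothesis $\varepsilon<2^{-s/t}$. By the probabilistic method, there exists $r^*=(r_1^*,\dots,r_t^*)$ such that for every sketch $\hat{x}$ (and hence every input $x\in\set{0,1}^N$), at least one $r_i^*$ is accepted by the verifier on $\hat{x}$, which in turn guarantees $A$ succeeds on $(x,r_i^*)$.

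Finally, I would hard-wire such a string $r^*$ as advice for inputs of length $N$. On input $x$, the non-uniform deterministic algorithm simply runs $A(x,r_1^*),\dots,A(x,r_t^*)$ and outputs the first non-$\bot$ result, which is necessarily a correct answer by the Las Vegas property of $A$. This costs $T\cdot t$ time per input. The subtle point worth emphasizing—and the only real conceptual obstacle—is that the deterministic algorithm never actually needs to compute $\hat{x}$ or simulate the verifier; both may be uncomputable or prohibitively expensive. They are used purely as an analytical device to certify existence of a good $r^*$ via a union bound over the small sketch space rather than over all $2^N$ inputs.
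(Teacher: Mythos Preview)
Your proposal is correct and follows essentially the same argument as the paper's own proof: repeat the Las Vegas algorithm $t$ times, bound the probability that the verifier rejects all $t$ runs by $\varepsilon^t<2^{-s}$, take a union bound over the $2^s$ sketches, and hard-wire a surviving randomness string as non-uniform advice. Your write-up is in fact more explicit than the paper's (e.g., spelling out that the verifier and sketch are never computed at run time), but the underlying idea is identical.
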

\begin{proof} Consider the randomized algorithm that runs the given randomized algorithm on its input for $t$ times independently, and succeeds if any of the runs succeeds. Its run-time is $T\cdot t$.
For any input, the probability that the oblivious verifier rejects all of the $t$ runs is less than $(2^{-s/t})^t= 2^{-s}$.
By a union bound over the $2^s$ possible sketches, the probability that the oblivious verifier rejects for any of the sketches is less than $2^s\cdot 2^{-s}=1$. Hence, there exists a randomness string that the oblivious verifier accepts no matter what the sketch is. On this randomness string the algorithm has to be correct no matter what the input is. 
The deterministic non-uniform algorithm invokes the repeated randomized algorithm on this randomness string.
\end{proof}
Adleman's theorem can be seen as a special case of Theorem~\ref{t:derandomization}, in which the sketch size is the trivial $s(N)=N$, the oblivious verifier runs the algorithm on the input and randomness and accepts if the algorithm succeeds, and the randomized algorithm has error probability $\varepsilon < 2^{-N/t}$.

The reason that we require that the algorithm is a Las Vegas algorithm in Theorem~\ref{t:derandomization} is that it allows us to repeat the algorithm and combine the answers from all invocations. Combining is possible by other means as well. E.g., for randomized algorithms that solve decision problems or for pseudo-deterministic algorithms (algorithms that typically return the same answer) one can combine by taking majority. For algorithms that return a list, one can combine the lists.

The derandomization technique assumes that the error probability of the algorithm is sufficiently low. To complement it, in Section~\ref{s:biased} we develop an amplification technique to decrease the error probability. Interestingly, our applications are such that the error probability can be decreased without a substantial slowdown to a point at which our derandomization technique kicks in, but we do not know how to decrease the error probability sufficiently for Adleman's original proof to work without slowing down the algorithm significantly.

\section{Amplification by Finding a Biased Coin}\label{s:biased}

In this section we develop a technique that will allow us to significantly decrease the error probability of randomized algorithms without substantially slowing down the algorithms. The technique works by testing the random choices made by the algorithm and quickly discarding undesirable choices. It requires the ability to quickly estimate the desirability of random choices.
The technique is based on a solution to the following problem.

\begin{definition}[Biased coin problem] Let $0<\eta,\zeta<1$.
In the biased coin problem one has a source of coins. Each coin has a bias, which is the probability that the coin falls on ``heads''. 
The bias of a coin is unknown, and one can only toss coins and observe the outcome. 
It is known that at least $2/3$ fraction\begin{footnote}{$2/3$ can be replaced with any constant larger than $0$.}\end{footnote} of the coins have bias at least $1-\eta$.
Given $n\geq 1$, the task is to find a coin of bias at least $1-\eta - \zeta$ with probability at least $1-\exp(-n)$ using as few coin tosses as possible.
\end{definition}
A similar problem was studied in the setup of multi-armed bandit problems~\cite{EMM,MT}, however in that setup there might be only one coin with large bias, as opposed to a constant fraction of coins as in our setup. In the former setup, many more coin tosses might be needed (an $\Omega(n^2/\zeta^2)$ lower bound is proved in~\cite{MT}).

The analogy between the biased coin problem and amplification is as follows: a coin corresponds to a random choice of the algorithm. Its bias corresponds to how desirable the random choice is. The assumption is that a constant fraction of the random choices are very desirable. The task is to find one desirable random choice with a very high probability. Tossing a coin corresponds to testing the random choice. The coin falls on heads in proportion to the quality of the random choice.   

Interestingly, if we knew that all coins have bias either at least $1-\eta$ or at most $1-\eta-\zeta$, it would have been possible to solve the biased coin problem using only $O(n/\zeta^2)$ coin tosses. The algorithm is described in Figure~\ref{alg:coin0}. It tosses a random coin a small number of times and expects to witness about $1-\eta$ fraction heads. If so, it doubles the number of tosses, and tries again, until its confidence in the bias is sufficiently large. If the fraction of heads is too small, it restarts with a new coin. The algorithm has two parameters $i_0$ that determines the initial number of tosses and $i_f$ that determines the final number of tosses. 

The probability that the algorithm restarts at the $i$'th phase is exponentially small in $\zeta^2 k$ for $k=2^{i}$: either the coin had bias at least $1-\eta$, and then there's an exponentially small probability in $\zeta^2 k$ that there were less than $(1-\eta - \zeta/2) k$ heads, or the coin had bias at most $1-\eta-\zeta$, and then there is probability exponentially small in $\zeta^2 k$ that the coin had at least $1-\eta-\zeta/2$ fraction heads in all the previous phases (whereas if this is phase $i=i_0$, then the probability that a coin with bias less than $1-\eta$ was picked in this case is constant, i.e., exponentially small in $\zeta^2 k$). Moreover, the number of coin tosses up to this step is at most $2k$. Hence, we maintain a linear relation (up to $\zeta^2$ factor) between the number of coin tosses and the exponent of the probability. To get the error probability down to $\exp(-n)$ we only need $O(n/\zeta^2)$ coin tosses.

\begin{algbox}{6.5in}{coin0}{An algorithm for finding a coin of bias at least $1-\eta-\zeta$ when all the coins either have bias at least $1-\eta$ or at most $1-\eta-\zeta$. The algorithm uses $O(n/\zeta^2)$ coin tosses and achieves error probability $\exp(-n)$.}
	\Procname{\textsc{Find-Biased-Coin-Given-Gap}$(n,\eta,\zeta)$}
	\li Set $i_0= \log(1/\zeta^2)+ \Theta(1)$; $i_f= \log(n/\zeta^2)+\Theta(1)$ (constants picked appropriately).
	\li Pick a coin at random.
	\li \For $i=i_0,i_0 + 1,\ldots,i_f$ \Do
		\li Toss the coin for $k=2^i$ times.
		\li If the fraction of heads is less than $1-\eta-\zeta/2$, restart.
	\End
	\li \Return coin.
\end{algbox}

Counter-intuitively, adding coins of bias between $1-\eta-\zeta$ and $1-\eta$ -- all acceptable outcomes of the algorithm -- derails the algorithm we outlined above, as well as other algorithms.
If one fixes a threshold like $1-\eta-\zeta/2$ for the fraction of heads one expects to witness, there might be a coin whose bias is around the threshold. We might toss this coin a lot and then decide to restart with a new coin. One can also consider a competition-style algorithm like the ones studied in~\cite{EMM,MT} when one tries several coins each time, keeping the ones that fall on heads most often. 
Such a algorithm may require $\Omega(n^2/\zeta^2)$ coin tosses, since coins can lose any short competition to coins with slightly smaller bias; then, such coins can lose to coins with slightly smaller bias, and so on, until we may end up with a coin of bias smaller than $1-\eta-\zeta$.    

There is, however, a algorithm that uses only $\tilde{O}(n/\zeta^2)$ coin tosses. This algorithm decreases the threshold for the fraction of heads one expects to witness with respect to the number of coin tosses one already made for this coin. If the coin was already tossed a lot, the deviation of the number of heads from $1-\eta$ would have to be large for us to decide to restart with a new coin. 
The algorithm is described in Figure~\ref{alg:coin}.

\begin{algbox}{6.5in}{coin}{An algorithm for finding a coin of bias at least $1-\eta-\zeta$ using $\tilde{O}(n/\zeta^2)$ coin tosses. The error probability is exponentially small in $n$.}
	\Procname{\textsc{Find-Biased-Coin}$(n,\eta,\zeta)$}
	\li Set $i_0= \log(1/\zeta^2)+\Theta(1); \;\; i_f= \log(n/\zeta^2)+\Theta(\log\log(n/\zeta));\;\; 
\beta = \zeta/i_f$.
	\li Pick a coin at random.
	\li \For $i=i_0,i_0+1,\ldots,i_f$ \Do
		\li Toss the coin for $k=2^i$ times.
		\li If the fraction of heads is less than $1-\eta - i\beta$, restart.
	\End
	\li \Return coin.
\end{algbox}

Note that the deviation parameter $\beta$ is picked so $1-\eta-i\beta\geq 1-\eta-\zeta$ for all $i\leq i_f$.

\begin{lemma} Within $O((n/\zeta^2)\log^2 (n/\zeta)) = \tilde{O}(n/\zeta^2)$ coin tosses, \textsc{Find-Biased-Coin} outputs a coin of bias at least $1-\eta-\zeta$ except with probability $\exp(-n)$.
\end{lemma}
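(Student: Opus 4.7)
My plan is to prove the lemma in two parts: correctness (the returned coin has bias at least $1-\eta-\zeta$ except with probability $\exp(-n)$) and running time (the total number of tosses is $\tilde{O}(n/\zeta^2)$ except with probability $\exp(-n)$).

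For correctness, the argument localizes at phase $i_f$, where $k = 2^{i_f} = \Theta((n/\zeta^2)\log^2(n/\zeta))$ and the threshold is $1-\eta-i_f\beta = 1-\eta-\zeta$. A coin of bias $p \leq 1-\eta-\zeta-\Delta$ passes this test only if its empirical fraction of heads exceeds its mean by at least $\Delta$, an event of probability at most $\exp(-2\Delta^2 \cdot 2^{i_f})$ by Hoeffding. This is $\exp(-n)$ whenever $\Delta = \Omega(\zeta/\poly\log(n/\zeta))$, and the $\Theta(\log\log(n/\zeta))$ slack in the definition of $i_f$ is exactly what affords this margin (interpreting the guarantee as bias $\geq 1-\eta-(1+o(1))\zeta$, which is absorbed into the stated $\zeta$).

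For the running time, I would first bound the per-trial success probability $q$. A random coin is good (bias $\geq 1-\eta$) with probability $\geq 2/3$, and a good coin passes phase $i$ with probability at least $\max\bigl(\tfrac{1}{2},\; 1 - \exp(-2(i\beta)^2 2^i)\bigr)$ by Hoeffding---the $1/2$ bound being trivial since $1-\eta-i\beta$ lies at or below the mean $1-\eta$. There is a cutoff $i^{*} = i_0 + O(\log\log(n/\zeta))$ above which the Hoeffding bound dominates $1/2$ and $\sum_{i\geq i^{*}}\exp(-2(i\beta)^2 2^i) = O(1)$; multiplying the per-phase bounds over all phases gives $q \geq \Omega(1/\poly\log(n/\zeta))$. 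A geometric-tail bound then says with probability $\geq 1-\exp(-n)$ the algorithm makes at most $T = O(n \cdot \poly\log(n/\zeta))$ trials. Second, I would write the total toss count as $\sum_{i=i_0}^{i_f} N_i \cdot 2^i$, where $N_i$ is the number of trials entering phase $i$, and control each $N_i$ by noting that it is stochastically dominated by $\mathrm{Bin}(T, r_i)$, where $r_i$ is the probability that a random coin reaches phase $i$. For bad coins (bias $<1-\eta-\zeta$) one has $r_i \leq \exp(-\Omega((\zeta - i\beta)^2 2^{i-1}))$, which decays super-geometrically in $i$, and only the single successful trial reaches phase $i_f$. A multiplicative Chernoff per phase, union-bounded over the $O(\log(n/\zeta))$ phases, would give $N_i \cdot 2^i = \tilde{O}(n/\zeta^2)$ simultaneously for every $i$ with probability $1-\exp(-n)$.

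The main obstacle will be the toss-count concentration: getting an $\exp(-n)$ tail when individual trial costs vary from $\Theta(1/\zeta^2)$ up to $\tilde{\Theta}(n/\zeta^2)$ for the single successful trial. Applying Bernstein or Hoeffding directly to the sum of per-trial costs yields only an $\exp(-\poly\log n)$ tail. The remedy is the phase-by-phase decomposition above: the successful trial contributes $\leq 2^{i_f+1} = \tilde{O}(n/\zeta^2)$ deterministically, and for each intermediate phase a multiplicative Chernoff on $\mathrm{Bin}(T, r_i)$ is applied at a threshold scaled by $2^i$, so failure probabilities are $\exp(-n)$ per phase and the per-phase contributions form a geometric-like series rather than costing an extra $\log$ factor. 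The delicate bookkeeping is verifying that the per-phase thresholds, once multiplied by $2^i$ and summed over all phases, stay within the $\tilde{O}(n/\zeta^2)$ budget even after the $O(\log(n/\zeta))$ union bound.
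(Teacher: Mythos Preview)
Your running-time argument has a real gap. You dominate $N_i$ by $\mathrm{Bin}(T,r_i)$ with $T=\Theta(n\cdot\poly\log(n/\zeta))$ and then bound $r_i$ only for bad coins. But $r_i$ is the probability that a \emph{random} trial reaches phase $i$, and a good coin (bias $\ge 1-\eta$, at least $2/3$ of all coins) passes phases $i_0,\dots,i-1$ with probability at least your own lower bound $q=\Omega(1/\poly\log(n/\zeta))$. Hence $r_i\ge (2/3)q$ for \emph{every} $i$, so $\mathrm{Bin}(T,r_i)$ has mean $\Omega(n)$, and your bound on $N_i\cdot 2^i$ becomes $\tilde{\Omega}(n^2/\zeta^2)$ for $i$ near $i_f$. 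The clause ``only the single successful trial reaches phase $i_f$'' does not rescue intermediate $i$, and it is in any case false for the $T$ hypothetical independent trials you compare against---a constant fraction of those would themselves output. (Separately, the exponent $(\zeta-i\beta)^2 2^{i-1}$ does not ``decay super-geometrically in $i$'': since $i_f\beta=\zeta$, it vanishes as $i\to i_f$.)

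The paper's argument is both different and simpler. Instead of the probability a trial \emph{reaches} phase $i$, it bounds the probability a trial \emph{restarts at} phase $j$, and the key point is that this is at most $\exp(-\Omega(\beta^2 2^j))$ \emph{uniformly in the coin's bias}: if the bias is at least $1-\eta-(j-\tfrac12)\beta$ then failing the phase-$j$ test is that unlikely, while otherwise having passed the phase-$(j{-}1)$ test was already that unlikely. This dichotomy---precisely what the sliding threshold buys---makes the ratio of tosses spent per restart to log-probability of that restart uniformly $O(1/\beta^2)$, so accumulating error $\exp(-n)$ costs $O(n/\beta^2)=\tilde{O}(n/\zeta^2)$ tosses. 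Your phase-by-phase scheme can be repaired by switching from $N_i$ to $M_j$, the number of restarts at exactly phase $j$, together with this bias-independent bound $s_j\le\exp(-\Omega(\beta^2 2^j))$; but once you do that you have essentially reproduced the paper's proof.
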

\begin{proof}
Suppose that the algorithm restarts at phase $i$. The number of coin tosses made by this point since the previous restart (if any) is $2^{i_0} + 2^{i_0+1} + \ldots + 2^i \leq 2^{i+1}$. Moreover, if the coin had bias smaller than $1-\eta - i\beta + \beta/2$, then, if $i> i_0$, by a Chernoff bound, the probability the coin passed the previous test, where it was supposed to have at least $1-\eta - (i-1)\beta$ fraction of heads, is at most $\exp(-\beta^2 2^{i-2})$. If $i=i_0$, there is probability less than $1/3$ that the coin was picked. 
If the coin had bias at least $1-\eta - i\beta + \beta/2$, then by the Chernoff bound, the probability it failed the current test, where it is supposed to have at least $1-\eta - i\beta$ fraction of heads, is at most $\exp(-\beta^2 2^{i-1})$.
In any case, the ratio between the number of coin tosses and the exponent of the probability is $O(1/\beta^2)$. 
The value of $i_f$ is chosen so that the error probability in the last iteration is $\exp(-n)$.
By the choice of $\beta$, the coin tosses to exponent ratio is $O((1/\zeta^2)\log^2 (n/\zeta))$. Therefore, the number of coin tosses one needs in order to reach error probability $\exp(-n)$ is $O((n/\zeta^2)\log^2 (n/\zeta))$. 
\end{proof}

\subsection{Extensions}\label{s:approx-coin} 

In the sequel, we'll use the biased coin algorithm in a more general setting, and in this section we develop the appropriate machinery. In the general setting coins are divided into groups, and rather than directly tossing coins we simulate tossing. The simulation may fail or may produce results that are inconsistent with the true bias of the coin. Some of the coins may be faulty, and their tossing may fail arbitrarily. 
For other coins, the probability that tossing fails is small. For any coin, the probability that the coin toss does not fail and is inconsistent with the true bias is small. The coins are partitioned into groups of size $g$ each. The bias of a group is the maximum bias among non-faulty coins in the group, and is $0$ if all the coins in the group are faulty.
At least $2/3$ fraction of the groups have bias at least $1-\eta$.
The task is to find a group of coins of bias at least $1-\eta-\zeta$.

The formal requirements from a simulated coin toss are as follows:
\begin{definition}\label{d:simulate-toss}
Simulated coin tosses satisfy the following:
\begin{itemize}
\item For any coin, when tossing the coin $k$ times, there is exponentially small probability in $\beta^2 k$ for the following event: the tosses do not fail, yet the fraction of tosses that fall on heads deviates from the true bias by more than an additive $\beta/4$ for $\beta$ as in Figure~\ref{alg:coin-extended}.

\item For non-faulty coins, the probability that tossing the coin fails is exponentially small in $\beta^2 k$.

\end{itemize}
\end{definition}
Note that the simulation has to be very accurate, since the deviation $\beta/4$ is sub-constant. We describe a modified biased coin algorithm in Figure~\ref{alg:coin-extended}. 

\begin{algbox}{6.5in}{coin-extended}{An algorithm for finding a group of coins of bias at least $1-\eta-\zeta$, where the coins are partitioned into groups of size $g$ each. The error probability is exponentially small in $n$.}
	\Procname{\textsc{Find-Biased-Coin-in-Group}$(n,g,\eta,\zeta)$}
	\li Set $i_f= \log((n+\log g)/\zeta^2)+\Theta(\log\log((n+\log g)/\zeta))$;\;\; $\beta = \zeta/i_f$.
	\li Set $i_0= \log(\log g/\beta^2)+\Theta(1)$, where the constant term is sufficiently large.
	\li Pick a group of coins at random.
	\li \For $i=i_0,i_0+1,\ldots,i_f$ \Do
		\li Simulate tossing each one of the $g$ coins in the group for $k=2^i$ times.
		\li If the maximum fraction of heads per coin is less than $1-\eta - i\beta$, restart.
	\End
	\li \Return group of coins.
\end{algbox}

\begin{lemma}[Generalized biased coin]\label{l:simulated-coin}
If \textsc{Find-Biased-Coin-in-Group} restarts at a certain phase, then either in this phase or in the previous, the reported fraction of heads deviates by more than $\beta/2$ from the true bias for one of the non-faulty coins in the group, or it is the first phase and a group of bias at most $1-\eta - \beta/2$ was picked. 

As a result, within $O((ng\log g/\zeta^2)\log^2 ((n+\log g)/\zeta))=\tilde{O}(ng/\zeta^2)$ coin tosses \textsc{Find-Biased-Coin-in-Group} outputs a coin of bias at least $1-\eta-\zeta$ except with probability $\exp(-n)$.   
\end{lemma}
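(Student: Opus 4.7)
The plan is to prove the structural claim (first sentence) as a deterministic fact about the algorithm's transcript, and then combine it with Definition~\ref{d:simulate-toss} via coin-level union bounds and a geometric sum over phases, following the template used for the single-coin analysis of Figure~\ref{alg:coin}.

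For the structural claim, suppose the algorithm restarts at phase $i$. Then the maximum simulated fraction-of-heads is strictly less than $1-\eta-i\beta$ in phase $i$, while (for $i > i_0$) at least $1-\eta-(i-1)\beta$ in phase $i-1$. Consider a non-faulty coin $c^*$ of largest true bias in the current group. If $c^*$'s bias exceeds $1-\eta-i\beta+\beta/2$, its phase-$i$ report, bounded by the max, falls below $1-\eta-i\beta$, giving a deviation $>\beta/2$. Otherwise every non-faulty bias is at most $1-\eta-i\beta+\beta/2$, and either some non-faulty coin produced the high phase-$(i-1)$ report (thus deviating upward by more than $\beta/2$) or that report came from a faulty coin whose true bias is forced above $1-\eta-i\beta+3\beta/4$ by the first bullet of Definition~\ref{d:simulate-toss}; such a coin would also produce a high report at phase $i$, contradicting the restart unless its phase-$i$ simulation itself fails or deviates by more than $\beta/4$, a rare event folded into the probabilistic analysis below. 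At $i=i_0$ the same dichotomy either yields a $>\beta/2$ deviation for $c^*$ (using $i_0\beta \geq \beta$) or falls into the permitted ``group has bias $\leq 1-\eta-\beta/2$'' case.

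Given the claim, a restart at phase $i$ is charged to $O(g)$ rare events at phase $i$ or $i-1$ (each a simulation failure or a $>\beta/4$ deviation), each of probability $\exp(-\Omega(\beta^2 2^{i-1}))$ by Definition~\ref{d:simulate-toss}. The choice $i_0 = \log(\log g/\beta^2) + \Theta(1)$ with a sufficiently large constant makes the total per-run restart probability (conditional on a good initial group) at most $1/6$ already from the $i=i_0$ term, and a geometric sum over $i \in [i_0, i_f]$ keeps the total across all phases at most a small constant. Combined with the $\geq 2/3$ probability that a uniformly random group is good, each run succeeds with constant probability, so with probability $\geq 1-\exp(-n)$ the algorithm terminates within $O(n)$ runs. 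A single run uses $g\sum_{i=i_0}^{i_f} 2^i = O(g\cdot 2^{i_f})$ tosses; multiplying by $O(n)$ with the stated $i_f$ yields $\tilde O(ng/\zeta^2)$. Correctness at termination is a final application of Definition~\ref{d:simulate-toss} at $i=i_f$: if the returned group had bias $<1-\eta-\zeta$, then at phase $i_f$ some coin reported $\geq 1-\eta-\zeta$ despite every non-faulty coin having smaller true bias, an event of probability $\leq g\exp(-\Omega(\beta^2 2^{i_f})) \leq \exp(-n)/\poly(n)$ per run, which survives a union bound over all runs.

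The main technical obstacle is the case in the structural claim where the maximum report at phase $i-1$ comes from a faulty coin: the lemma insists on a non-faulty deviation witness, so we rely on the uniformity of the first bullet of Definition~\ref{d:simulate-toss} (which binds every coin, faulty or not) to conclude that a persistently high-reporting faulty coin must in fact have high true bias, and so could not have caused a subsequent restart without itself triggering a separate rare failure or deviation event. Keeping the $\log g$ factor from the per-coin union bound absorbed into $i_0$ rather than letting it propagate into $i_f$ is the bookkeeping point that preserves the $\tilde O(ng/\zeta^2)$ bound.
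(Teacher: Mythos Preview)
Your structural dichotomy and your handling of the faulty-coin corner case are broadly in line with the paper, which in fact does not isolate the structural claim but folds it into the same case split on whether the group bias exceeds $1-\eta-i\beta+\beta/2$.

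The real gap is in your accounting of coin tosses. You argue that each run succeeds with constant probability, so the algorithm terminates within $O(n)$ runs except with probability $\exp(-n)$; you then bound a single run by $O(g\cdot 2^{i_f})$ tosses and multiply. But $g\cdot 2^{i_f}=\tilde O\big(g(n+\log g)/\zeta^2\big)$ already, so $O(n)\cdot O(g\cdot 2^{i_f})=\tilde O(n^2 g/\zeta^2)$, an extra factor of $n$ over the claimed bound. The ``$O(n)$ runs'' argument cannot recover this factor, because nothing in it distinguishes a run that restarts at phase $i_0$ (cheap) from one that restarts at phase $i_f-1$ (expensive); you have only shown that \emph{some} constant fraction of runs are fully successful, not that unsuccessful runs are short.

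The paper avoids this by tracking, at each restart in phase $i$, both the tosses spent since the previous restart ($\le g\cdot 2^{i+1}$) and the probability of that restart ($\le \exp(-\Omega(\beta^2 2^{i}))$ once the $g$-fold union bound is absorbed via the choice of $i_0$). The ratio of tosses to the exponent of the probability is therefore uniformly $O(g\log g/\beta^2)$ across all phases, which is what lets one conclude that accumulating probability $\exp(-n)$ costs only $O(n\cdot g\log g/\beta^2)=\tilde O(ng/\zeta^2)$ tosses in total. To fix your argument you would need this per-restart ratio bound (or an equivalent statement that runs restarting at phase $i$ are exponentially rare in $\beta^2 2^i$, so that the expected and high-probability cost of all failed runs is dominated by the $i_0$ term), not just a constant lower bound on the per-run success probability.
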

\begin{proof}
Suppose that the algorithm restarts at phase $i$. The number of coin tosses made by this point is $g\cdot(2^{i_0} + 2^{i_0+1} + \ldots + 2^i) \leq g\cdot 2^{i+1}$. 

Suppose that the group had bias smaller than $1-\eta - i\beta + \beta/2$. If $i> i_0$, the probability that the coins passed the previous test, where at least one of them was supposed to have at least $1-\eta - (i-1)\beta$ fraction of heads, is $g\cdot(\exp(-\beta^2 2^{i-4}) + \exp(-\Omega(\beta^2 2^{i-1})))$ (where $\beta/4$ of the deviation and $\exp(-\Omega(\beta^2 2^{i-1}))$ of the error probability can be attributed to the simulation). Note that this probability is exponentially small in $\beta^2 2^i$ when $\log g$ is sufficiently smaller than $\beta^2 2^{i_0-1}$ (here we use the choice of $i_0$).
If $i=i_0$, the probability that a group of bias smaller than $1-\eta$ was picked is less than $1/3$.

On the other hand, if the group has bias at least $1-\eta - i\beta + \beta/2$, then the probability it failed the current test, where one of the coins is supposed to have at least $1-\eta - i\beta$ fraction of heads, is at most $\exp(-\beta^2 2^{i-3})+\exp(-\Omega(\beta^2 2^{i}))$ (again, $\beta/4$ of the deviation and $\exp(-\Omega(\beta^2 2^{i}))$ of the error probability can be attributed to the simulation).

In any case, the ratio between the number of coin tosses and the exponent of the probability is $O(g\log g/\beta^2)$. 
The value of $i_f$ is set so the error probability in the last iteration is $\exp(-n)$.
By the choice of $\beta$, the coin tosses to exponent ratio is $O((g\log g/\zeta^2)\log^2 ((n+\log g)/\zeta))$. Therefore, the number of coin tosses one needs in order to reach error probability $\exp(-n)$ is $O((ng\log g/\zeta^2)\log^2 ((n+\log g)/\zeta))= \tilde{O}(ng/\zeta^2)$.
\end{proof}



\section{Max Cut on Dense Graphs}\label{s:max-cut}

In this section we show the application to \textsc{Max-Cut} on dense graphs. This is our simplest example. It relies on the biased coin algorithm, and does not require any sketches.

\subsection{A Simple Randomized Algorithm}

First we describe a simple randomized algorithm for dense \textsc{Max-Cut} based on the sampling idea of Fernandez de la Vega~\cite{Vega} and Arora, Karger and Karpinski~\cite{AKK95}. We remark that Mathieu and Schudy~\cite{MS} have similar, but more efficient, randomized algorithms, however, for the sake of simplicity, we stick to the simplest algorithm with the easiest analysis.

The main idea of the algorithm is as follows. 
We sample a small $S\subseteq V$ and enumerate over all possible $S$-cuts $H\subseteq S$. Each $S$-cut induces a cut $C_{S,H}\subseteq V$ as follows.

\begin{definition}[Induced cut]
Let $G=(V,E)$. Let $S\subseteq V$ and $H\subseteq S$. We define $C_{S,H}\subseteq V$ as follows:
for every $v\in V$ let $v\in C_{S,H}$ if the fraction of edges $e=(v,s)\in E$ with $s\in S-H$ is larger than the fraction of edges $e=(v,s)\in E$ with $s\in H$.
\end{definition}

We will argue below that if there is a cut in $G$ with value at least $1-\varepsilon$ and $H$ is the restriction of that cut to $S$, then the induced cut is likely to approximately achieve the optimal value.
Note that we rely on density when we hope that the edges that touch the small set $S$ span most of the vertices in the graph.

\begin{lemma}[Sampling]\label{l:sampling}
Let $G=(V,E)$ be a regular $\gamma$-dense graph that has a cut of value at least $1-\varepsilon$ for $\varepsilon < 1/4$. Then for $\zeta < 1/4-\varepsilon$ and for a uniform $S\subseteq V$, $$\card{S}=\max\set{\ceil{\log(2/\zeta^2)/\zeta^2} ,\ceil{2\log(2/\zeta^2)/\gamma^2}},$$ with probability at least $1-\zeta$, there exists $H\subseteq S$ such that the value of the cut $C_{S,H}$ is at least $1 - \varepsilon - 10\zeta$. 
\end{lemma}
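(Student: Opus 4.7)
The plan is to take $H = S \cap C^*$ as the witness, where $C^*$ is any cut of $G$ of value at least $1 - \varepsilon$ (chosen to be globally optimal, hence also locally optimal, so that the majority of each vertex $v$'s neighbors lies on the opposite side of $C^*$ from $v$ itself). Writing $n^+_v = \card{N(v) \cap C^*}$ and $n^-_v = \card{N(v) - C^*}$, the algorithm places $v$ in $C_{S, H}$ iff more of $v$'s $S$-neighbors are outside $C^*$ than inside; hence if the sampled majority matches the true majority, $v$'s placement agrees with $C^*$. I would bound $\Expc{S}{val(C^*) - val(C_{S, H})}$ and apply Markov's inequality.

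Partition vertices into \emph{unbalanced} ($f_v := \card{n^+_v - n^-_v} > \zeta d$, where $d = \gamma \card V$) and \emph{balanced} ($f_v \leq \zeta d$). For an unbalanced $v$, Hoeffding's inequality applied to the signed count $\card{N(v) \cap S \cap (V - C^*)} - \card{N(v) \cap S \cap C^*}$ (conditioned on $\card{N(v) \cap S}$, which is concentrated around $\gamma \card S$) shows $v$ is misplaced relative to $C^*$ with probability at most $\exp(-\Omega((f_v/d)^2 \gamma \card S))$; a misplacement costs at most $f_v$ edges in cut value (the single-vertex swap cost). For a balanced $v$, misplacement costs at most $f_v \leq \zeta d$ edges. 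The number of balanced vertices is at most $O(\varepsilon \card V)$: each such $v$ has $\min(n^+_v, n^-_v) \geq (1 - \zeta)d/2$, and $\sum_v \min(n^+_v, n^-_v) \leq 2\varepsilon \card E$ is twice the number of non-crossing edges of $C^*$.

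Assembling these bounds, the expected loss contribution from balanced vertices is $O(\varepsilon \zeta)$, while from unbalanced vertices it is $\tfrac{1}{\card E}\sum_v f_v \Pr[v \text{ misplaced}] \leq O(1/\sqrt{\gamma \card S})$ via the elementary inequality $\alpha e^{-c\alpha^2} \leq O(1/\sqrt c)$. The choice of $\card S$ in the statement ensures $\gamma \card S$ is sufficiently large in both branches of the maximum ($\gamma \cdot \log(2/\zeta^2)/\zeta^2$ or $2\log(2/\zeta^2)/\gamma$). The main obstacle will be pushing the expected loss down to $O(\zeta^2)$, as needed for Markov's inequality to yield the claimed probability bound of $1 - \zeta$ for a loss of at most $10\zeta$; this likely requires the sharper Chernoff estimate $\Pr[v \text{ misplaced}] \leq (1 - (f_v/d)^2)^{\gamma \card S /2}$ for vertices with large gap (tighter than Hoeffding when $f_v/d$ is not small), combined with the global constraint $\sum_v f_v/d \geq (1 - 2\varepsilon)\card V$ (from local optimality of $C^*$) used via Cauchy--Schwarz to replace the crude per-vertex maximization.
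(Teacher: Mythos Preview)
Your high-level strategy---take $H = S \cap C^*$ for an optimal (hence locally optimal) cut $C^*$, split vertices by whether their crossing-vs-non-crossing gap exceeds a $\zeta$-threshold, and use concentration to show large-gap vertices are correctly placed---is exactly the paper's route. The divergence is in the accounting, and there you make things harder than necessary in a way that creates the very obstacle you flag.

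The paper does not weight misplacements by $f_v$. Defining $V'$ to be the vertices with crossing fraction at least $1/2+\zeta$ (essentially your ``unbalanced'' set), two Chernoff bounds give $\Pr[v\text{ misplaced}]\le \zeta^2$ for each $v\in V'$. Markov on the \emph{count} of misplaced $V'$-vertices then yields at most $\zeta\card{V'}$ misplaced with probability $\ge 1-\zeta$. Adding the vertices outside $V'$ (always potentially misplaced) gives a total of at most $5\zeta\card{V}$ misplaced, and the crude bound $val(C^*)-val(C_{S,H})\le 2\card{M}/\card{V}$ (every edge touching a misplaced vertex can flip status) finishes.

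Your plan to bound $\Expc{S}{val(C^*)-val(C_{S,H})}$ and apply Markov to the value loss directly is where the obstacle you identify actually bites---and your proposed remedy does not remove it. The balanced contribution $O(\varepsilon\zeta)$ is $\Theta(\zeta)$ when $\varepsilon$ is a fixed constant like $0.2$, not $O(\zeta^2)$, so Markov on the total expected loss cannot deliver failure probability $\zeta$. Your suggested fixes (sharper Chernoff, the constraint $\sum_v f_v/d \ge (1-2\varepsilon)\card{V}$, Cauchy--Schwarz) all target the unbalanced sum and leave the balanced term untouched. You could instead record the balanced loss as a deterministic $O(\zeta)$ and apply Markov only to the unbalanced loss---but then the cleanest bound on the expected unbalanced loss comes from dropping the $f_v$ weights and using $\Pr[\text{misplaced}]\le\zeta^2$ uniformly, which is exactly the paper's argument. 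A secondary issue: ``misplacement costs at most $f_v$'' is the single-vertex swap cost; for a set $M$ the actual value change is $\bigl(\sum_{v\in M} f_v - 2\cdot\text{cut}(M) + 2\cdot\text{noncut}(M)\bigr)/\card{E}$, so summing the $f_v$'s over $M$ is not in general an upper bound on the loss.
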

\begin{proof}
Let $C^*$ be the optimal cut in $G$.
Let $H\subseteq S$ be the restriction of $C^*$ to $S$. Denote by $V'$ the set of all $v\in V$ such that at least $1/2 + \zeta$ fraction of the edges that touch $v$ contribute to the value of $C^*$. 
Note that $\card{V'}\geq (1-4\zeta)\card{V}$. By
$\gamma$-density and regularity, the degree of all vertices is $\gamma\card{V}$.
By a Chernoff bound, except with probability $\zeta^2/2$ over $S$, at least $(\gamma/2)\card{S}$ of the vertices in $S$ are neighbors of $v$. The sample of $v$'s neighbors is uniform and hence by another Chernoff bound, except with probability $\zeta^2/2$ over $S$, the vertex $v$ is assigned by $C_{S,H}$ to the same side as $C^*$ assigns it.
Therefore, except with probability $\zeta$ over the random choice of $S$, at least $1-\zeta$ fraction of the vertices $v\in V'$ are assigned by $C_{S,H}$ the same as $C^*$. This means that at least $1-5\zeta$ fraction of the vertices $v\in V$ are assigned by $C_{S,H}$ the same as $C^*$. 
Therefore, the fraction of edges that: (i) contribute to the value of $C^*$, and (ii) have both their endpoints assigned by $C_{S,H}$ the same as $C^*$, is at least $1-\varepsilon - 2\cdot 5\zeta = 1-\varepsilon-10\zeta$.
\end{proof}

\subsection{A Randomized Algorithm With Exponentially Small Error Probability}
 
We describe an analogy between finding a cut of high value and finding a biased coin. 
We think of sampling $S\subseteq V$ as picking a group of coins, and picking $H\subseteq S$ as picking a coin in the group. The bias of the coin is the value of the cut $C_{S,H}$. Therefore a biased coin directly corresponds to a desirable cut.
One tosses a coin by picking an edge $(u,v)\in E$ uniformly at random, computing whether $u\in C_{S,H}$ and whether $v\in C_{S,H}$, and checking whether the edge contributes to the value of the cut. Note that checking whether a vertex belongs to $C_{S,H}$ is computed in time $\card{S}$. The coin toss algorithm is described in Figure~\ref{alg:toss-cut-coin}. 
The algorithm based on finding a biased coin is described in Figure~\ref{alg:amplified-cut}.

\begin{algbox}{6.5in}{toss-cut-coin}{A coin toss picks an edge at random and checks whether it contributes to the value of the cut $C_{S,H}$. }
	\Procname{\textsc{Max-Cut-Toss-Coin}$(G=(V,E),S,H)$}
	\li Pick $e=(u,v)\in E$ uniformly at random.
	\li \Return ``heads'' iff $u\in C_{S,H}$ and $v\notin C_{S,H}$ or vice versa.
\end{algbox}

\begin{algbox}{6.5in}{amplified-cut}{An algorithm for finding a cut of value $1-\varepsilon-O(\zeta)$ in a regular $\gamma$-dense graph that has a cut of value $1-\varepsilon$. The error probability of the algorithm is exponentially small in $\card{V}^2$.}
	\Procname{\textsc{Find-Cut}($G=(V,E),\varepsilon,\zeta$)}
	\li Set $s=\max\set{\ceil{\log(2/\zeta^2)/\zeta^2} ,\ceil{2\log(2/\zeta^2)/\gamma^2}}$, where $\gamma$ is the density of $G$.
	\li Set $i_f= \log((\card{V}^2+s)/\zeta^2)+\Theta(\log\log((\card{V}+s)/\zeta))$;\;\; $\beta = \zeta/i_f$.
	\li Set $i_0= \log(s/\beta^2)+\Theta(1)$.
	\li Sample $S\subseteq V$, $\card{S}=s$.
	\li \For $i=i_0,i_0+1,\ldots,i_f$ \Do
		\li \For all $H\subseteq V$ \Do
					\li Invoke \textsc{Max-Cut-Toss-Coin}$(G,S,H)$ for $k=2^i$ times.
		\End
		\li If the fraction of heads is less than $1- \varepsilon - 10\zeta - i\beta$ for all $H$, restart.
	\End
	\li \Return cut $C_{S,H}$ with value at least $1-\varepsilon - 11\zeta$ if exists.
\end{algbox}

This proves Theorem~\ref{t:max-cut}, which is repeated below for convenience. Note that for a sufficiently small error probability exponentially small in $\card{V}^2$ it follows that there exists a randomness string on which the algorithm succeeds, no matter what the input is. 
\begin{theorem}
There is a Las Vegas algorithm that given a $\gamma$-dense graph $G$ that has a cut of value at least $1-\varepsilon$ for $\varepsilon < 1/4$, and given $\zeta < 1/4-\varepsilon$, finds a cut of value at least $1-\varepsilon - O(\zeta)$, except with probability exponentially small in $\card{V}^2$. The algorithm runs in time $\tilde{O}(\card{V}^2(1/\zeta)^{O(1/\gamma^2 + 1/\zeta^2)})$. It also implies a non-uniform deterministic algorithm with the same run-time.
\end{theorem}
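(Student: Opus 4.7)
The plan is to instantiate the generalized biased coin framework of Lemma~\ref{l:simulated-coin} with groups indexed by random subsets $S\subseteq V$ of size $s=\max\{\lceil\log(2/\zeta^2)/\zeta^2\rceil,\lceil 2\log(2/\zeta^2)/\gamma^2\rceil\}$ and, inside each group, one coin per subset $H\subseteq S$ whose true bias equals the value of the induced cut $C_{S,H}$. Combining this with Lemma~\ref{l:sampling} gives the correctness and running time; exponential smallness of the error probability in $|V|^2$ then yields the non-uniform deterministic algorithm directly by Adleman's union bound (the $s(N)=N$ case of Theorem~\ref{t:derandomization}).

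First I would verify the group-level guarantee. Lemma~\ref{l:sampling} says that a uniform $S$ of the above size admits, with probability at least $1-\zeta\geq 2/3$, some $H\subseteq S$ with $\mathrm{val}(C_{S,H})\geq 1-\varepsilon-10\zeta$. Thus, setting $\eta=\varepsilon+10\zeta$, at least $2/3$ of the groups have bias $\geq 1-\eta$, satisfying the hypothesis of Lemma~\ref{l:simulated-coin}. The group size is $g=2^s=(1/\zeta)^{O(1/\gamma^2+1/\zeta^2)}$, and no coin is ever faulty.

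Next I would check that \textsc{Max-Cut-Toss-Coin} is a valid simulated toss in the sense of Definition~\ref{d:simulate-toss}. For fixed $S,H$ the cut $C_{S,H}$ is deterministic, so a uniformly random edge contributes to the cut independently with probability exactly $\mathrm{val}(C_{S,H})$. A Chernoff bound over $k$ edge samples gives deviation exceeding $\beta/4$ with probability at most $\exp(-\Omega(\beta^2 k))$, and tossing never ``fails''. With this, Lemma~\ref{l:simulated-coin} applied with $n=|V|^2$ and $\zeta$ rescaled by a constant returns, except with probability $\exp(-\Omega(|V|^2))$, a group $S$ and a coin $H$ of bias at least $1-\eta-\zeta = 1-\varepsilon-11\zeta = 1-\varepsilon-O(\zeta)$. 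The total number of coin tosses is $\tilde O(|V|^2 g/\zeta^2)$, and since each toss costs $O(s)$ time to evaluate membership in $C_{S,H}$, the total running time is $\tilde O(|V|^2\cdot 2^s)=\tilde O(|V|^2(1/\zeta)^{O(1/\gamma^2+1/\zeta^2)})$. The algorithm of Figure~\ref{alg:amplified-cut} is literally this instantiation, so correctness follows line-by-line from the lemma.

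Finally, for derandomization, the input is a graph on $|V|$ vertices described by $N=\Theta(|V|^2)$ bits, while the error probability is $\exp(-\Omega(|V|^2))=2^{-\omega(N)}$. A union bound over all $2^N$ inputs shows that there exists a single randomness string on which the (repeated) randomized algorithm succeeds for every input of size $N$; hard-wiring this string yields the non-uniform deterministic algorithm with the same asymptotic run-time. This is precisely the $s(N)=N$ specialization of Theorem~\ref{t:derandomization}, so no sketch is needed here. The only mildly delicate point in the whole argument is bookkeeping the slacks: the $10\zeta$ loss from sampling, the $i\beta$ schedule in the biased coin algorithm, and the simulation accuracy $\beta/4$ must all be absorbed into the final $O(\zeta)$ — but since $\beta=\zeta/i_f$ is sub-constant and $i_f=O(\log(|V|/\zeta))$, these terms combine cleanly and all error exponents remain $\Omega(|V|^2)$.
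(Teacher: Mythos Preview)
Your proposal is correct and follows essentially the same approach as the paper: instantiate the generalized biased-coin lemma with groups $S$ and coins $H\subseteq S$ whose bias is $\mathrm{val}(C_{S,H})$, use Lemma~\ref{l:sampling} for the $2/3$ group guarantee and a Chernoff bound to certify \textsc{Max-Cut-Toss-Coin} as a simulated toss, then invoke Adleman (the $s(N)=N$ case of Theorem~\ref{t:derandomization}) for the non-uniform derandomization. One minor slip: $\exp(-\Omega(|V|^2))=2^{-\Omega(N)}$, not $2^{-\omega(N)}$, but since you correctly allow a constant number of repetitions before the union bound this does not affect the argument.
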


\section{Approximate Clique}\label{s:clique}

\subsection{An Algorithm With Constant Error Probability}\label{s:dense-randomized}

In this section we describe a randomized algorithm with constant error probability for finding an approximate clique in a graph that has a large clique. The algorithm is a simplified version of an algorithm and analysis by
Goldreich, Goldwasser and Ron~\cite{GGR}. We rely on the algorithm and the analysis when we design a randomized algorithm with error probability $\exp(-\Omega(\card{V}))$ and again when we design a deterministic algorithm.

The main idea of the algorithm is as follows. 
We first find a small random subset $U'$ of the large clique $C$ by sampling vertices $U$ from $V$ and enumerating over all possibilities for $C\cap U$.
The intuition is that now we would like to find other vertices that are part of the large clique $C$. A natural test for whether a vertex $v\in V$ is in the clique is whether $v$ is connected to all the vertices in $U'$. This, however, is not a sound test, since the clique might have many vertices that neighbor it but do not neighbor one another. A better test checks whether $v$ neighbors all of $U'$, as well as many of the vertices that neighbor all of $U'$. Vertices that neighbor all of $U'$ are likely to neighbor almost all of the clique. 

The algorithm is described in Figure~\ref{alg:random}. It picks $U\subseteq V$ at random, considers all possible sub-cliques $U'\subseteq U$, $\card{U'}\geq (\rho/2)\card{U}$, computes $\Gamma(U')$ the set of vertices that neighbor all of $U'$, computes for every vertex in $\Gamma(U')$ the fraction of vertices in $\Gamma(U')$ that neighbor it, and considers $S_{U'}$ the set of $\rho\card{V}$ vertices in $\Gamma(U')$ with largest fractions of neighbors. The algorithm outputs a sufficiently dense set among all sets $S_{U'}$, if such exists.

\begin{algbox}{6.5in}{random}{
Randomized algorithm with constant error probability for finding an approximate clique.
}
	\Procname{\textsc{Find-Approximate-Clique-Constant-Error}$(G=(V,E),\rho,\varepsilon)$}
	\li Sample $U\subseteq V$, $\card{U}=\ceil{k_0/\rho}$, for $k_0=100/\varepsilon^2$. 
	\li \For all sub-cliques $U'\subseteq U$, $\card{U'}\geq (\rho/2)\card{U}$, \Do
		\li Compute $\Gamma(U')$ the set of vertices that neighbor all of $U'$.
		\li For each $v\in \Gamma(U')$ compute the fraction ${f}_v$ of vertices in $\Gamma(U')$ that neighbor $v$.
		\li Let $S_{U'}\subseteq \Gamma(U')$ contain the $\rho\card{V}$ vertices with largest $f_v$.
	\End
	\li \Return set $S_{U'}$ of density at least $1-2\varepsilon/\rho$ if such exists.
\end{algbox}
The algorithm runs in time $\exp(k_0/\rho)\cdot O(\card{V}^2)$. Next we analyze the probability it is correct. 
By a Chernoff bound, we have $\card{U\cap C}\geq (\rho-\varepsilon)\card{U}$, except with probability $1/10$.
Pick a uniformly random order on the vertices.
Let us focus on the event $\card{U\cap C}\geq (\rho-\varepsilon)\card{U}$ and $U'$ that is the first $(\rho/2)\card{U}$ elements in $U\cap C$ according to the random order. Note that the elements of $U'$ are uniformly and independently distributed in $C$. Let $\Gamma(U')\subseteq V$ contain all the vertices that neighbor all of $U'$.

\begin{lemma}\label{l:U'-approx} With probability $1-e^{-25/\varepsilon}$ over the choice of $U'$, the fraction of $v\in \Gamma(U')$ that neighbor less than $1-\varepsilon$ fraction of $C$ is at most $e^{-25/\varepsilon}$.
\end{lemma}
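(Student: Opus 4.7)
The plan is a first-moment (Markov) argument, exploiting the fact that $U'$ lies entirely inside $C$. Call a vertex $v \in V$ \emph{bad} if its fraction of neighbors in $C$ is strictly less than $1-\varepsilon$; the goal is to bound the fraction of bad vertices inside $\Gamma(U')$ with high probability over $U'$.

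First I would pin down the distribution of $U'$: since we condition on the event $|U\cap C|\geq (\rho-\varepsilon)|U|$ and take the first $(\rho/2)|U|$ elements of $U\cap C$ under a uniformly random ordering, the elements of $U'$ are distributed as uniform independent samples from $C$, and $|U'| = (\rho/2)\lceil k_0/\rho\rceil \geq k_0/2 = 50/\varepsilon^2$. For any fixed bad vertex $v$, membership $v\in\Gamma(U')$ forces each of the $|U'|$ independent samples to fall among the strictly fewer than $(1-\varepsilon)|C|$ neighbors of $v$ in $C$, so
\[
\Pr[v\in\Gamma(U')] \leq (1-\varepsilon)^{|U'|} \leq e^{-\varepsilon\cdot 50/\varepsilon^2} = e^{-50/\varepsilon}.
\]
Summing over the at most $|V|$ bad vertices, linearity of expectation gives $\mathbb{E}[\#\{\text{bad in }\Gamma(U')\}] \leq |V|\cdot e^{-50/\varepsilon}$.

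The crucial (and easy-to-miss) deterministic step is the lower bound on the denominator: because $U'\subseteq C$ and $C$ is a clique, every vertex of $C$ neighbors every vertex of $U'$, so $C\subseteq\Gamma(U')$ and hence $|\Gamma(U')|\geq |C| = \rho|V|$ \emph{with certainty}. Therefore the fraction of bad vertices inside $\Gamma(U')$ is at most $(\#\text{bad in }\Gamma(U'))/(\rho|V|)$, whose expectation is at most $e^{-50/\varepsilon}/\rho$. A standard Markov's inequality then yields that this fraction exceeds $e^{-25/\varepsilon}$ with probability at most $e^{-25/\varepsilon}/\rho$, giving the claim.

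The one mild obstacle is the spurious $1/\rho$ factor produced by Markov. I would absorb it by taking the constant in $k_0$ slightly larger so that the base exponent becomes, say, $e^{-C/\varepsilon}$ with $C$ large enough that the Markov split still leaves $e^{-25/\varepsilon}$ on both sides after discarding $1/\rho$ (which is at most $|V|$ and so contributes a negligible $\log|V|$ that the exponent in $1/\varepsilon$ dominates for the regimes of interest). Beyond that, the argument is routine: the key idea is the interplay between the exponentially small inclusion probability for bad vertices and the deterministic lower bound $|\Gamma(U')|\geq \rho|V|$ coming from $C$ being a clique.
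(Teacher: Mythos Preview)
Your approach is essentially identical to the paper's: bound, for each bad vertex $v$, the probability that $v\in\Gamma(U')$ by $(1-\varepsilon)^{|U'|}\le e^{-50/\varepsilon}$, then pass to the fraction via a first-moment/Markov argument. The paper's proof is in fact terser than yours --- after the per-vertex bound it simply writes ``The lemma follows'' --- so your write-up is a strictly more detailed version of the same argument, including the deterministic lower bound $|\Gamma(U')|\ge|C|=\rho|V|$ that the paper leaves implicit. Your remark about the stray $1/\rho$ factor from Markov is accurate; the paper silently absorbs it into the constants (and indeed for the parameter regime of interest $1/\rho$ is dominated by $e^{c/\varepsilon}$), so your proposed fix of adjusting $k_0$ is exactly in the spirit of how such constants are handled throughout the paper.
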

\begin{proof}
Consider $v\in V$ that has less than $1-\varepsilon$ neighbors in $C$. 
For $v$ to be in $\Gamma(U')$ the set $U'$ must miss all of the non-neighbors of $v$. Since $U'$ is a uniform sample of $C$, this happens with probability $(1-\varepsilon)^{\card{U'}}\leq e^{-50/\varepsilon}$. The lemma follows.
\end{proof}  

Let us focus on $U'$ for which the fraction of $v\in \Gamma(U')$ that neighbor less than $1-\varepsilon$ fraction of $C$ is at most $e^{-25/\varepsilon}$. Lemma~\ref{l:U'-approx} guarantess that such a $U'$, which we call {\em good}, is picked with constant probability. Next we show that an average vertex in $C$ neighbors most of $\Gamma(U')$. 

\begin{lemma}[Density for $C$]\label{l:C-degree} For good $U'$, the average number of neighbors a vertex $c\in C$ has in $\Gamma(U')$ is at least $(1-2\varepsilon)\cdot\card{\Gamma(U')}$.
\end{lemma}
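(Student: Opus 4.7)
The plan is a double-counting argument on the edges between $C$ and $\Gamma(U')$. First I would split $\Gamma(U')$ into ``good'' vertices, those that neighbor at least $1-\varepsilon$ fraction of $C$, and ``bad'' vertices, those that neighbor fewer. By the definition of good $U'$ provided by Lemma~\ref{l:U'-approx}, the bad vertices make up at most an $e^{-25/\varepsilon}$ fraction of $\Gamma(U')$. Good vertices $v \in \Gamma(U')$ satisfy $|N(v) \cap C| \geq (1-\varepsilon)|C|$ by definition.

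Next I would count the edges of $G$ between $\Gamma(U')$ and $C$ in two ways:
$$\sum_{c \in C} |N(c) \cap \Gamma(U')| \;=\; \sum_{v \in \Gamma(U')} |N(v) \cap C| \;\geq\; \bigl(1-e^{-25/\varepsilon}\bigr)\,|\Gamma(U')|\cdot(1-\varepsilon)\,|C|,$$
where the inequality drops the contribution of bad vertices and uses the degree lower bound on good vertices. Dividing both sides by $|C|$ gives that the average, over $c\in C$, of $|N(c)\cap \Gamma(U')|$ is at least $(1-e^{-25/\varepsilon})(1-\varepsilon)\,|\Gamma(U')|$.

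Finally I would verify that this quantity is at least $(1-2\varepsilon)\,|\Gamma(U')|$. Since $e^{-25/\varepsilon}$ is much smaller than $\varepsilon$ in the regime of interest (one can assume $\varepsilon$ is small enough for this to hold, as larger $\varepsilon$ makes the statement weaker and follows trivially), we have $(1-e^{-25/\varepsilon})(1-\varepsilon) \geq 1 - \varepsilon - e^{-25/\varepsilon} \geq 1 - 2\varepsilon$, which is the required bound.

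The argument is essentially routine once the edge-counting identity is set up; there is no real obstacle. The only subtle point is that one must use ``good'' $U'$, since without Lemma~\ref{l:U'-approx} the set $\Gamma(U')$ could contain many vertices that neighbor only a small portion of $C$, which would drag down the edge count from the $\Gamma(U')$ side. It is the tight concentration of $\Gamma(U')$ around neighbors-of-almost-all-of-$C$ that makes the bipartite degree bound useful.
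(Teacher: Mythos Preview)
Your proposal is correct and is essentially identical to the paper's proof: the paper also uses that, for good $U'$, at most an $e^{-25/\varepsilon}$ fraction of $\Gamma(U')$ fail to neighbor $1-\varepsilon$ of $C$, swaps the order of summation (your double-counting identity), and then applies $e^{-25/\varepsilon}\leq\varepsilon$ to conclude the $1-2\varepsilon$ bound. The paper simply states this in two sentences without writing the sums out explicitly.
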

\begin{proof}
Since $U'$ is good, more than $1-e^{-25/\varepsilon}$ fraction of $\Gamma(U')$ neighbor at least $1-\varepsilon$ fraction of $C$. Hence, the average fraction of $\Gamma(U')$ neighbors a uniform vertex in $C$ has is at least $1-2\varepsilon$ (using $e^{-25/\varepsilon}\leq \varepsilon$). 
\end{proof}

We can now prove the correctness of \textsc{Find-Approximate-Clique-Constant-Error}.

\begin{lemma}\label{l:clique-const-err}
With probability at least $1-e^{-25/\varepsilon}$, \textsc{Find-Approximate-Clique-Constant-Error}, when invoked on $0<\rho,\varepsilon<1$, and a graph $G=(V,E)$ with a clique on $\rho\card{V}$ vertices, picks $S_{U'}$ such that $({1}/{\card{S_{U'}}})\cdot\sum_{v\in S_{U'}} f_v \geq 1-2\varepsilon$, and returns a set of density at least $1-2\varepsilon/\rho$.
\end{lemma}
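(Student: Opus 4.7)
The plan is to condition on two favorable events. First, the sample $U$ satisfies $\card{U \cap C} \geq (\rho-\varepsilon)\card{U}$, which holds except with probability $1/10$ by a Chernoff bound (and with much higher probability given the choice $\card{U} = \lceil 100/(\varepsilon^2 \rho)\rceil$, so this is absorbed into the $e^{-25/\varepsilon}$ error of the second event). Second, fix a uniformly random ordering of $V$ and let $U' \subseteq U \cap C$ be the first $(\rho/2)\card{U}$ elements of $U \cap C$; then $U'$ is a uniform sample of $C$ of the correct size, so by Lemma~\ref{l:U'-approx} it is good with probability $\geq 1 - e^{-25/\varepsilon}$. Since the algorithm enumerates every sub-clique of $U$ of size at least $(\rho/2)\card{U}$, it in particular considers this $U'$.

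From Lemma~\ref{l:C-degree} applied to good $U'$, $\sum_{c \in C} \card{N(c) \cap \Gamma(U')} \geq (1-2\varepsilon)\card{C}\card{\Gamma(U')}$. Since $U' \subseteq C$ and $C$ is a clique, every vertex of $C \setminus U'$ neighbors all of $U'$ and so lies in $\Gamma(U')$; also $U' \cap \Gamma(U') = \emptyset$ because no vertex is its own neighbor. Hence $C$ is the disjoint union of $C \cap \Gamma(U')$ and $U'$, and splitting the sum accordingly, upper-bounding the $U'$-contribution by $\card{U'}\cdot\card{\Gamma(U')}$, and dividing through by $\card{\Gamma(U')}$ yields $\sum_{c \in C \cap \Gamma(U')} f_c \geq (1-2\varepsilon)\card{C} - \card{U'}$.

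Next I would invoke the extremality of $S_{U'}$: since $S_{U'}$ consists of the $\rho\card{V} = \card{C}$ vertices of $\Gamma(U')$ with largest $f_v$, for every $T \subseteq \Gamma(U')$ of size $\card{C}$ we have $\sum_{v \in S_{U'}} f_v \geq \sum_{v \in T} f_v$. Taking $T = (C \cap \Gamma(U')) \cup T_0$ for an arbitrary $T_0 \subseteq \Gamma(U') \setminus C$ of size $\card{U'}$ (and using $f_v \geq 0$ on $T_0$) we obtain $\sum_{v \in S_{U'}} f_v \geq (1-2\varepsilon)\card{C} - \card{U'}$. Dividing by $\card{S_{U'}} = \card{C}$ shows the average $f_v$ on $S_{U'}$ is at least $1 - 2\varepsilon - \card{U'}/\card{C}$; since $\card{U'} = O(1/\varepsilon^2)$ and $\card{C} = \rho\card{V}$, the additive slack is lower order and absorbed into the stated $1-2\varepsilon$.

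Finally, to pass from the $f_v$-average to the actual edge density of $S_{U'}$, observe that $S_{U'} \subseteq \Gamma(U')$, so $\card{S_{U'} \setminus N(v)} \leq \card{\Gamma(U') \setminus N(v)} = \card{\Gamma(U')}(1-f_v)$ for every $v \in S_{U'}$. Summing over $v \in S_{U'}$ and applying the average bound gives a total non-edge count at most $2\varepsilon\cdot\card{\Gamma(U')}\cdot\card{S_{U'}}$; dividing by $\card{S_{U'}}(\card{S_{U'}}-1)$ and using $\card{\Gamma(U')} \leq \card{V}$ gives a non-density of at most $2\varepsilon\card{V}/(\card{S_{U'}}-1) \leq 2\varepsilon/\rho$, as desired. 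The main conceptual step is the extremality argument: Lemma~\ref{l:C-degree} only controls the $f_v$-mass carried by $C \cap \Gamma(U')$, but $S_{U'}$ is defined to maximize this mass over $\rho\card{V}$-subsets of $\Gamma(U')$, so it inherits the lower bound; the loss of the factor $1/\rho$ in the final density bound comes from the worst-case ratio $\card{\Gamma(U')}/\card{S_{U'}}$ when converting neighbors-in-$\Gamma(U')$ into neighbors-in-$S_{U'}$.
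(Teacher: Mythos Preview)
Your proof is correct and follows essentially the same approach as the paper: invoke Lemma~\ref{l:C-degree} for good $U'$, use the extremality of $S_{U'}$ to transfer the $f_v$-average lower bound from $C$ to $S_{U'}$, and then convert to edge density via $\card{\Gamma(U')}/\card{S_{U'}}\leq 1/\rho$. You are in fact more careful than the paper about the $\card{U'}$ vertices of $C$ that lie outside $\Gamma(U')$ and about the final density calculation; the paper simply asserts ``since $S_{U'}$ takes the $\rho\card{V}$ vertices with largest $f_v$ and $\card{C}\geq\rho\card{V}$'' and states the $1-2\varepsilon/\rho$ density bound without further argument.
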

\begin{proof}
For good $U'$, by Lemma~\ref{l:C-degree}, $({1}/{\card{C}})\sum_{v\in C} f_v\geq 1-2\varepsilon$. Since $S_{U'}$ takes the $\rho\card{V}$ vertices with largest $f_v$ and $\card{C}\geq\rho\card{V}$, we have $({1}/{\card{S_{U'}}})\cdot\sum_{v\in S_{U'}} f_v \geq 1-2\varepsilon$. Therefore, the density within $S_{U'}$ is at least $1-2\varepsilon/\rho$, and so is the density of the set returned by the algorithm.
\end{proof}

Next we show how to transform the randomized algorithm with constant error probability from Section~\ref{s:dense-randomized} into an algorithm with error probability that is exponentially small in $\card{V}$ {\em without increasing the run-time by more than poly-logarithmic factors}. The algorithm applies the biased coin algorithm from Section~\ref{s:biased}.

\subsection{Finding an Approximate Clique as Finding a Biased Coin}\label{s:clique-amplified}

The analogy between finding a biased coin and finding an approximate clique is as follows: Picking $U$ picks a group of coins. There is a coin for every $U'\subseteq U$, $\card{U'}\geq (\rho/2)\card{U}$. The coin is faulty if $\card{\Gamma(U')}< \rho\card{V}$.
A coin corresponds to the set $S_{U'}$ of the $\rho\card{V}$ vertices in $\Gamma(U')$ with largest number of neighbors in $\Gamma(U')$ (when the coin is faulty, pad the set with dummy vertices with $0$ neighbors).  
The bias of the coin $bias_{U'}$ is the expectation, over the choice of a random vertex $v\in S_{U'}$, of the fraction of vertices in $\Gamma(U')$ that neighbor $v$.
With at least $2/3$ probability, one of the coins in the group -- the one associated with a good $U'$ in the sense of Section~\ref{s:dense-randomized} -- has $bias_{U'}\geq 1-2\varepsilon$. Moreover, any $U'$ with $bias_{U'}\geq 1-c\varepsilon$ corresponds to a set of density at least $1-c\varepsilon/\rho$.

Had we found the vertices in each $S_{U'}$, we could have tossed a coin by picking a vertex at random from $S_{U'}$ and a vertex at random from $\Gamma(U')$ and letting the coin fall on heads if there is an edge between the two vertices. Unfortunately, finding the vertices in $S_{U'}$ may take
$\Omega(\card{V}^2)$ time, so we refrain from doing that. We settle for a simulated toss -- where with high probability the coin falls on heads with probability close to its bias. 
In Section~\ref{s:approx-coin} we extended the biased coin algorithm to simulated tosses. In Figure~\ref{alg:coin-toss} we describe the algorithm for tossing a coin enough times so the probability of $\gamma$-deviation from the true bias is exponentially small in $k$ (the number of coin tosses is implicit). 
The algorithm runs in time $O(k\card{V}\card{U'}\poly(1/\rho,1/\gamma))$.

\begin{algbox}{6.5in}{coin-toss}{
An algorithm for tossing the coin associated with $U'$, where the coin falls on heads with probability $\Theta(\gamma)$-close to its bias except with probability exponentially small in $k$.
}
	\Procname{\textsc{Clique-Coin-Toss}$(G=(V,E),U',\rho,k,\gamma)$}
		\li Compute $\Gamma(U')$.
		\li \If $\card{\Gamma(U')}< \rho\card{V}$ \Then
			\li Fail.
		\End
		\li Sample $V'\subseteq V$, $\card{V'}=\ceil{k/(\rho\gamma^2)}$.
		\li For all $v\in V'$ compute the fraction $f_v$ of $\Gamma(U')$ vertices that neighbor $v$. 
		\li Let $S_{U',V'}\subseteq V'\cap \Gamma(U')$ contain the $\rho\card{V'}$ vertices with largest $f_v$.
	\li \Return $bias^{V'}_{U'}\defeq (1/\rho\card{V'})\sum_{v\in S_{U',V'}}{f_v}$ heads.
\end{algbox}

In the next lemma we prove that $bias_{U'}^{V'}$ is likely to approximate $bias_{U'}$ well. For future use we phrase a more general statement than we need here, addressing $U'$ that defines a slightly faulty coin as well.

\begin{lemma}\label{l:bias-V'}
Assume that $\card{\Gamma(U')}\geq (1-\gamma')\rho\card{V}$, where $\gamma' = \Theta(\gamma)$ and $\gamma,\gamma'\leq 1/4$. For a uniform $V'\subseteq V$, except with probability exponentially small in $\rho\gamma^2\card{V'}$, 
$$\card{bias_{U'}^{V'} - bias_{U'}}\leq 3\gamma + 2\gamma'.$$
\end{lemma}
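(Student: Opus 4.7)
The plan is to compare $bias_{U'}^{V'}$ and $bias_{U'}$ through an intermediate Monte-Carlo estimator. Let $T_{U'}\subseteq \Gamma(U')$ denote the set of the $\rho\card{V}$ vertices with largest $f_v$, so that $bias_{U'} = (1/\card{T_{U'}})\sum_{v\in T_{U'}} f_v$, and define $\mu = (1/\card{V'\cap T_{U'}})\sum_{v\in V'\cap T_{U'}} f_v$. The triangle inequality splits the task into (a) a Hoeffding estimate of $\mu$ against $bias_{U'}$ and (b) an order-statistic comparison of $S_{U',V'}$ against $V'\cap T_{U'}$, which is the only quantity the algorithm actually sees.

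For (a) I would apply three standard concentration inequalities, each with failure probability $\exp(-\Omega(\gamma^2\rho\card{V'}))$: (i) $\card{V'\cap \Gamma(U')}$ is within $\gamma\rho\card{V'}$ of its expectation $(\card{\Gamma(U')}/\card{V})\card{V'}\geq (1-\gamma')\rho\card{V'}$ by the multiplicative Chernoff bound; (ii) $\card{V'\cap T_{U'}}$ is within $\gamma\rho\card{V'}$ of $\rho\card{V'}$; (iii) $\card{\mu - bias_{U'}} \leq \gamma$ by Hoeffding, treating the $f_v$ values of $T_{U'}$ as a fixed $[0,1]$-valued population and $V'\cap T_{U'}$ as a uniform sample.

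For (b) I would run a threshold-sandwich argument. Let $\tau$ be the cut-off defining $T_{U'}$, and choose $\tau_+ > \tau > \tau_-$ so that $\card{\set{v\in \Gamma(U') : f_v \geq \tau_-}} = (1+\gamma)\rho\card{V}$ and $\card{\set{v\in\Gamma(U') : f_v\geq \tau_+}} = (1-\gamma)\rho\card{V}$. Applying Chernoff to each of these two threshold sets separately shows that with high probability the $(\rho\card{V'})$-th largest $f_v$ over $V'\cap \Gamma(U')$ lies in $[\tau_-,\tau_+]$. Consequently $S_{U',V'}$ and $V'\cap T_{U'}$ agree on all vertices with $f_v>\tau_+$ and disagree only on vertices in the strip $\set{v : \tau_-\leq f_v\leq \tau_+}$, whose $V'$-incidence is at most $O(\gamma\rho\card{V'})$, again by Chernoff applied to this strip.

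Combining the pieces: (iii) gives $\card{\mu - bias_{U'}}\leq \gamma$; the symmetric-difference contribution bounds $\card{bias_{U'}^{V'}-\mu}$ by a further $O(\gamma)$ (crudely using $f_v\in[0,1]$ on $O(\gamma\rho\card{V'})$ disagreement vertices), and the denominator mismatch between the algorithm's fixed $\rho\card{V'}$ and the sample size $\card{V'\cap T_{U'}}$ supplies the $\gamma'$ term, because $\card{\Gamma(U')}$ may be as small as $(1-\gamma')\rho\card{V}$. Careful tracking of constants yields the advertised $3\gamma+2\gamma'$ bound. The delicate step is the threshold sandwich in (b): without it, the top-$\rho\card{V'}$ operator on a random sample is hard to reconcile with the top-$\rho\card{V}$ operator on the full ground set. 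Everything else is routine Chernoff/Hoeffding plus triangle inequality and a union bound over the $O(1)$ events above.
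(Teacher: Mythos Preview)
Your approach is essentially the paper's: the same multiplicative Chernoff bound on $\card{V'\cap T_{U'}}$ and the same Hoeffding bound on the empirical mean of the $f_v$'s over $V'\cap T_{U'}$ (the paper writes $S_{U'}$ for your $T_{U'}$). The paper is in fact terser than you are: after showing
\[
\left|\frac{1}{\rho\card{V'}}\sum_{v\in V'\cap S_{U'}} f_v - bias_{U'}\right|\le 3\gamma+2\gamma',
\]
it simply says ``The lemma follows,'' leaving the passage from $V'\cap S_{U'}$ to the algorithm's set $S_{U',V'}$ implicit. You are right to treat that passage explicitly, but your threshold-sandwich machinery (introducing $\tau_\pm$ and three extra Chernoff bounds) is heavier than needed. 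Once you already know $\card{V'\cap T_{U'}}=(1\pm(\gamma+\gamma'))\rho\card{V'}$, observe that every $v\in T_{U'}$ has $f_v\ge\tau$ and every $v\in\Gamma(U')\setminus T_{U'}$ has $f_v\le\tau$; hence, up to ties at $\tau$ (which do not affect sums), $V'\cap T_{U'}$ \emph{is} the set of the top $\card{V'\cap T_{U'}}$ values among $V'\cap\Gamma(U')$. Since $S_{U',V'}$ is the set of the top $\rho\card{V'}$ values, the two sums differ by at most $\bigl|\,\card{V'\cap T_{U'}}-\rho\card{V'}\,\bigr|\le(\gamma+\gamma')\rho\card{V'}$, because each $f_v\in[0,1]$. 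Dividing by $\rho\card{V'}$ gives the bridging bound directly, with no additional concentration inequalities.
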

\begin{proof} 
By a multiplicative Chernoff bound, except with probability exponentially small in $\rho\gamma^2 \card{V'}$, there are $(1\pm \gamma \pm \gamma')\rho\card{V'}$ vertices in $V'\cap S_{U'}$. Let us focus on this event.

By a Hoeffding bound, except with probability exponentially small in $\rho\gamma^2 \card{V'}$, we have $$\card{\frac{1}{\card{V'\cap S_{U'}}}\sum_{v\in V'\cap S_{U'}} f_v - \frac{1}{\card{S_{U'}}}\sum_{v\in S_{U'}} f_v} \leq \gamma.$$
Hence,
$$\card{\frac{1}{\rho\card{V'}}\sum_{v\in V'\cap S_{U'}} f_v - \frac{1}{\card{S_{U'}}}\sum_{v\in S_{U'}} f_v} \leq 3\gamma+2\gamma'.$$
The lemma follows.
\end{proof}

The algorithm for finding an approximate clique using \textsc{Clique-Coin-Toss} is described in Figure~\ref{alg:amplified-approx-clique}.
Note that the coin tossing algorithm satisfies the conditions of simulated tossing (Definition~\ref{d:simulate-toss}).
Lemma~\ref{l:U'-approx} and Lemma~\ref{l:C-degree} ensure that with constant probability over the choice of $U$, for $U'$ as specified in Section~\ref{s:dense-randomized}, we have $bias_{U'}\geq 1-2\varepsilon$ for one of the $U'\subseteq U$. 
Moreover, a coin with bias at least $1-c\varepsilon$ yields a set which is at least $1-c\varepsilon/\rho$-dense, and this set can be computed in $O(\card{V}^2)$ time. 
Therefore, the algorithm in Figure~\ref{alg:amplified-approx-clique} gives an algorithm for finding an approximate clique that errs with probability exponentially small in $\card{V}$ and runs in time $\tilde{O}(\card{V}^2 2^{O(1/\varepsilon^2\rho)})$.  
This proves part of Theorem~\ref{t:clique} repeated below for convenience (note that $\varepsilon$ in Theorem~\ref{t:clique} is replaced with $O(\varepsilon/\rho)$ here).

\begin{theorem} There is a Las Vegas algorithm that given a graph $G=(V,E)$ with a clique on $\rho\card{V}$ vertices and given $0<\rho,\varepsilon<1$, finds a set of $\rho\card{V}$ vertices and density $1-O(\varepsilon/\rho)$, except with probability exponentially small in $\card{V}$. The algorithm runs in time $\tilde{O}(\card{V}^2 2^{O(1/(\varepsilon^2\rho))})$. 
\end{theorem}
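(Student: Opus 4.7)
The plan is to instantiate the generalized biased coin framework of Lemma~\ref{l:simulated-coin} using the analogy described in Section~\ref{s:clique-amplified}. Groups of coins are indexed by random subsets $U\subseteq V$ of size $\lceil k_0/\rho\rceil$ with $k_0=\Theta(1/\varepsilon^2)$, so the group size is $g=2^{|U|}=2^{O(1/(\rho\varepsilon^2))}$. Coins inside a group are indexed by $U'\subseteq U$ with $|U'|\geq (\rho/2)|U|$; a coin is declared faulty if $|\Gamma(U')|<\rho|V|$, and otherwise its bias is $bias_{U'}$ as defined in Section~\ref{s:clique-amplified}.

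First I would verify the ``constant fraction of groups have bias $\geq 1-\eta$'' hypothesis with $\eta=2\varepsilon$. Lemma~\ref{l:U'-approx} and Lemma~\ref{l:C-degree} already show that with probability at least a small positive constant over $U$, the specific $U'$ formed from the first $(\rho/2)|U|$ elements of $U\cap C$ (in a random order) is good and satisfies $bias_{U'}\geq 1-2\varepsilon$. Boosting $k_0$ by a constant raises this probability above $2/3$ as required by Lemma~\ref{l:simulated-coin}. Second, I would check that \textsc{Clique-Coin-Toss} qualifies as a simulated coin toss (Definition~\ref{d:simulate-toss}): the deterministic failure test ``$|\Gamma(U')|<\rho|V|$'' never fires for non-faulty coins, and Lemma~\ref{l:bias-V'}, invoked with $\gamma=\beta/C$ for the $\beta$ used by \textsc{Find-Biased-Coin-in-Group}, ensures that $|bias_{U'}^{V'}-bias_{U'}|\leq \beta/4$ except with probability $\exp(-\Omega(\rho\gamma^2|V'|))=\exp(-\Omega(\beta^2 k))$, once we take $|V'|=\lceil k/(\rho\gamma^2)\rceil$.

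With these two preconditions met, I would apply Lemma~\ref{l:simulated-coin} with $n=\Theta(|V|)$, $g=2^{O(1/(\rho\varepsilon^2))}$, $\eta=2\varepsilon$, $\zeta=\varepsilon$. The lemma outputs some $U'$ with $bias_{U'}\geq 1-3\varepsilon$, using $\tilde{O}(ng/\zeta^2)=\tilde{O}(|V|\cdot 2^{O(1/(\rho\varepsilon^2))})$ simulated tosses, except with probability $\exp(-\Omega(|V|))$. Each simulated toss costs $\tilde{O}(|V|\cdot\poly(1/\rho,1/\varepsilon))$ time (computing $\Gamma(U')$ and the counts $f_v$ over $V'$), yielding the claimed total $\tilde{O}(|V|^2\,2^{O(1/(\rho^2\varepsilon^2))})$ runtime after absorbing $\poly$ factors into the exponent.

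To turn this into a Las Vegas algorithm, I would then compute the exact $S_{U'}$ (find $\Gamma(U')$, compute each $f_v$ exactly, and take the top $\rho|V|$) in $O(|V|^2)$ time, evaluate its edge density directly, and output it only if the density is at least $1-O(\varepsilon/\rho)$; otherwise return $\bot$. The deterministic density check ensures correctness whenever it returns a set, so the error probability inherits the $\exp(-\Omega(|V|))$ bound of the biased coin procedure. The key obstacle I anticipate is aligning the accuracy parameters: the deviation $\beta/4$ demanded by Definition~\ref{d:simulate-toss} is sub-constant and is precisely what makes Lemma~\ref{l:bias-V'}'s slack of $3\gamma+2\gamma'$ borderline; choosing $\gamma$ and the faultiness margin $\gamma'$ both proportional to $\beta$ with small enough hidden constants closes the loop, and then the ``top $\rho|V|$'' step (as in Lemma~\ref{l:clique-const-err}) converts a bias of $1-O(\varepsilon)$ into a density of $1-O(\varepsilon/\rho)$ as required.
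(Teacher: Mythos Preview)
Your proposal is correct and follows essentially the same approach as the paper: you instantiate the generalized biased coin framework (Lemma~\ref{l:simulated-coin}) with groups indexed by $U$, coins indexed by $U'$, faultiness defined by $|\Gamma(U')|<\rho|V|$, verify the bias hypothesis via Lemmas~\ref{l:U'-approx} and~\ref{l:C-degree}, verify the simulated-toss requirement via Lemma~\ref{l:bias-V'}, and finish with an exact density check to make the algorithm Las Vegas. One small slip: in your final runtime you wrote $2^{O(1/(\rho^2\varepsilon^2))}$, but your own computation of $g=2^{|U|}=2^{O(1/(\rho\varepsilon^2))}$ was right and matches the theorem's $2^{O(1/(\varepsilon^2\rho))}$.
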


The remainder of the section constructs an oblivious verifier for \textsc{Find-Approximate-Clique} and uses it to prove the second part of Theorem~\ref{t:clique} (a deterministic algorithm).
First we describe the sketch and its properties, then we devise an oblivious verifier for \textsc{Clique-Coin-Toss}, and finally we describe the verifier for \textsc{Find-Approximate-Clique}.

\begin{algbox}{6.5in}{amplified-approx-clique}{An algorithm for finding an approximate clique in a graph $G=(V,E)$ that contains a clique on $\rho\card{V}$ vertices. The error probability of the algorithm is exponentially small in $\card{V}$.}
	\Procname{\textsc{Find-Approximate-Clique}$(G=(V,E),\rho,\varepsilon)$}
	\li Set $u= \ceil{100/(\varepsilon^2\rho)}$.
	\li Set $i_f= \log((\card{V}+u)/\varepsilon^2)+\Theta(\log\log((\card{V}+u)/\varepsilon))$;\;\; $\beta = \varepsilon/i_f$.
	\li Set $i_0= \log(u/\beta^2)+\Theta(1)$.
	\li Sample $U\subseteq V$, $\card{U} = u$. 
	\li\label{s:loop-begin} \For $i=i_0,i_0+1,\ldots, i_f$ \Do 
		\li Set $k = 2^i$.
		\li \For all $U'\subseteq U$, $\card{U'}\geq(\rho/2)\card{U}$ \Do
					\li \textsc{Clique-Coin-Toss}$(G,U',\rho,\beta^2 k,\gamma = \beta/100)$. If fails, skip this $U'$.
		\End
		\li\label{s:loop-end} If the fraction of heads is less than $1- 2\varepsilon - i\beta$ for all (non-skipped) $U'$, restart.
	\End
	\li \Return set $S_{U'}$ of density at least $1-3\varepsilon/\rho$ if such exists.
\end{algbox}

\subsection{A Sketch for Approximate Clique}

The sketch for a given $G$ contains, for some carefully chosen set $R$ of $\poly(\log\card{V},1/\varepsilon,1/\rho)$ vertices, the bipartite graph $G_R=(R,V,E_R)$ that contains all the edges of $G$ that at least one of their endpoints falls in $R$. 
The set $R$ is chosen so it allows the verifier to estimate the $f_v$'s corresponding to different sets $U'\subseteq V$. Note that the size of the sketch is $\card{R}\card{V}$. 

Let $U'\subseteq U$.
For every $v\in V$ we denote by $f_v$ the fraction of vertices in $\Gamma(U')$ that neighbor $v$. For $V'\subseteq V$, let $S_{U',V'}\subseteq V'\cap \Gamma(U')$ denote the $\rho\card{V'}$ elements $v\in V'\cap \Gamma(U')$ with largest $f_v$ (pad with dummy vertices with $0$ neighbors if needed).
Let $bias_{U'}^{V'} \defeq (1/\rho\card{V'})\sum_{v\in S_{U',V'}}f_v$.
For $v\in V$ let $\tilde{f}_v$ denote the fraction of $\Gamma(U')\cap R$ vertices that neighbor $v$ among all vertices in $\Gamma(U') \cap R$.
For $V'\subseteq V$, let $\tilde{S}_{U',V'}$ be the $\rho\card{V'}$ vertices $v\in V'$ with largest $\tilde{f}_v$ (pad with dummy vertices with $0$ neighbors if needed). Let $\tilde{bias}_{U'}^{V'} \defeq (1/\rho\card{V'})\sum_{v\in \tilde{S}_{U',V'}}\tilde{f}_v$.

In the lemma we use $u,\rho,\gamma$ from \textsc{Find-Approximate-Clique} in Figure~\ref{alg:amplified-approx-clique}.

\begin{lemma}[Sketch]\label{l:clique-sketch}
There exists $R\subseteq V$, $\card{R} = O(u\log \card{V}/\rho\gamma^2)$, such that for every $U'\subseteq V$,  $\card{U'}\leq u$,
\begin{enumerate}
\item 
If 
$\card{\Gamma(U')}\geq \rho\card{V}$, then $\card{R\cap \Gamma(U')}\geq (1-\gamma)\rho\card{R}$, whereas
if
$\card{\Gamma(U')}< (1-2\gamma)\rho\card{V}$, then $\card{R\cap \Gamma(U')}< (1-\gamma)\rho\card{R}$.

\item Suppose that $\card{\Gamma(U')}\geq (1-2\gamma)\rho\card{V}$. Then, 
for every $v\in V$, we have $\card{\tilde{f}_v - f_v}\leq \gamma$.

\item Suppose that $\card{\Gamma(U')}\geq (1-2\gamma)\rho\card{V}$. Then, $\card{bias_{U'}^{R}-bias_{U'}}\leq 7\gamma$.
\end{enumerate}
\end{lemma}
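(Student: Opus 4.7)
I will apply the probabilistic method: sample $R\subseteq V$ uniformly at random with $\card{R}=cu\log\card{V}/(\rho\gamma^2)$ for a sufficiently large constant $c$, and argue that with positive probability the resulting $R$ satisfies all three properties simultaneously for every $U'\subseteq V$ with $\card{U'}\leq u$. Since the number of such $U'$ is at most $\card{V}^u$, it suffices to bound the failure probability for each fixed $U'$ (and each fixed $v\in V$ in part 2) by $\card{V}^{-\Omega(u)}$, so that the union bound over the $\card{V}^{u+O(1)}$ bad events remains strictly below $1$.

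For part 1, write $p=\card{\Gamma(U')}/\card{V}$; then $\card{R\cap\Gamma(U')}$ is a sum of $\card{R}$ indicators with expectation $p\card{R}$. A multiplicative Chernoff bound yields that this deviates from $p\card{R}$ by more than $(\gamma/2)\rho\card{R}$ only with probability $\exp(-\Omega(\gamma^2\rho\card{R}))=\card{V}^{-\Omega(u)}$. In the regime $p\geq\rho$ this gives $\card{R\cap\Gamma(U')}\geq(1-\gamma)\rho\card{R}$, while in the regime $p<(1-2\gamma)\rho$ it gives $\card{R\cap\Gamma(U')}<(1-\gamma)\rho\card{R}$, as required.

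For part 2, assume $\card{\Gamma(U')}\geq(1-2\gamma)\rho\card{V}$ and condition on the event of part 1, which (by the same Chernoff bound) also forces $\card{R\cap\Gamma(U')}=\Omega(\rho\card{R})$. Since $R$ is uniform in $V$, conditional on its size $R\cap\Gamma(U')$ is a uniform random subset of $\Gamma(U')$, so by Hoeffding's inequality for sampling without replacement the fraction $\tilde{f}_v$ of $R\cap\Gamma(U')$ that neighbors $v$ deviates from the true fraction $f_v$ by at most $\gamma$ except with probability $\exp(-\Omega(\gamma^2\rho\card{R}))=\card{V}^{-\Omega(u)}$. A union bound over the $\card{V}$ choices of $v$ and the $\card{V}^u$ choices of $U'$ preserves the $\card{V}^{-\Omega(u)}$ bound. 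Part 3 follows immediately from Lemma~\ref{l:bias-V'} applied with $V'=R$ and $\gamma'=2\gamma$: that lemma's failure probability is exponentially small in $\rho\gamma^2\card{R}=\Omega(u\log\card{V})$, and its conclusion gives $\card{bias_{U'}^{R}-bias_{U'}}\leq 3\gamma+2\gamma'=7\gamma$ after union-bounding over $U'$.

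The main obstacle is handling part 2 cleanly: the ratio $\tilde{f}_v=X/Y$ of two concentrated quantities is not directly amenable to additive control at accuracy $\gamma$ by applying Hoeffding separately to $X$ and $Y$ (that route forces $\card{R}$ to scale like $1/\rho^2$ rather than the claimed $1/\rho$). Conditioning on the size of $R\cap\Gamma(U')$ and then invoking hypergeometric concentration on the single quantity $\tilde{f}_v$ circumvents this issue and yields the right quantitative bound.
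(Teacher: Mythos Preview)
Your proposal is correct and follows essentially the same approach as the paper: choose $R$ uniformly at random, apply multiplicative Chernoff bounds for part~1, a Chernoff/Hoeffding bound for part~2, invoke Lemma~\ref{l:bias-V'} with $V'=R$ and $\gamma'=2\gamma$ for part~3, and finish with a union bound over the at most $\card{V}^u$ choices of $U'$ (and the $\card{V}$ choices of $v$ in part~2). Your treatment of part~2---conditioning on $\card{R\cap\Gamma(U')}$ and then applying hypergeometric concentration to the single ratio $\tilde{f}_v$---is a bit more explicit than the paper's terse ``by a Chernoff bound'', but this is a matter of exposition rather than a genuinely different route.
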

\begin{proof}
Pick $R\subseteq V$ uniformly at random. Let $U'\subseteq U$, $\card{U'}\leq u$. By a multiplicative Chernoff bound,
if 
$\card{\Gamma(U')}\geq \rho\card{V}$, then $\card{R\cap \Gamma(U')}\geq (1-\gamma)\rho\card{R}$, except with probability exponentially small in $\rho\gamma^2\card{R}$.
If $\card{\Gamma(U')}< (1-2\gamma)\rho\card{V}$, then $\card{R\cap \Gamma(U')}\leq (1-\gamma)\rho\card{R}$ except with probability exponentially small in $\rho\gamma^2\card{R}$. 

Suppose that $\card{\Gamma(U')}\geq (1-2\gamma)\rho\card{V}$.
Let $v\in V$. By a multiplicative Chernoff bound, except with probability exponentially small in $\rho\gamma^2\card{R}$, we have $\card{\Gamma(U')\cap R}\geq (1-3\gamma)\card{R}$.
By a Chernoff bound, except with probability exponentially small in $\rho\gamma^2\card{R}$, we have $$\card{\tilde{f}_v - f_v}\leq \gamma.$$ 
By a union bound over all $v$ and by the choice of $\card{R}$, the last inequality holds for all $v\in V$ except with probability 
exponentially small in $\rho\gamma^2\card{R}$.

By Lemma~\ref{l:bias-V'}, if $\card{\Gamma(U')}\geq (1-2\gamma)\rho\card{V}$, except with probability exponentially small in $\rho\gamma^2\card{R}$, we have $$\card{bias_{U'}^{R} - bias_{U'}} \leq 7\gamma.$$ 

Since there are less than $\card{V}^{u}$ choices for $U'$, it follows from a union bound that there exists $R$ for which all three items hold for all $U'\subseteq V$, $\card{U'}\leq u$.
\end{proof}

\begin{lemma}\label{l:density-approx} 
Suppose that $\card{\Gamma(U')}\geq (1-2\gamma)\rho\card{V}$.
For any $V'\subseteq V$,
$$\card{\tilde{bias}_{U'}^{V'} - bias_{U'}^{V'}}\leq 2\gamma.$$   
\end{lemma}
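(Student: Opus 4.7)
The plan is to reduce the bound to the pointwise sketch approximation $\card{\tilde f_v - f_v}\le\gamma$ from Lemma~\ref{l:clique-sketch}, item 2, which holds for every $v\in V$ under our hypothesis $\card{\Gamma(U')}\ge (1-2\gamma)\rho\card{V}$; the bound extends trivially to the padding dummy vertices, whose $f$ and $\tilde f$ are both $0$.

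Set $k=\rho\card{V'}$ and interpret both $S_{U',V'}$ and $\tilde S_{U',V'}$ as size-$k$ subsets drawn from a common family $\mathcal F$, namely the size-$k$ subsets of $V'\cap\Gamma(U')$ (with dummies padding any shortfall). Then $bias_{U'}^{V'}\cdot k=\max_{T\in\mathcal F}\sum_{v\in T}f_v$ and $\tilde{bias}_{U'}^{V'}\cdot k=\max_{T\in\mathcal F}\sum_{v\in T}\tilde f_v$. Once both optimizations sit over the same domain, a standard two-sided swap closes the gap.

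For the lower direction, $S_{U',V'}\in\mathcal F$ is a legal competitor for the $\tilde f$-max, so optimality of $\tilde S_{U',V'}$ followed by the pointwise bound gives
\[
\sum_{v\in\tilde S_{U',V'}}\tilde f_v \;\geq\; \sum_{v\in S_{U',V'}}\tilde f_v \;\geq\; \sum_{v\in S_{U',V'}}f_v - \gamma k;
\]
symmetrically, applying the pointwise bound first and then optimality of $S_{U',V'}$,
\[
\sum_{v\in\tilde S_{U',V'}}\tilde f_v \;\leq\; \sum_{v\in\tilde S_{U',V'}}f_v + \gamma k \;\leq\; \sum_{v\in S_{U',V'}}f_v + \gamma k.
\]
Dividing by $k$ gives $\card{\tilde{bias}_{U'}^{V'}-bias_{U'}^{V'}}\le\gamma$, comfortably inside the claimed $2\gamma$.

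The main obstacle is the second step, justifying that $\tilde S_{U',V'}$ lies in $\mathcal F$, i.e.\ inside $V'\cap\Gamma(U')$. Under a literal reading one might allow $\tilde S_{U',V'}$ to pull in $v\in V'\setminus\Gamma(U')$ whose $\tilde f_v$ happens to be large, in which case the upper-direction swap needs an extra step replacing such vertices by dummies without increasing the $f$-sum. I take the restricted reading consistent with the $S_{U',V'}$ convention used in the coin-tossing algorithm, which is the only reading under which the stated ``for any $V'\subseteq V$'' bound holds in worst case; the $2\gamma$ slack in the statement (versus the $\gamma$ we actually derive) absorbs any small accounting loss from this restriction.
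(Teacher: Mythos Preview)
Your swap argument—inserting each side's maximizer as a competitor in the other optimization—is a clean variant of the paper's approach, which instead pairs the vertices of $S_{U',V'}\setminus\tilde S_{U',V'}$ one-to-one with those of $\tilde S_{U',V'}\setminus S_{U',V'}$ and shows $|\tilde f_{v_2}-f_{v_1}|\le\gamma$ for each pair via $f_{v_1}\ge f_{v_2}$ and $\tilde f_{v_2}\ge\tilde f_{v_1}$. Both routes extract the same $\gamma$ bound from the pointwise estimate plus optimality; the pairing is just more explicit about where each $\gamma$ is spent.

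The obstacle you flag in your last paragraph is real, and your handling of it is the weak point. By the paper's definitions $\tilde S_{U',V'}$ is the $\tilde f$-top-$k$ over all of $V'$, not over $V'\cap\Gamma(U')$, so it need not lie in your family $\mathcal F$, and the upper inequality $\sum_{v\in\tilde S}f_v\le\sum_{v\in S}f_v$ is unjustified. Adopting a ``restricted reading'' of the definition is not a proof of the lemma as written, and the $2\gamma$ slack does not automatically absorb this—one can manufacture $V'$ containing a vertex outside $\Gamma(U')$ with large $f_v$ that makes the discrepancy exceed $2\gamma$. That said, the paper's own proof has exactly the same gap: the step $f_{v_1}\ge f_{v_2}$ requires $v_2$ to be a losing candidate for $S_{U',V'}$, hence $v_2\in V'\cap\Gamma(U')$, and nothing rules out $v_2\in V'\setminus\Gamma(U')$ with $f_{v_2}>f_{v_1}$. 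So you have correctly spotted a subtlety that the paper glosses over, rather than introduced a new error.
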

\begin{proof}
By Lemma~\ref{l:clique-sketch},
the contribution from $v\in V'$ in $\tilde{S}_{U',V'}\cap S_{U',V'}$ is at most $\gamma$ since $\card{\tilde{f}_v - f_v}\leq\gamma$. It remains to bound the contribution from other elements $v\in V'$ that are either in $\tilde{S}_{U',V'} - S_{U',V'}$ or in $S_{U',V'} - \tilde{S}_{U',V'}$. Pair those vertices arbitrarily, and consider a single pair $v_2 \in \tilde{S}_{U',V'} - S_{U',V'}$ and $v_1\in S_{U',V'} - \tilde{S}_{U',V'}$. We know that $f_{v_1} \geq f_{v_2}\geq \tilde{f}_{v_2} - \gamma$, so $\tilde{f}_{v_2} - f_{v_1}\leq\gamma$. Similarly, $\tilde{f}_{v_2}\geq \tilde{f}_{v_1}\geq f_{v_1} - \gamma$, so $f_{v_1} -\tilde{f}_{v_2} \leq \gamma$. In any case, $\card{f_{v_1} - \tilde{f}_{v_2}}\leq \gamma$. The triangle inequality implies the lemma.
\end{proof}

\begin{corollary}\label{c:bias-sketch}
For every $U'\subseteq V$, $\card{U'}\leq u$, either $\card{R\cap \Gamma(U')}<(1-2\gamma)\rho\card{V}$, or
$$\card{\tilde{bias}_{U'}^{R} - bias_{U'}}\leq 9\gamma.$$   
\end{corollary}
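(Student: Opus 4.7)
The plan is a direct combination of Lemma~\ref{l:clique-sketch} and Lemma~\ref{l:density-approx} via the triangle inequality; the only substantive point is handling the case split on how faulty the coin associated with $U'$ is. First I would split on whether $\card{\Gamma(U')}\geq (1-2\gamma)\rho\card{V}$. If this fails, the contrapositive of Lemma~\ref{l:clique-sketch}(1) shows that the sketch already sees $\card{R\cap \Gamma(U')}<(1-\gamma)\rho\card{R}$ (which I would read as the first disjunct in the corollary's conclusion, matching the sketch-side threshold actually produced by the lemma), and there is nothing further to prove.

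In the remaining case, $\card{\Gamma(U')}\geq (1-2\gamma)\rho\card{V}$ holds, which is exactly the hypothesis required by both Lemma~\ref{l:clique-sketch}(3) and Lemma~\ref{l:density-approx}. Instantiating Lemma~\ref{l:density-approx} with $V'=R$ gives $\card{\tilde{bias}_{U'}^{R} - bias_{U'}^{R}}\leq 2\gamma$, and Lemma~\ref{l:clique-sketch}(3) gives $\card{bias_{U'}^{R} - bias_{U'}}\leq 7\gamma$. Adding these via the triangle inequality yields $\card{\tilde{bias}_{U'}^{R} - bias_{U'}}\leq 9\gamma$, as desired. There is no real obstacle here beyond bookkeeping the constants; in particular, all probabilistic work has already been absorbed into the construction of $R$ in Lemma~\ref{l:clique-sketch}, so the corollary is deterministic once that sketch is fixed.
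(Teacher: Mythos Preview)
Your proposal is correct and is essentially the paper's own argument: both assume the non-trivial case and combine Lemma~\ref{l:clique-sketch}(3) (the $7\gamma$ bound) with Lemma~\ref{l:density-approx} instantiated at $V'=R$ (the $2\gamma$ bound) via the triangle inequality. Your explicit case split on $\card{\Gamma(U')}\geq (1-2\gamma)\rho\card{V}$ and your remark about the threshold mismatch in the first disjunct are a bit more careful than the paper's terse ``If $\card{R\cap \Gamma(U')}\geq(1-2\gamma)\rho\card{V}$, by Lemma~\ref{l:clique-sketch}\ldots'', but the substance is the same.
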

\begin{proof} If $\card{R\cap \Gamma(U')}\geq(1-2\gamma)\rho\card{V}$, by Lemma~\ref{l:clique-sketch}, we have 
$\card{bias_{U'}^{R} - bias_{U'}}\leq 7\gamma$. 
By Lemma~\ref{l:density-approx}, $\card{\tilde{bias}_{U'}^{R} - bias_{U'}^{R}}\leq 2\gamma$. 
The claim follows.
\end{proof}

Interestingly, our construction of the sketch is randomized, yet it will allow us to obtain a deterministic algorithm. The reason is that we only need the existence of a sketch describing an input so we can take a union bound over all possible sketches.

\subsection{Obliviously Checking $V'$}

Next we show how we can check the sample $V'$ of \textsc{Clique-Coin-Toss} using the sketch. 
The oblivious verifier receives a sketch $G_R$ of the graph $G$, the rest of the input of \textsc{Clique-Coin-Toss} and the randomness $V'$ used by the algorithm. 
The verifier accepts iff $bias_{U'}^{V'}$ is approximately $bias_{U'}$. It uses the sketch to approximate $bias_{U'}$ via $\tilde{bias}_{U'}^R$. 
It is described in Figure~\ref{alg:verifier-coin-toss}.

\begin{algbox}{6.5in}{verifier-coin-toss}{
An oblivious verifier for \textsc{Clique-Coin-Toss}.
}
\Procname{\textsc{Oblivious-Verifier-Clique-Coin-Toss}$(G_R,U',\rho,k,\gamma,V')$}
		\li If $\card{R\cap \Gamma(U')}<(1-\gamma)\rho\card{R}$  \Then
		\li\label{s:clique-toss-verify-reject}	Fail.
		\End
		\li For all $v\in V'$ compute the fraction $\tilde{f}_v$ of vertices in $\Gamma(U')\cap R$ that neighbor $v$.
		\li Let $\tilde{S}_{U',V'}$ be the $\rho\card{V'}$ vertices $v\in V'$ with largest $\tilde{f}_v$.
		\li Let $\tilde{bias}_{U'}^{V'} \defeq (1/\rho\card{V'})\sum_{v\in \tilde{S}_{U',V'}}\tilde{f}_v$.
	\li Accept iff $\card{\tilde{bias}_{U'}^{V'}-\tilde{bias}_{U'}^R}\leq 18\gamma$.
\end{algbox}
Recall that by Lemma~\ref{l:clique-sketch}, if the coin is non-faulty and 
$\card{\Gamma(U')}\geq \rho\card{V}$, then $\card{R\cap \Gamma(U')}\geq (1-\gamma)\rho\card{R}$, so \textsc{Oblivious-Verifier-Clique-Coin-Toss} does not fail in Step~\ref{s:clique-toss-verify-reject}.
Moreover, if
$\card{\Gamma(U')}< (1-2\gamma)\rho\card{V}$, then $\card{R\cap \Gamma(U')}< (1-\gamma)\rho\card{R}$, and \textsc{Oblivious-Verifier-Clique-Coin-Toss} necessarily fails.

\begin{lemma}\label{l:OVCCT} The following hold:
\begin{enumerate}
\item If \textsc{Oblivious-Verifier-Clique-Coin-Toss} accepts then $\card{\Gamma(U')} \geq (1-2\gamma)\rho\card{V}$ and $V'$ sampled by \textsc{Clique-Coin-Toss} satisfies $\card{bias_{U'}^{V'}-bias_{U'}}\leq 29\gamma$.
 
\item If $\card{R\cap \Gamma(U')}\geq (1-\gamma)\rho\card{V}$, then the probability that \textsc{Oblivious-Verifier-Clique-Coin-Toss} rejects is exponentially small in $k$.
\end{enumerate}
\end{lemma}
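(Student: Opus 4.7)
The plan is to prove both parts by a short triangle-inequality chain through the four quantities $bias_{U'}$, $bias_{U'}^{V'}$, $\tilde{bias}_{U'}^{R}$, $\tilde{bias}_{U'}^{V'}$, using the three approximation results already established: Lemma~\ref{l:bias-V'} (sample $V'$ approximates the truth), Lemma~\ref{l:density-approx} (sketched bias approximates true bias on any set), and Corollary~\ref{c:bias-sketch} (sketched bias over $R$ approximates $bias_{U'}$).

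For Part 1 (soundness), assume the verifier accepts. Then it did not fail at Step~\ref{s:clique-toss-verify-reject}, so $\card{R\cap\Gamma(U')}\geq(1-\gamma)\rho\card{R}$; by the contrapositive of the second bullet of item~1 of Lemma~\ref{l:clique-sketch}, this forces $\card{\Gamma(U')}\geq(1-2\gamma)\rho\card{V}$, giving the first assertion. With this hypothesis in hand, Corollary~\ref{c:bias-sketch} yields $\card{\tilde{bias}_{U'}^{R}-bias_{U'}}\leq 9\gamma$, and Lemma~\ref{l:density-approx} yields $\card{\tilde{bias}_{U'}^{V'}-bias_{U'}^{V'}}\leq 2\gamma$. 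Together with the acceptance condition $\card{\tilde{bias}_{U'}^{V'}-\tilde{bias}_{U'}^{R}}\leq 18\gamma$, the triangle inequality delivers
\[
\card{bias_{U'}^{V'}-bias_{U'}}\;\leq\;2\gamma+18\gamma+9\gamma\;=\;29\gamma.
\]

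For Part 2 (completeness), the assumption $\card{R\cap\Gamma(U')}\geq(1-\gamma)\rho\card{R}$ (which I read as a typo for $\card{R}$ in the statement) immediately prevents failure at Step~\ref{s:clique-toss-verify-reject}, and also ensures $\card{\Gamma(U')}\geq(1-2\gamma)\rho\card{V}$ via Lemma~\ref{l:clique-sketch}. We then need to show $\card{\tilde{bias}_{U'}^{V'}-\tilde{bias}_{U'}^{R}}\leq 18\gamma$ except with probability exponentially small in $k$. Chaining Lemma~\ref{l:density-approx} ($\leq 2\gamma$), Lemma~\ref{l:bias-V'} applied with $\gamma'=2\gamma$ ($\leq 3\gamma+2\gamma'=7\gamma$), and Corollary~\ref{c:bias-sketch} ($\leq 9\gamma$), the triangle inequality gives exactly the desired $2+7+9=18\gamma$ bound. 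The only probabilistic ingredient in this chain is Lemma~\ref{l:bias-V'}, whose failure probability is exponentially small in $\rho\gamma^2\card{V'}$; because \textsc{Clique-Coin-Toss} draws $\card{V'}=\lceil k/(\rho\gamma^2)\rceil$, this is exponentially small in $k$, as required.

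No step should be truly difficult: the whole argument is bookkeeping of constants across a four-term triangle inequality, and all concentration was absorbed into the three prior results. The main thing to be careful about is that each use of Lemma~\ref{l:bias-V'} and Corollary~\ref{c:bias-sketch} requires the precondition $\card{\Gamma(U')}\geq(1-2\gamma)\rho\card{V}$, which in Part~1 comes from the (successful) sketch-based test and in Part~2 comes from the hypothesis on $\card{R\cap\Gamma(U')}$. The constants $29\gamma$ and $18\gamma$ in the lemma statement are set precisely so that the two directions close.
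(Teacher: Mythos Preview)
Your proposal is correct and follows essentially the same route as the paper's proof: both parts are the same triangle-inequality chain through $bias_{U'}$, $bias_{U'}^{V'}$, $\tilde{bias}_{U'}^{V'}$, $\tilde{bias}_{U'}^{R}$ using Lemma~\ref{l:density-approx}, Lemma~\ref{l:bias-V'} (with $\gamma'=2\gamma$), and Corollary~\ref{c:bias-sketch}, yielding the same $2+18+9=29$ and $2+7+9=18$ decompositions. You are if anything slightly more careful than the paper in explicitly deriving the precondition $\card{\Gamma(U')}\geq(1-2\gamma)\rho\card{V}$ from the sketch test (and in flagging the $\card{V}$ vs.\ $\card{R}$ typo).
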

\begin{proof}
Toward the second item, suppose that $\card{R\cap \Gamma(U')}\geq (1-\gamma)\rho\card{V}$. By Lemma~\ref{l:density-approx}, we have $\card{\tilde{bias}_{U'}^{V'} - bias_{U'}^{V'}}\leq 2\gamma$. 
By Corollary~\ref{c:bias-sketch}, we have $\card{\tilde{bias}_{U'}^{R} - bias_{U'}}\leq 9\gamma$. By Lemma~\ref{l:bias-V'}, we have $\card{bias_{U'}^{V'} - bias_{U'}}\leq 7\gamma$, 
except with probability exponentially small in $k$.
The low probability of rejection follows. 

Toward the first item, suppose that \textsc{Oblivious-Verifier-Clique-Coin-Toss} accepts, so $\card{R\cap \Gamma(U')}\geq (1-\gamma)\rho\card{R}$ and $\card{\tilde{bias}_{U'}^{V'}-\tilde{bias}_{U'}^R}\leq 18\gamma$. By Lemma~\ref{l:density-approx}, we have $\card{\tilde{bias}_{U'}^{V'} - bias_{U'}^{V'}}\leq 2\gamma$. 
By Corollary~\ref{c:bias-sketch}, we have $\card{\tilde{bias}_{U'}^{R} - bias_{U'}}\leq 9\gamma$.
Thus, we have $\card{bias_{U'}^{V'} - bias_{U'}}\leq 29\gamma$.

\end{proof}

\subsection{An Oblivious Verifier for Approximate Clique}\label{s:clique-verifier}

The verifier is unable to follow the execution of \textsc{Find-Approximate-Clique} nor compute its output, since it can't tell exactly how many heads \textsc{Clique-Coin-Toss} yields. The verifier can be probably approximately correct about the fraction of heads, but it is likely that during the execution of \textsc{Find-Approximate-Clique} some of its predictions would be false, thereby changing the course of execution.
It may seem that under these conditions the verifier cannot check the randomness of the algorithm, but this is not so. The key idea is that the verifier is not limited computationally and can try all possible executions of the algorithm (i.e., the outcomes of all possible restart decisions). 

\begin{rem} The algorithm \textsc{Find-Approximate-Clique} uses its randomness as a stream of random bits, and uses independent randomness between restarts. The oblivious verifier for a single execution simulates a possible run of the algorithm and follows the algorithm in its use of the randomness. Different executions use the same randomness. 
\end{rem}

The verifier maintains a set $\mathcal{G}$ of possible input graphs $G$ that are consistent with the execution up to this step (initially $\mathcal{G}$ contains all the input graphs that are consistent with the sketch). If the set of inputs becomes empty, then the execution is designated {\em infeasible}. Otherwise the execution is designated {\em feasible}.
Additionally, the verifier maintains $counter$ such that the probability of the execution is at most exponentially small in $counter$ (initially, $counter=0$). If $counter$ becomes too large, the verifier rejects the execution. If none of the feasible executions get rejected, the verifier accepts. The verifier for a single execution (a single fixing of guesses) is described in Figure~\ref{alg:verifier-approx-clique}. Note that we use the shorthand \textsc{OVCCT} for \textsc{Oblivious-Verifier-Clique-Coin-Toss} and that the verifier uses the parameters $i_0,i_f,\beta$ of the algorithm. The final verifier is described in Figure~\ref{alg:verifier-final-clique}.

\begin{algbox}{6.5in}{verifier-approx-clique}{An oblivious verifier for a single execution of \textsc{Find-Approximate-Clique} (an execution is defined by the outcomes of guesses).
We use the shorthand \textsc{OVCCT} for \textsc{Oblivious-Verifier-Clique-Coin-Toss}.}
	\Procname{\textsc{Oblivious-Verifier-Clique-Execution}$(\mathcal{G},G_R=(R,V,E_R),\rho,\varepsilon,r,counter)$} 
	\li \If $\mathcal{G}=\emptyset$ \Then
			\li \Return infeasible.
	\End
	\li \If $counter > \card{V}$ \Then
			\li \Return reject.
	\End
	\li Extract from $r$ the sample $U\subseteq V$ of the algorithm.
	\li \If $\tilde{bias}_{U'}^R <1-2\varepsilon - \beta/2 + 9\gamma$ for all $U'$ such that $\card{R\cap \Gamma(U')} \geq (1-\gamma)\rho\card{V}$ \Then
		\li\label{s:+1}$counter \leftarrow counter+ 25/\varepsilon$.
	\End
	\li\For $i=i_0,i_0+1,\ldots,i_f$ \Do
		\li Set $k = 2^i$.
		\li \For all $U'\subseteq U$, $\card{U'}\geq (\rho/2)\card{U}$ \Do
				\li Extract from $r$ the randomness $V'$ for \textsc{Clique-Coin-Toss}.
				\li \textsc{OVCCT}$(G_R,U',\rho,\beta^2 k,\gamma=\beta/100,V')$
		\End
		\li \If $\exists U', \card{R\cap \Gamma(U')}\geq (1-\gamma)\rho\card{V}$ such that \textsc{OVCCT} rejects \Then
			\li\label{s:+k}$counter \leftarrow counter+\beta^2 k - u$. 
				\End
		\li\label{s:G-update}Guess if $\max\sett{bias_{U'}^{V'}}{\card{\Gamma(U')}\geq\rho\card{V}}< 1-2\varepsilon - i\beta$ and update $\mathcal{G}$ accordingly.
		\li \If guessed true \Then
				\li Restart maintaining $\mathcal{G}$ and $counter$.
		\End
	\End
	\li\label{clique:check}\Return accept iff $\exists U'\subseteq U$, $\card{R\cap \Gamma(U')}\geq (1-\gamma)\rho\card{V}$, such that $\tilde{bias}^R_{U'}\geq 1 - 3\varepsilon - 9\gamma$.
\end{algbox}

\begin{algbox}{6.5in}{verifier-final-clique}{The final oblivious verifier for approximate clique.}
	\Procname{\textsc{Oblivious-Verifier-Approximate-Clique}$(G_R=(R,V,E_R),\rho,\varepsilon,r)$} 
	\li Let $\mathcal{G}$ contain all the graphs that are consistent with $G_R$.
	\li Try all guesses in \textsc{Oblivious-Verifier-Clique-Execution}$(\mathcal{G},G_R,\rho,\varepsilon,r,0)$.
	\li Accept iff all feasible executions accept.
\end{algbox}

Next we analyze \textsc{Oblivious-Verifier-Approximate-Clique}.

\begin{lemma}
If \textsc{Oblivious-Verifier-Approximate-Clique} accepts on a sketch of a graph $G$ and randomness $r$, then \textsc{Find-Approximate-Clique}\begin{footnote}{Note that we argue about a version of \textsc{Find-Approximate-Clique} that checks a relaxed condition on the density such that the condition is satisfied by a coin of bias $1- 3\varepsilon-18\gamma$ (as opposed to the version of Figure~\ref{alg:amplified-approx-clique}).}\end{footnote} necessarily finds $U'\subseteq V$ with $bias_{U'}\geq 1- 3\varepsilon-18\gamma$ when invoked on $G$ and $r$ with the same parameters $\rho$ and $\varepsilon$.
\end{lemma}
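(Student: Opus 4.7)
Plan: The key observation driving my plan is that \textsc{Oblivious-Verifier-Approximate-Clique} accepts only if \emph{every} feasible execution of \textsc{Oblivious-Verifier-Clique-Execution} accepts, so I only need to exhibit one useful feasible execution. I would take the \emph{canonical} execution for $G$, in which every guess in Step~\ref{s:G-update} is taken to reflect the actual truth about $G$. Since each guess is correct, $G$ remains in $\mathcal{G}$ throughout, so this execution is feasible and, by hypothesis, is accepted; in particular it avoids both the $\mathcal{G}=\emptyset$ cutoff and the $counter>\card{V}$ cutoff, and ultimately reaches the final check in Step~\ref{clique:check}.

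Next I would couple this canonical execution with the actual run of \textsc{Find-Approximate-Clique} on $(G,r)$. Both extract the same $U$ from $r$; and because \textsc{Clique-Coin-Toss} returns the deterministic value $bias_{U'}^{V'}$ as its reported fraction of heads, the algorithm's restart condition ``fraction of heads is less than $1-2\varepsilon-i\beta$ for all non-skipped $U'$'' coincides bit-for-bit with the canonical guess $\max\{bias_{U'}^{V'}:\card{\Gamma(U')}\geq\rho\card{V}\}<1-2\varepsilon-i\beta$. Hence the algorithm restarts in phase~$i$ exactly when the canonical execution does, so the algorithm reaches its return step on the same $U$ where the canonical verifier execution reaches Step~\ref{clique:check}. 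From the acceptance in that step I obtain some $U'\subseteq U$ satisfying the sketch-side density condition together with $\tilde{bias}^R_{U'}\geq 1-3\varepsilon-9\gamma$; the contrapositive of Lemma~\ref{l:clique-sketch}(1) upgrades the former to $\card{\Gamma(U')}\geq (1-2\gamma)\rho\card{V}$, and Corollary~\ref{c:bias-sketch} then yields $bias_{U'}\geq 1-3\varepsilon-18\gamma$. Since the algorithm iterates over all $U'\subseteq U$ at its return step, it finds and returns this $U'$ under the relaxed density check referred to in the footnote.

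The point I expect to require the most care is the small gap between the sketch-derived $\card{\Gamma(U')}\geq (1-2\gamma)\rho\card{V}$ and the strict $\card{\Gamma(U')}\geq\rho\card{V}$ that \textsc{Clique-Coin-Toss} uses to skip a $U'$, and that is also what makes a coin non-faulty in the original sense. For ``gray-zone'' $U'$ whose $\card{\Gamma(U')}$ lies between these thresholds, the verifier's and algorithm's notions of ``non-skipped'' may diverge, and I would need to check that the counter increments in Steps~\ref{s:+1} and~\ref{s:+k} are calibrated so that any such disagreement either cannot arise in a canonical execution satisfying $counter\leq\card{V}$, or is absorbed by the $18\gamma$ slack in the final bound. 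Everything else is routine coupling bookkeeping.
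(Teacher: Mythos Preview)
Your proposal is correct and follows essentially the same approach as the paper: pick the execution whose guesses match the actual run of \textsc{Find-Approximate-Clique} on $(G,r)$, observe that acceptance forces this execution to reach Step~\ref{clique:check} and satisfy $\tilde{bias}^R_{U'}\geq 1-3\varepsilon-9\gamma$ for some $U'$, and then invoke Corollary~\ref{c:bias-sketch} to get $bias_{U'}\geq 1-3\varepsilon-18\gamma$. The paper's proof is just those three sentences; your coupling argument simply spells out why the canonical execution tracks the algorithm step for step.

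One remark on the concern you flag: the gray-zone issue does not arise in the coupling at all. The verifier's guess in Step~\ref{s:G-update} is literally $\max\{bias_{U'}^{V'}:\card{\Gamma(U')}\geq\rho\card{V}\}<1-2\varepsilon-i\beta$, and \textsc{Clique-Coin-Toss} skips $U'$ exactly when $\card{\Gamma(U')}<\rho\card{V}$, so the two restart conditions use the \emph{same} threshold $\rho\card{V}$ and are identical. The weaker threshold $(1-2\gamma)\rho\card{V}$ only enters when you translate the sketch-side acceptance condition back to a statement about $bias_{U'}$ at the very end, and there the $18\gamma$ slack is precisely what Corollary~\ref{c:bias-sketch} provides; no additional calibration of the counter increments is needed for this lemma.
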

\begin{proof}
If \textsc{Oblivious-Verifier-Approximate-Clique} accepts with $G$'s sketch and randomness $r$, then, in particular, the execution of \textsc{Oblivious-Verifier-Clique-Execution} with the guesses that correspond to the run of \textsc{Find-Approximate-Clique} on $G$ and $r$ results in $U$ such that for some $U'\subseteq U$, $\card{R\cap \Gamma(U')}\geq (1-\gamma)\rho\card{V}$, it holds that $\tilde{bias}_{U'}^R\geq 1 -3\varepsilon - 9\gamma$. By Corollary~\ref{c:bias-sketch}, $bias_{U'}\geq 1- 3\varepsilon-18\gamma$.
\end{proof}

Next we argue that \textsc{Oblivious-Verifier-Approximate-Clique} rejects with probability exponentially small in $\card{V}$.

\begin{lemma}\label{l:clique-soundness} The following hold:
\begin{itemize}
\item For any guesses, the probability that \textsc{Oblivious-Verifier-Clique-Execution} rejects is exponentially small in $\card{V}$.

\item \textsc{Oblivious-Verifier-Clique-Execution} makes a number of guesses that is sufficiently smaller than $\card{V}$ (the ratio between the number of guesses and $\card{V}$ can be made arbitrarily small by lowering $\varepsilon$ by a constant factor and by increasing $i_0$ by a constant).
\end{itemize}
\end{lemma}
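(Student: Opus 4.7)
The two items are proved together because the round bound from item~2 is what licences the Chernoff/union step in item~1.

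For item~1, I view the counter as a sum of independent increments. There are two kinds of increment events: line~\ref{s:+1} adds $25/\varepsilon$ whenever no $U'$ in the sketch has $\tilde{bias}^R_{U'}\geq 1-2\varepsilon-\beta/2+9\gamma$, and line~\ref{s:+k} adds $\beta^2 k - u$ whenever \textsc{OVCCT} rejects on some $U'$ with $\card{R\cap\Gamma(U')}\geq(1-\gamma)\rho\card{V}$. The crucial quantitative claim is that each event has probability $\exp(-\Omega(\text{increment}))$ over the fresh randomness consumed in that step: the first follows from Lemmas~\ref{l:U'-approx}--\ref{l:C-degree} transferred to the sketch via Corollary~\ref{c:bias-sketch}, and the second from part~2 of Lemma~\ref{l:OVCCT} together with a union bound over the $\leq 2^u$ candidate $U'$'s (absorbed by the $-u$ term in the increment). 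Because the algorithm draws fresh randomness per round (for $U$) and per iteration (for $V'$), and the verifier follows suit, the events are mutually independent across steps. An MGF calculation in the style of the proof of Lemma~\ref{l:simulated-coin} then shows that the probability the sum of increments ever exceeds $\card{V}$ is $\exp(-\Omega(\card{V}))$, provided the total number of increment opportunities along a path is $O(\card{V})$, a fact supplied by item~2.

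For item~2, the number of guesses along any execution path is $R\cdot(i_f - i_0 + 1)$, where $R$ is the number of rounds and $i_f - i_0 + 1 = O(\log((\card{V}+u)/\varepsilon^2))$. I classify each round by whether line~\ref{s:+1} fires. Rounds in which it fires each add $25/\varepsilon$ to the counter, so any non-rejecting path contains at most $O(\card{V}\varepsilon)$ such rounds. In a round in which line~\ref{s:+1} does not fire, the sketch carries some $U'$ with $\tilde{bias}^R_{U'}\geq 1-2\varepsilon-\beta/2+9\gamma$, which already satisfies the acceptance condition at line~\ref{clique:check}; hence the canonical (``consistent'') guess sequence terminates the round at that check without a restart, and any restart branch in such a round either triggers an \textsc{OVCCT} rejection (a counter increment already accounted for in item~1) or makes a guess inconsistent with every graph in $\mathcal{G}$, emptying $\mathcal{G}$ and rendering the branch infeasible. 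Combining the two cases gives $R = O(\card{V}\varepsilon)$ on any feasible non-rejecting path and a total of $\tilde{O}(\card{V}\varepsilon)$ guesses; halving $\varepsilon$ halves this ratio up to a logarithmic factor, while the accompanying constant increase in $i_0$ preserves the hypothesis $\log g \ll \beta^2 2^{i_0}$ needed in Lemma~\ref{l:simulated-coin}.

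The main obstacle is the ``non-trigger but restart'' case of item~2: bounding the contribution of branches where the sketch has a good $U'$ but the verifier nevertheless guesses a restart. The plan for this step is to exploit the monotone shrinkage of $\mathcal{G}$ along a single branch together with the observation that at iterations $i$ near $i_f$ the threshold $1-2\varepsilon-i\beta$ lies strictly below $\tilde{bias}^R_{U'}$ for the sketch's good $U'$. Since $\tilde{bias}^R$ is identical for every graph consistent with the sketch, any restart guess at such an iteration is inconsistent with all of $\mathcal{G}$ at once, so $\mathcal{G}$ empties and the branch is pruned within bounded depth, closing the argument without requiring any further counter increments.
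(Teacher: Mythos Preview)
Your treatment of item~2 has a genuine gap in the final paragraph. The restart guess at line~\ref{s:G-update} concerns $\max\{bias_{U'}^{V'}:\card{\Gamma(U')}\geq\rho\card{V}\}$, and both $bias_{U'}^{V'}$ and the domain condition $\card{\Gamma(U')}\geq\rho\card{V}$ depend on the full graph, not on the sketch alone. That $\tilde{bias}_{U'}^R$ is sketch-determined tells you nothing about $bias_{U'}^{V'}$ for an arbitrary graph sharing $G_R$: the approximation $\card{\tilde{bias}_{U'}^R-bias_{U'}}\leq 9\gamma$ in Corollary~\ref{c:bias-sketch} is a property of the particular graph for which $R$ was constructed, and the link between $bias_{U'}^{V'}$ and $bias_{U'}$ (Lemma~\ref{l:bias-V'}) is only a high-probability statement over $V'$, not a pointwise one for the specific $V'$ drawn from $r$. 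So a restart guess, even near $i_f$, can remain consistent with some graph in $\mathcal{G}$ whose $bias_{U'}^{V'}$ on this particular $V'$ happens to be small, and your pruning-by-emptying-$\mathcal{G}$ argument does not close.

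The paper takes a different and more direct route to item~2. It establishes as a separate invariant (Lemma~\ref{i:clique-3}) that in any feasible execution \emph{every} restart is accompanied by a counter increment of at least $\min\{25/\varepsilon,\ \beta^2 2^{i_0-1}-u\}$ in the current or previous phase. The proof picks any graph in the nonempty $\mathcal{G}$ and applies the biased-coin dichotomy of Lemma~\ref{l:simulated-coin} to that graph's run: either the first phase picked a low-bias group, in which case the sketch test at line~\ref{s:+1} must have fired, or some non-faulty coin deviated by $\beta/2$, in which case \textsc{OVCCT} must have rejected at line~\ref{s:+k}. Since $counter\leq\card{V}$ before rejection, this immediately caps the number of rounds (and hence guesses) by $\card{V}$ divided by the minimum increment---no simultaneous-inconsistency argument across all of $\mathcal{G}$ is needed. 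Your item~1 sketch also omits the second rejection mode, namely failure of the final check at line~\ref{clique:check}; the paper handles that separately by invoking the exponentially small error of the amplified algorithm from Section~\ref{s:clique-amplified}.
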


The correctness of the final oblivious verifier follows from a union bound over all possible choices of guesses. If the number of guesses is sufficiently smaller than $\card{V}$, the probability that \textsc{Oblivious-Verifier-Approximate-Clique} rejects is exponentially small in $\card{V}$. 

In order to show the two items above we show that three invariants hold. The invariants are in Lemmas~\ref{i:clique-1},~\ref{i:clique-2} and~\ref{i:clique-3}. 
\begin{lemma}\label{i:clique-1}
Throughout the execution of \textsc{Oblivious-Verifier-Clique-Execution}: $\mathcal{G}$ contains all the graphs that are consistent with the sketch and guesses so far.
\end{lemma}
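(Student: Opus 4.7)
The plan is to prove this invariant by straightforward induction on the number of steps of \textsc{Oblivious-Verifier-Clique-Execution} that have been executed so far. The base case corresponds to the initial call from \textsc{Oblivious-Verifier-Approximate-Clique} (Figure~\ref{alg:verifier-final-clique}), where $\mathcal{G}$ is explicitly set to be the collection of all graphs $G$ whose sketch equals $G_R$. At this moment no guesses have been made, so the invariant holds trivially.

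For the inductive step, I would inspect every line of \textsc{Oblivious-Verifier-Clique-Execution} and note that the only place $\mathcal{G}$ is modified is Step~\ref{s:G-update}, where the procedure guesses whether
\[
\max\sett{bias_{U'}^{V'}}{\card{\Gamma(U')}\geq\rho\card{V}}< 1-2\varepsilon - i\beta,
\]
and then filters $\mathcal{G}$ to retain exactly those graphs that are consistent with the guessed truth value. Since both $\Gamma(U')$ and $bias_{U'}^{V'}$ are deterministic functions of the underlying graph $G$ (given $U'$ and the already-extracted randomness $V'$ from $r$), the notion of a graph being consistent with a given guess is well-defined. By the inductive hypothesis, before this update $\mathcal{G}$ contained precisely the graphs consistent with $G_R$ and all earlier guesses; after the update it contains precisely those that are in addition consistent with the new guess. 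All other lines of the procedure either read from $r$, inspect the sketch, update $counter$, or invoke \textsc{OVCCT} (whose decisions are functions of $G_R$ alone, not of $G$) and therefore leave $\mathcal{G}$ unchanged.

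Finally, I would address the restart at the end of the $i$'th iteration: when the guess is ``true'' and the procedure restarts while maintaining $\mathcal{G}$ and $counter$, this is a tail-call with the already-updated $\mathcal{G}$, so the invariant continues to hold by the same inductive argument applied to the new invocation. I do not anticipate any real obstacle here; the lemma is essentially a bookkeeping statement about the semantics of the filtering step, and the only care needed is to be explicit about what ``consistent with a guess'' means and to enumerate all points at which $\mathcal{G}$ could in principle be touched.
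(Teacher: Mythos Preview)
Your proposal is correct and follows essentially the same approach as the paper: the paper's proof is a two-sentence version of your induction, noting that $\mathcal{G}$ is initialized to all graphs consistent with the sketch and is only ever filtered at each guess to retain graphs consistent with that guess. Your treatment is more detailed (explicitly enumerating the lines that leave $\mathcal{G}$ untouched and handling the restart), but the underlying argument is identical.
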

\begin{proof}
The invariant holds since $\mathcal{G}$ is initialized to contain all graphs consistent with the sketch, and since after each guess $\mathcal{G}$ is updated to contain only those graphs in $\mathcal{G}$ that are consistent with the guess. 
\end{proof}

\begin{lemma}\label{i:clique-2} 
Throughout the execution of \textsc{Oblivious-Verifier-Clique-Execution}:
The probability of reaching $counter = c$ is exponentially small in $c$.  
\end{lemma}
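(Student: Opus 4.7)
The plan is to bound the conditional probability of each counter increment by roughly the exponential of minus the increment size, and then apply a Chernoff-type argument to the sum. The counter is only ever increased by one of two events in \textsc{Oblivious-Verifier-Clique-Execution}: the first site adds $25/\varepsilon$, and the second site at phase $i$ adds $\beta^2 k-u$ (with $k=2^i$).

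The first increment happens exactly when no $U'\subseteq U$ has both $\card{R\cap\Gamma(U')}\geq(1-\gamma)\rho\card{V}$ and $\tilde{bias}^R_{U'}\geq 1-2\varepsilon-\beta/2+9\gamma$. Lemmas~\ref{l:U'-approx} and~\ref{l:C-degree} (from the proof of Lemma~\ref{l:clique-const-err}) guarantee that, with probability at least $1-e^{-25/\varepsilon}$ over the sample $U$, there is a \emph{good} $U'\subseteq U$ with $bias_{U'}\geq 1-2\varepsilon$ and $\card{\Gamma(U')}\geq\rho\card{V}$; item~1 of Lemma~\ref{l:clique-sketch} together with Corollary~\ref{c:bias-sketch} then promote this to $\tilde{bias}^R_{U'}\geq 1-2\varepsilon-9\gamma$ with $\card{R\cap\Gamma(U')}\geq(1-\gamma)\rho\card{R}$, which falsifies the condition since $\gamma=\beta/100$. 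Hence the first increment has conditional probability at most $e^{-25/\varepsilon}=\exp(-\Omega(\Delta))$.

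The second increment triggers whenever \textsc{OVCCT} rejects for some $U'$ with $\card{R\cap\Gamma(U')}\geq(1-\gamma)\rho\card{V}$. Item~2 of Lemma~\ref{l:OVCCT} bounds each such rejection by $\exp(-c_0\beta^2 k)$ for a constant $c_0$ that can be made arbitrarily large by tightening \textsc{OVCCT}'s acceptance slack. A union bound over the $\leq 2^u$ choices of $U'$ yields $\exp(u\ln 2-c_0\beta^2 k)$; the choice $i_0=\log(u/\beta^2)+\Theta(1)$ (with the $\Theta(1)$ constant large enough, exactly as in Lemma~\ref{l:simulated-coin}) guarantees $c_0\beta^2 k\ge u\ln 2+\Omega(\beta^2 k-u)$ for all $k\ge 2^{i_0}$, so the second increment has conditional probability $\exp(-\Omega(\beta^2 k-u))=\exp(-\Omega(\Delta))$.

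Writing $counter=\sum_t\Delta_t\mathbf{1}[E_t]$ over all potential increment events (bounded in number by the $\poly(\card{V})$ random bits the algorithm consumes) and using $\Pr[E_t\mid\mathcal{F}_{t-1}]\leq\exp(-\alpha\Delta_t)$, one obtains, at $\lambda=\alpha/2$, the telescoping MGF bound $\mathbb{E}[e^{\lambda\cdot counter}]\leq\exp\bigl(\sum_t e^{-\alpha\Delta_t/2}\bigr)$. The increments $\beta^2 2^i-u$ grow geometrically in $i$, so the inner sum over a fixed restart is dominated by its $i=i_0$ term and is exponentially small in $u$ once the $\Theta(1)$ constant in $i_0$ is chosen large enough; thus the total sum is $O(1)$, and Markov's inequality gives $\Pr[counter\ge c]\le\exp(-\Omega(c))$. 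The main obstacle is ensuring that the ``$-u$'' term in the second increment absorbs the $2^u$ union-bound factor and that the polynomial number of potential increment events does not blow up the MGF; both are controlled by taking the $\Theta(1)$ constants in the algorithm's parameters sufficiently large, matching the calibration in the proof of Lemma~\ref{l:simulated-coin}.
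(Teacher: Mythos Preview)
Your bounds on the conditional probability of each of the two increment types are correct and match the paper's argument exactly: the paper invokes Lemma~\ref{l:clique-const-err} (packaged via Corollary~\ref{c:bias-sketch} and item~1 of Lemma~\ref{l:clique-sketch}) for the $25/\varepsilon$ increment, and item~2 of Lemma~\ref{l:OVCCT} together with a union bound over the $\le 2^u$ subsets $U'$ for the $\beta^2 k-u$ increment. The paper's proof stops there, simply declaring each increase ``justified'' and leaving the passage from per-increment bounds to $\Pr[counter\ge c]\le\exp(-\Omega(c))$ implicit.

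Where you go further is in the MGF argument, and there your claim that ``the total sum is $O(1)$'' has a gap. The sum $\sum_t e^{-\alpha\Delta_t/2}$ over a single restart contains not only the second-type terms (which are indeed dominated by $e^{-\Omega(u)}$ after the geometric summation you describe) but also the first-type term $e^{-\Theta(\alpha/\varepsilon)}$, which is merely a constant depending on $\varepsilon$. Since the number of restarts processed before $counter>|V|$ can be as large as $\Theta(\varepsilon|V|)$ (this follows from Lemma~\ref{i:clique-3} and the stopping rule, independently of the present lemma), the total sum is $\Theta_\varepsilon(|V|)$, not $O(1)$. Your MGF bound thus gives $\Pr[counter\ge c]\le\exp\bigl(O_\varepsilon(|V|)-\alpha c/2\bigr)$, which is $\exp(-\Omega(|V|))$ for $c=|V|$ --- the only value at which the lemma is ever invoked, in the proof of Lemma~\ref{l:clique-soundness} --- provided the number of potential increment events is kept sufficiently smaller than $|V|$ (the second item of Lemma~\ref{l:clique-soundness}). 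It does not, however, give $\exp(-\Omega(c))$ uniformly for all $c$; in this respect your argument is no worse than the paper's, which is silent on the point.
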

\begin{proof}
$counter$ is initially $0$. The increase in step~\ref{s:+1} is justified as follows. When this step occurs, $\tilde{bias}_{U'}^R <1-2\varepsilon - \beta/2 + 9\gamma$ for all $U'$ such that $\card{R\cap \Gamma(U')} \geq (1-\gamma)\rho\card{V}$.
By 
Corollary~\ref{c:bias-sketch}, for all $U'$ such that $\card{\Gamma(U')}\geq(1-2\gamma)\rho\card{V}$, we have $\card{\tilde{bias}_{U'}^R - bias_{U}}\leq 9\gamma$. Hence, when step~\ref{s:+1} occurs for all $U'$ with $\card{\Gamma(U')}\geq \rho\card{V}$, it holds that 
$bias_{U'}< 1-2\varepsilon$ (here we also use Lemma~\ref{l:clique-sketch}).
By Lemma~\ref{l:clique-const-err}, for every graph in $\mathcal{G}$, the probability that this happens is $e^{-25/\varepsilon}$, and therefore the increase in $counter$ is justified.
The increase in Step~\ref{s:+k} follows from the correctness of \textsc{Oblivious-Verifier-Clique-Coin-Toss} (Note that we take a union bound over $2^u$ possible $U'$).
\end{proof}

\begin{lemma}\label{i:clique-3} If \textsc{Oblivious-Verifier-Clique-Execution} restarts in phase $i$ then $counter$ increases by at least $\beta^2 2^{i-1} -u$ either in the restart phase or in the previous phase, or it is the first phase and $counter$ increases by $25/\varepsilon$. 
\end{lemma}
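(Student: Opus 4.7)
The plan is to argue by case analysis on whether $i=i_0$ or $i>i_0$. In each case I would assume for contradiction that the claimed counter increase does not occur, and derive either a numerical contradiction with the restart condition or verify the triggering condition of step~\ref{s:+1}.

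Suppose first that $i>i_0$ and that step~\ref{s:+k} fired in neither phase $i$ nor phase $i-1$. Then for every $U'$ with $\card{R\cap \Gamma(U')} \geq (1-\gamma)\rho\card{R}$, \textsc{OVCCT} accepted in both phases, so Lemma~\ref{l:OVCCT} gives $\card{bias_{U'}^{V'_j}-bias_{U'}}\leq 29\gamma$ for $j\in\set{i-1,i}$, where $V'_j$ is the sample used in phase $j$. Since the execution did not restart at phase $i-1$, the guess at that phase was false, hence $\mathcal{G}$ (which is non-empty because the execution reached phase $i$) contains some graph admitting a non-faulty $U'^*$, i.e., with $\card{\Gamma(U'^*)}\geq\rho\card{V}$, that satisfies $bias_{U'^*}^{V'_{i-1}}\geq 1-2\varepsilon-(i-1)\beta$. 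By Lemma~\ref{l:clique-sketch}, this $U'^*$ also satisfies $\card{R\cap\Gamma(U'^*)}\geq (1-\gamma)\rho\card{R}$, so the $29\gamma$ bounds apply to it. Chaining them yields $bias_{U'^*}^{V'_i}\geq 1-2\varepsilon-(i-1)\beta-58\gamma$, while the restart at phase $i$ forces $bias_{U'^*}^{V'_i}<1-2\varepsilon-i\beta$. Subtracting gives $\beta<58\gamma$, contradicting $\gamma=\beta/100$. Hence step~\ref{s:+k} fires in phase $i$ or $i-1$, producing a counter increase of at least $\beta^2 2^{i-1}-u$, as claimed.

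Now suppose $i=i_0$. If step~\ref{s:+k} fires in phase $i_0$, the counter gains $\beta^2 2^{i_0}-u\geq \beta^2 2^{i-1}-u$, so the first option of the lemma holds. Otherwise \textsc{OVCCT} accepts for every $U'$ with $\card{R\cap\Gamma(U')}\geq(1-\gamma)\rho\card{R}$, giving $\card{bias_{U'}^{V'_{i_0}}-bias_{U'}}\leq 29\gamma$. Combined with the restart condition $bias_{U'}^{V'_{i_0}}<1-2\varepsilon-i_0\beta$ (which is enforced on every non-faulty $U'$, covering, via Lemma~\ref{l:clique-sketch}, every $U'$ with $\card{R\cap \Gamma(U')}\geq(1-\gamma)\rho\card{R}$; the borderline case $(1-2\gamma)\rho\card{V}\leq \card{\Gamma(U')}<\rho\card{V}$ is absorbed by Lemma~\ref{l:bias-V'} with $\gamma'=\Theta(\gamma)$), we obtain $bias_{U'}<1-2\varepsilon-i_0\beta+29\gamma$. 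Corollary~\ref{c:bias-sketch} then upgrades this to $\tilde{bias}_{U'}^R<1-2\varepsilon-i_0\beta+38\gamma$, which is at most $1-2\varepsilon-\beta/2+9\gamma$ because $i_0\geq 1$ and $\gamma=\beta/100$ yield $i_0\beta\geq \beta>\beta/2+29\gamma$. The triggering condition of step~\ref{s:+1} is therefore met and the counter gains $25/\varepsilon$, giving the second option of the lemma.

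The main obstacle is juggling several additive error terms of size $O(\gamma)$ coming from Lemma~\ref{l:OVCCT}, Lemma~\ref{l:clique-sketch}, Corollary~\ref{c:bias-sketch} and Lemma~\ref{l:bias-V'}, so that they all fit within the $\beta$ slack separating consecutive phase thresholds; the calibration $\gamma=\beta/100$ together with $i_0\geq 1$ is what makes the bookkeeping work cleanly in both cases.
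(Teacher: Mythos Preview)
Your proof is correct and follows essentially the same logic as the paper's, only organized as a direct contrapositive argument rather than through an appeal to Lemma~\ref{l:simulated-coin}. The paper first invokes Lemma~\ref{l:simulated-coin} to obtain the dichotomy ``either some non-faulty coin deviates by more than $\beta/2$ in phase $i$ or $i-1$, or it is the first phase and the group bias is at most $1-2\varepsilon-\beta/2$,'' and then shows each branch triggers step~\ref{s:+k} or step~\ref{s:+1} respectively via Lemma~\ref{l:OVCCT}, Lemma~\ref{l:clique-sketch}, and Corollary~\ref{c:bias-sketch}. You instead assume step~\ref{s:+k} does not fire, use the acceptance of \textsc{OVCCT} and Lemma~\ref{l:OVCCT} to pin the biases within $29\gamma$, and then derive the same numerical contradiction $\beta<58\gamma$ that is implicit in the paper's use of $29\gamma<\beta/2$. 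Your version is a bit more self-contained (it does not need to cite the generalized biased-coin lemma), at the cost of re-deriving its content in situ; the paper's version makes the connection to the abstract coin framework more visible. Your parenthetical handling of the borderline range $(1-2\gamma)\rho\card{V}\le\card{\Gamma(U')}<\rho\card{V}$ in the $i=i_0$ case is somewhat hand-wavy (Lemma~\ref{l:bias-V'} alone does not bound $bias_{U'}$), but the paper's own proof is equally informal on this point, so you are not introducing a new gap.
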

\begin{proof}
Suppose that there is a restart in phase $i$. Let $k=2^{i}$.  
By Lemma~\ref{l:simulated-coin}, either in this iteration, or in the previous iteration, there exists a non-faulty coin in the group that yielded a fraction of heads that deviates from its bias by an additive $\beta/2$, or it's the first phase and a group with bias at most $1-2\varepsilon - \beta/2$ was picked. Next we handle each of these cases. 

\begin{enumerate}
\item Suppose that $i=i_0$ and $\max\sett{bias_{U'}}{\card{\Gamma(U')}\geq\rho\card{V}}< 1-2\varepsilon - \beta/2$. 
By Lemma~\ref{l:clique-sketch}, $\card{R\cap \Gamma(U')}\geq (1-\gamma)\rho\card{R}$, and, hence, by Corollary~\ref{c:bias-sketch}, $\card{\tilde{bias}_{U'}^R - bias_{U'}}\leq 9\gamma$. This implies that $\tilde{bias}_{U'}^R < 1-2\varepsilon - \beta/2 + 9\gamma$ and that $counter$ increases in step~\ref{s:+1}.

\item Suppose that there exists $U'\subseteq U$ with $\card{\Gamma(U')}\geq\rho\card{V}$ such that $bias_{U'}^{V'}$ deviates from $bias_{U'}$ by at least $\beta/2$ in phase $i$ or $i-1$. By Lemma~\ref{l:OVCCT} (and Lemma~\ref{l:clique-sketch}), since $29\gamma< \beta/2$ and \textsc{Clique-Coin-Toss} does not fail on $U'$, we know that \textsc{Clique-Coin-Toss} rejects and that $counter$ increases in Step~\ref{s:+k}.

\end{enumerate}
The lemma follows.
\end{proof}

Next we prove Lemma~\ref{l:clique-soundness} from the invariants in Lemmas~\ref{i:clique-1},~\ref{i:clique-2} and \ref{i:clique-3}.

\begin{proof}(of Lemma~\ref{l:clique-soundness} from invariants)
Rejection is caused either by $counter$ reaching $\card{V}$, which happens with probability exponentially small in $\card{V}$ by Lemma~\ref{i:clique-2}, or by \textsc{Find-Approximate-Clique}, running on any of the graphs in $\mathcal{G}$ (recall that $\mathcal{G}\neq\emptyset$), returning, by Corollary~\ref{c:bias-sketch}, $U'$ such that $bias_{U'}< 1-3\varepsilon$. 
We already saw that the latter has probability exponentially small in $\card{V}$ for any graph in $\mathcal{G}$ in Section~\ref{s:clique-amplified}. 

The second item follows from Lemma~\ref{i:clique-3} and since $counter$ increases by large increments and $counter\leq\card{V}$. The increments are either at least $\beta^2 2^{i_0 - 1 } - u$ (which corresponds to the constant term in $i_0$) or $25/\varepsilon$.
\end{proof}

A non-uniform deterministic algorithm for approximate clique follows, concluding the proof of Theorem~\ref{t:clique} (note that $\varepsilon$ in Theorem~\ref{t:clique} is replaced with $O(\varepsilon/\rho)$ here). 
\begin{theorem}\label{t:clique-det}
There is a deterministic non-uniform algorithm that given $0<\rho,\varepsilon<1$ and a graph $G=(V,E)$ with a clique on $\rho\card{V}$ vertices, finds a set of $\rho\card{V}$ vertices and density $1-O(\varepsilon/\rho)$. The algorithm runs in time $\tilde{O}(\card{V}^2 2^{O(1/(\varepsilon^2\rho))})$.
\end{theorem}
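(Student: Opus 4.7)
The plan is to derive Theorem~\ref{t:clique-det} as a direct application of the derandomization framework of Theorem~\ref{t:derandomization} to the Las Vegas algorithm \textsc{Find-Approximate-Clique} and the oblivious verifier \textsc{Oblivious-Verifier-Approximate-Clique} constructed in Section~\ref{s:clique-verifier}. The three ingredients needed by Theorem~\ref{t:derandomization} have all been established in previous subsections: (i) a Las Vegas algorithm of run-time $T = \tilde{O}(\card{V}^2 2^{O(1/(\varepsilon^2 \rho))})$, whose correctness and run-time were proved in Section~\ref{s:clique-amplified}; (ii) an $(s,\varepsilon')$-oblivious verifier; and (iii) suitable bounds on $s$ and $\varepsilon'$.

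First I would compute the sketch size. By Lemma~\ref{l:clique-sketch}, the sketch $G_R=(R,V,E_R)$ has $\card{R}=O(u\log\card{V}/(\rho\gamma^2))$ with $u=\ceil{100/(\varepsilon^2\rho)}$ and $\gamma=\beta/100=\Theta(\varepsilon/\log\card{V})$, so the bipartite graph $G_R$ is describable by $s = \tilde{O}(\card{V}/(\rho\varepsilon^2))\cdot\mathrm{polylog}(\card{V})$ bits; in particular $s = \card{V}\cdot\mathrm{poly}(1/\varepsilon,1/\rho,\log\card{V})$. Next I would cite Lemma~\ref{l:clique-soundness}, which shows that the oblivious verifier rejects with probability only $\varepsilon'=2^{-\Omega(\card{V})}$ on the legitimate sketch of any input graph, while Lemma~\ref{l:OVCCT} and the accept condition in Step~\ref{clique:check} of \textsc{Oblivious-Verifier-Clique-Execution} together ensure that whenever the verifier accepts, the underlying randomized algorithm necessarily finds a $U'$ with $bias_{U'}\geq 1-3\varepsilon-O(\gamma)$, producing a set of $\rho\card{V}$ vertices of density $1-O(\varepsilon/\rho)$.

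Now I would invoke Theorem~\ref{t:derandomization} with a repetition parameter $t$ chosen large enough so that $\varepsilon' < 2^{-s/t}$. Since $\varepsilon' = 2^{-\Omega(\card{V})}$ and $s = \card{V}\cdot\mathrm{poly}(1/\varepsilon,1/\rho,\log\card{V})$, it suffices to take $t = \mathrm{poly}(1/\varepsilon,1/\rho,\log\card{V})$. The theorem then yields a deterministic non-uniform algorithm of run-time $T\cdot t = \tilde{O}(\card{V}^2 2^{O(1/(\varepsilon^2\rho))})$, absorbing the polylogarithmic slowdown into the $\tilde{O}$ notation and the exponential factor. Finally, I would rescale $\varepsilon$ by the constant hidden in the $O(\varepsilon/\rho)$ density guarantee to obtain the statement with the cleaner $1-O(\varepsilon/\rho)$ density.

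The main obstacle is not conceptual: all nontrivial work was carried out in Sections~\ref{s:clique-amplified}--\ref{s:clique-verifier} (most crucially Lemma~\ref{l:clique-soundness}, which bounds both the rejection probability and the number of guesses). The remaining point to check carefully is the arithmetic comparing $s$ to the exponent of $\varepsilon'$, so that the repetition factor $t$ needed to make Adleman's union bound work is only polylogarithmic in $\card{V}$ and polynomial in $1/\varepsilon,1/\rho$, and hence fits inside the $\tilde{O}$ notation of the claimed run-time.
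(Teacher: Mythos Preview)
Your proposal is correct and follows exactly the paper's approach: the paper does not give a separate proof of Theorem~\ref{t:clique-det} but rather states it as the conclusion of the machinery built in Sections~\ref{s:clique-amplified}--\ref{s:clique-verifier}, to be combined with Theorem~\ref{t:derandomization}. Your accounting of the sketch size $s$ and the repetition parameter $t$ is accurate, and the absorption of the $\mathrm{poly}(1/\varepsilon,1/\rho,\log\card{V})$ slowdown into $\tilde{O}(\card{V}^2 2^{O(1/(\varepsilon^2\rho))})$ is exactly right (any polynomial in $1/\varepsilon,1/\rho$ is dominated by $2^{O(1/(\varepsilon^2\rho))}$). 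One minor point: for the ``accept implies success'' direction you cite Lemma~\ref{l:OVCCT} together with the accept condition in Step~\ref{clique:check}, whereas the paper packages this into the unnumbered lemma immediately preceding Lemma~\ref{l:clique-soundness}; the content is the same.
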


\begin{rem}\label{r:clique-alternative}
There is a an alternative way to prove Theorem~\ref{t:clique-det} based on the biased coin algorithm. It involves a more complicated randomized algorithm that can achieve lower error probability and no sketching. First we devise a coin tossing algorithm that achieves error probability exponentially small in $k$ for any $1\leq k\leq\card{V}$ while running in time $O(k\card{U'}\poly(1/\rho,1/\gamma))$. This algorithm picks $V'\subseteq V$ like \textsc{Clique-Coin-Toss}, but then instead of computing $bias_{U'}^{V'}$ directly, it estimates $bias_{U'}^{V'}$ by picking for every $v\in V'$ a small independent sample to estimate the fraction of $\Gamma(U')$ vertices that neighbor it. The observation is that it suffices that the estimates for most $v\in V'$ are accurate in order for the estimate for $bias_{U'}^{V'}$ to be accurate. By repeating this coin tossing algorithm $r$ times, we can obtain a coin tossing algorithm that runs in time $O(kr\card{U'}\poly(1/\rho,1/\gamma))$ and achieves error probability exponentially small in $kr$ for any $r\geq 1$. Via the biased coin algorithm, we obtain a Las Vegas algorithm that runs in time $\tilde{O}(\card{V}^2 2^{O(1/(\varepsilon^2\rho))})$ and achieves error probability exponentially small in $\card{V}^2$. This implies a deterministic non-uniform algorithm that runs in the same time.
\end{rem}

\section{Free Games}\label{s:free}

\subsection{A Simple Randomized Algorithm}

First we describe a simple randomized algorithm with constant error probability for free games based on the sampling idea in the \textsc{Max-Cut} algorithm. A similar algorithm appeared in~\cite{AIM}.
The main idea of the algorithm is as follows. 
We sample a small $S\subseteq X$ and enumerate over all possible labelings $h:S\to\Sigma$. Each labeling induces a labeling to all vertices as follows.
\begin{definition}[Induced labeling]
Let $\mathcal{G}$ be a free game on a graph $G=(X,Y,X\times Y)$, alphabet $\Sigma$ and constraints $\set{\pi_e}$. Let $S\subseteq X$ and let $h:S\to\Sigma$ be a labeling to $S$. 
\begin{itemize}
\item The induced labeling $f_{S,h,Y}:Y\to\Sigma$ is defined as follows: For every $y\in Y$ let $f_{S,h,Y}(y)$ be the label $\sigma\in\Sigma$ that maximizes the fraction of edges $e=(s,y)\in S\times \set{y}$ such that $(h(s),\sigma)\in\pi_e$ (ties are broken arbitrarily). 
\item The induced labeling $f_{S,h,X}:X\to\Sigma$ is defined as follows: For every $x\in X$ let $f_{S,h,X}(x)$ be the label $\sigma\in\Sigma$ that maximizes the fraction of edges $e=(x,y)\in \set{x}\times Y$ such that $(\sigma,f_{S,h,Y}(y))\in\pi_e$ (ties are broken arbitrarily).
\end{itemize}
\end{definition}
Note that for each $y\in Y$ computing $f_{S,h,Y}(y)$ takes time $O(\card{\Sigma}\card{S})$. For each $x\in X$ computing $f_{S,h,X}(x)$ takes time $O(\card{\Sigma}\card{Y})$.

We will argue below that if $h$ is the restriction of an optimal assignment to $\mathcal{G}$, then the induced labeling is likely to approximately achieve the value of $\mathcal{G}$. 

\begin{lemma}[Sampling]\label{l:Russell}
Let $\mathcal{G}$ be a free game on a graph $G=(X,Y,X\times Y)$ and with alphabet $\Sigma$. Let $\varepsilon,\delta>0$. Then for a uniform $S\subseteq X$, $\card{S}=\ceil{\log(\card{\Sigma}/\varepsilon\delta)/\varepsilon^2}$, with probability at least $1-\delta$, there exists $h:S\to\Sigma$ such that $f_{S,h,X}$, $f_{S,h,Y}$ satisfy at least $val(\mathcal{G}) - 2\varepsilon$ fraction of the edges in $G$. 
\end{lemma}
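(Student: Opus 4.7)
Let $(f_X^*, f_Y^*)$ be an optimal labeling, satisfying $val(\mathcal{G})$ fraction of the edges, and take $h = f_X^*|_S$. The plan is to argue in two stages: first, that the induced labeling $f_{S,h,Y}$ is with high probability an almost-best response to $f_X^*$ in the sense of edge-satisfaction; second, that replacing $f_X^*$ by the induced $f_{S,h,X}$ can only help, because $f_{S,h,X}$ is by definition a best response per-$x$ to $f_{S,h,Y}$.

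For each $y \in Y$ and each $\sigma \in \Sigma$, let $sat_y(\sigma)$ be the fraction of $x \in X$ with $(f_X^*(x),\sigma) \in \pi_{(x,y)}$, and let $\widehat{sat}_y(\sigma)$ be the same quantity restricted to $x \in S$. I would apply a Chernoff bound to each $(y,\sigma)$ pair (over the random sample $S$), then union-bound over $\sigma \in \Sigma$, to conclude that for any fixed $y$, with probability at least $1 - 2\card{\Sigma}e^{-2\varepsilon^2\card{S}}$ over $S$, every $\sigma$ satisfies $\card{\widehat{sat}_y(\sigma) - sat_y(\sigma)} \leq \varepsilon$. Call such a $y$ \emph{good}. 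For a good $y$, since $f_{S,h,Y}(y)$ maximizes $\widehat{sat}_y$ and $f_Y^*(y)$ is a particular label,
\[
sat_y(f_{S,h,Y}(y)) \;\geq\; \widehat{sat}_y(f_{S,h,Y}(y)) - \varepsilon \;\geq\; \widehat{sat}_y(f_Y^*(y)) - \varepsilon \;\geq\; sat_y(f_Y^*(y)) - 2\varepsilon.
\]

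Next I would convert this pointwise claim into a global one. The expected fraction of bad $y$ is at most $2\card{\Sigma}e^{-2\varepsilon^2\card{S}}$, so by Markov, with probability at least $1 - \delta$ over $S$ the fraction of bad $y$ is at most $(2\card{\Sigma}/\delta)e^{-2\varepsilon^2\card{S}}$. Plugging in $\card{S} = \lceil \log(\card{\Sigma}/\varepsilon\delta)/\varepsilon^2 \rceil$ makes this at most $O(\varepsilon)$ (in fact much less). Summing the per-$y$ inequality over good $y$ and bounding the contribution from bad $y$ trivially by their fraction, the labeling $(f_X^*, f_{S,h,Y})$ satisfies at least $val(\mathcal{G}) - 2\varepsilon - o(\varepsilon)$ fraction of the edges. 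Finally, because $f_{S,h,X}(x)$ is chosen to maximize, over $\sigma$, the fraction of $y$ with $(\sigma, f_{S,h,Y}(y)) \in \pi_{(x,y)}$, it is a best response per-$x$ to $f_{S,h,Y}$, so the value of $(f_{S,h,X}, f_{S,h,Y})$ is at least the value of $(f_X^*, f_{S,h,Y})$, yielding the claimed bound.

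The only mildly delicate step is bookkeeping the two sources of randomness-loss: the $2\varepsilon$ from approximating the best response per $y$, and the Markov step that controls the measure of bad $y$. The sample size $\card{S} = \log(\card{\Sigma}/\varepsilon\delta)/\varepsilon^2$ is calibrated so the latter contributes a lower order term, leaving the stated $val(\mathcal{G})-2\varepsilon$ bound. No independent treatment of $X$ is needed because the asymmetric definition of the induced labeling (first label $Y$, then best-respond on $X$) lets the $X$-side come essentially for free.
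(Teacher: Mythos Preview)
Your proposal is correct and follows essentially the same route as the paper's proof: take $h=f_X^*|_S$, apply Chernoff per $(y,\sigma)$ and union-bound over $\sigma$, use Markov to bound the fraction of bad $y$, and finish by observing that $f_{S,h,X}$ is a per-$x$ best response to $f_{S,h,Y}$. Your bookkeeping is in fact a bit more explicit than the paper's --- you separate the $2\varepsilon$ loss on good $y$ from the $o(\varepsilon)$ mass of bad $y$, whereas the paper folds both into the stated $2\varepsilon$ without spelling out the split.
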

\begin{proof}
There is a labeling $f_X^*:X\to\Sigma$, $f_Y^*:Y\to\Sigma$ that achieves the value of $\mathcal{G}$, namely, $val_{f_X^*,f_Y^*}(\mathcal{G}) = val(\mathcal{G})$.
Let $h:S\to\Sigma$ be the restriction of $f^*_X$ to $S$. Let $y\in Y$.
Let $\sigma\in\Sigma$. By a Chernoff bound except with probability $\varepsilon\delta/\card{\Sigma}$, the fraction of edges $e=(s,y)\in S\times \set{y}$ such that $(h(s),\sigma)\in \pi_e$, is the same up to an additive $\varepsilon$ as the fraction of edges $e=(x,y)\in X\times \set{y}$ such that $(f_X^*(x),\sigma)\in\pi_e$. By a union bound over all $\sigma\in\Sigma$, for each $y\in Y$, except with probability at most $\varepsilon\delta$ over $S$, this holds for all $\sigma\in\Sigma$. 
In other words, for a uniform $S$, the expected fraction of $y\in Y$ for which this holds is at most $\varepsilon\delta$. Thus, with probability at most $\delta$ over the choice of $S$, for at least $1-\varepsilon$ fraction of the $y\in Y$, this holds. Therefore, except for at most $\delta$ fraction of the $S$, we have
$val_{f^*_{X},f_{S,h,Y}}(\mathcal{G})\geq val_{f_X^*,f_Y^*}(\mathcal{G}) - 2\varepsilon = val(\mathcal{G}) - 2\varepsilon$. 
The lemma follows noticing that $val_{f_{S,h,X},f_{S,h,Y}}(\mathcal{G})\geq val_{f^*_{X},f_{S,h,Y}}(\mathcal{G})$.
\end{proof}

\subsection{A Randomized Algorithm With Exponentially Small Error Probability}

We think of sampling $S\subseteq X$ as picking a group of coins. The group has a coin per $h:S\to\Sigma$. The bias of the coin is the fraction of edges satisfied by $f_{S,h,X}$ and $f_{S,h,Y}$. 
One tosses a coin $k$ times by picking roughly $k$ vertices $X'\subseteq X$ and estimating the success of $f_{S,h,X}$ and $f_{S,h,Y}$ on edges that touch $X'$. The coin tossing algorithm is described in Figure~\ref{alg:toss-free-coin}. For a deviation parameter $\gamma$ dictating by how much the coin toss deviates from the actual bias and for $k$ dictating the error probability, the toss runs in time $k\card{Y}\cdot\poly(\card{\Sigma},1/\varepsilon,1/\gamma)$.

\begin{algbox}{6.5in}{toss-free-coin}{A coin toss picks vertices at random and estimates how many of the edges that touch the sample are satisfied by $f_{S,h,X}$ and $f_{S,h,Y}$.}
	\Procname{\textsc{Free-Toss-Coin}$(\mathcal{G},S,h,k,\gamma)$}
	\li Compute $f_{S,h,Y}(y)$ for all $y\in Y$.
	\li Pick $X'\subseteq X$, $\card{X'}= \ceil{(k+\card{S}\log\card{\Sigma})/\gamma^2}$, uniformly at random.
	\li For all $x\in X'$ and $\sigma\in\Sigma$ compute the fraction of $y\in Y$ so $(\sigma,f_{S,h,Y}(y))\in\pi_{(x,y)}$.
	\li For all $x\in X'$ let $g_{S,h,x}$ be the max over $\sigma\in\Sigma$ of the fractions.
	\li \Return $bias_{S,h}^{X'}\defeq (1/\card{X'})\sum_{x\in X'} g_{S,h,x}$ fraction heads.
\end{algbox}

Let $bias_{S,h}$ be the fraction of edges satisfied by $f_{S,h,X}$ and $f_{S,h,Y}$. 
For $X'\subseteq X$, let $bias_{S,h}^{X'}$ be the fraction of edges that touch $X'$ and are satisfied by $f_{S,h,X}$, $f_{S,h,Y}$.
For each $x\in X$ let $g_{S,h,x}$ be the fraction of edges that touch $x$ and are satisfied by $f_{S,h,X}$, $f_{S,h,Y}$. 
We have $bias_{S,h} = (1/\card{X})\sum_{x\in X}{g_{S,h,x}}$, and 
and $(1/\card{X'})\sum_{x\in X'}{g_{S,h,x}} = bias_{S,h}^{X'}$. 
The following lemma shows that the estimate $bias_{S,h}^{X'}$ of the coin tossing algorithm typically doesn't deviate much from the actual bias $bias_{S,h}$. 
\begin{lemma}\label{l:free-sample}
Except with probability exponentially small in $k$, for all $h:S\to\Sigma$,
$$\card{bias_{S,h}^{X'} - bias_{S,h}}\leq \gamma.$$
\end{lemma}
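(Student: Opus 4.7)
The plan is to fix $h:S\to\Sigma$ first, apply a Hoeffding bound to get concentration of $bias_{S,h}^{X'}$ around $bias_{S,h}$, and then take a union bound over the $|\Sigma|^{|S|}$ possible labelings $h$. The size of $X'$ has been chosen precisely so that after the union bound the error probability remains exponentially small in $k$.

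More concretely, I would first observe that once $h$ is fixed, the quantities $\{g_{S,h,x}\}_{x\in X}$ are deterministic numbers in $[0,1]$ with average $bias_{S,h}$. The algorithm samples $X'\subseteq X$ uniformly, so the $\card{X'}$ values $g_{S,h,x}$ for $x\in X'$ form a uniform sample from this multiset (whether one analyzes it as with or without replacement, Hoeffding's inequality applies). Thus
\[
\Prob{X'}{\card{bias_{S,h}^{X'} - bias_{S,h}} > \gamma} \;\leq\; 2\exp(-2\gamma^2 \card{X'}).
\]
Substituting $\card{X'} \geq (k + \card{S}\log\card{\Sigma})/\gamma^2$ gives an upper bound of $2\exp(-2(k + \card{S}\log\card{\Sigma}))$.

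Next, I would apply a union bound over the $\card{\Sigma}^{\card{S}} = \exp(\card{S}\log\card{\Sigma})$ possible labelings $h:S\to\Sigma$. The resulting failure probability is at most $2\exp(\card{S}\log\card{\Sigma})\cdot\exp(-2(k + \card{S}\log\card{\Sigma})) = 2\exp(-2k - \card{S}\log\card{\Sigma})$, which is $\exp(-\Omega(k))$ as required. The main (and only) subtlety is keeping track of the additive $\card{S}\log\card{\Sigma}$ term in the exponent when taking the union bound, but the choice of $\card{X'}$ in \textsc{Free-Toss-Coin} is calibrated precisely to absorb this cost. Note that $bias_{S,h}$ itself does not depend on the randomness of $X'$ (it is a property of the game $\mathcal{G}$ and the labeling $h$), so the only randomness in play is the sample $X'$, which is exactly the setting in which Hoeffding applies.
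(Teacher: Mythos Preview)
Your proposal is correct and is essentially the same as the paper's proof: apply Hoeffding to the bounded values $\{g_{S,h,x}\}_{x\in X}$ for fixed $h$, then union bound over the $\card{\Sigma}^{\card{S}}$ labelings. The paper's argument is just a two-line version of what you wrote; your explicit tracking of the $\card{S}\log\card{\Sigma}$ term in the exponent is exactly the point.
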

\begin{proof} 
By a Hoeffding bound, except with probability exponentially small in $k+\card{S}\log\card{\Sigma}$ we have,
$$\card{\frac{1}{\card{X'}}\sum_{x\in X'} g_{S,h,x} - bias_{S,h}}\leq \gamma.$$
The lemma follows from a union bound over all $h:S\to\Sigma$.

\remove{
By applying the Chernoff bound to each $x\in X'$ and $\sigma\in\Sigma$, for each $x\in X'$, except with probability $\delta$,
$$\card{\tilde{g}_{S,h,x} - g_{S,h,x}}\leq\gamma.$$ This happens independently for each $x\in X'$.
For every $X''\subseteq X'$, $\card{X''}= \gamma \card{X'}$, the probability that for every $x\in X''$ we have $\card{\tilde{g}_{S,h,x} - g_{S,h,x}}>\gamma$ is at most $\delta^{\gamma \card{X'}}$. By taking a union bound over all $\binom{\card{X'}}{\gamma\card{X'}}\leq (e/\gamma)^{\gamma\card{X'}}$ possible $X''$, the probability that for at least $\gamma\card{X'}$ of the $x\in X'$ we have $\card{\tilde{g}_{S,h,x} - g_{S,h,x}}>\gamma$ is at most $(e\delta/\gamma)^{\gamma \card{X'}}$. Otherwise, we have $$\card{\tilde{bias}_{S,h}^{X'} - bias_{S,h}^{X'}}\leq 2\gamma.$$
The lemma follows from a union bound over all $h:S\to\Sigma$.
}
\end{proof}

Using an algorithm for the biased coin problem we get an algorithm for finding a good labeling to a free game. The algorithm is described in Figure~\ref{alg:amplified-free}. 

\begin{algbox}{6.5in}{amplified-free}{An algorithm for finding a good labeling to a free game $\mathcal{G}$ with $val(\mathcal{G})\geq 1-\varepsilon_0$. The error probability of the algorithm is exponentially small in $\card{X}\card{\Sigma}$.}
	\Procname{\textsc{Find-Labeling}$(\mathcal{G}=(G=(X,Y,X\times Y),\Sigma,\set{\pi_e}),\varepsilon_0,\varepsilon)$}
	\li Set $s= \ceil{\log(\card{\Sigma}/\varepsilon^2)/\varepsilon^2}$.
	\li Set $i_f= \log((\card{X}\card{\Sigma}+s\log\card{\Sigma})/\varepsilon^2)+\Theta(\log\log((\card{X}\card{\Sigma}+s\log \card{\Sigma})/\varepsilon))$;\;\; $\beta = \varepsilon/i_f$.
	\li Set $i_0= \log(s\log\card{\Sigma}/\beta^2)+\Theta(1)$.
	\li Sample $S\subseteq X$, $\card{S} = s$.
	\li \For $i=i_0,i_0+1,\ldots,i_f$ \Do
		\li Set $k=2^i$.
		\li \For all $h:S\to\Sigma$ \Do
					\li \textsc{Free-Toss-Coin}$(\mathcal{G},S,h,\beta^2 k,\gamma = \beta/80)$.
		\End
		\li If the fraction of heads is less than $1 -\varepsilon_0 - 2\varepsilon - i\beta$ for all $h$, restart.
	\End
	\li\label{free:check}\Return labeling $f_{S,h,X}$, $f_{S,h,Y}$ with value at least $1-\varepsilon_0 - 3\varepsilon$, if such exists.
\end{algbox}

The algorithm proves part of Theorem~\ref{t:free} repeated here for convenience. 
\begin{thm}\label{l:amplified-free}
Given a free game $\mathcal{G}$ with vertex sets $X,Y$, alphabet $\Sigma$, and $val(\mathcal{G})\geq 1-\varepsilon_0$ and given $\varepsilon > 0$, the algorithm \textsc{Find-Labeling} finds a labeling for $\mathcal{G}$ that achieves value at least $1-\varepsilon_0-3\varepsilon$ except with probability exponentially small in $\card{X}\card{\Sigma}$. The algorithm runs in time $\tilde{O}(\card{X}\card{Y}\card{\Sigma}^{O((1/\varepsilon^2)\log(\card{\Sigma}/\varepsilon))})$.
\end{thm}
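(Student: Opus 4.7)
The plan is to cast \textsc{Find-Labeling} as an instance of the generalized biased coin algorithm (Lemma~\ref{l:simulated-coin}). Each uniform sample $S\subseteq X$ of size $s$ plays the role of a group of $g=\card{\Sigma}^s$ coins, indexed by labelings $h:S\to\Sigma$; the bias of coin $(S,h)$ is defined to be $bias_{S,h}$, the fraction of edges of $\mathcal{G}$ satisfied by the induced labeling $(f_{S,h,X},f_{S,h,Y})$. I would target biased-coin parameters $\eta = \varepsilon_0 + 2\varepsilon$, $\zeta = \varepsilon$, $n = \card{X}\card{\Sigma}$, and treat no coin as faulty.

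The first step is to verify that at least a $2/3$-fraction of groups contains a coin of bias at least $1-\eta$. This is exactly what Lemma~\ref{l:Russell} provides: for $s = \ceil{\log(\card{\Sigma}/\varepsilon^2)/\varepsilon^2}$ (i.e.\ with $\delta$ a small constant absorbed into the log), a uniform $S$ admits, except with probability well below $1/3$, some $h$ whose induced labeling attains value at least $val(\mathcal{G}) - 2\varepsilon \geq 1-\eta$.

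Second, I would confirm that \textsc{Free-Toss-Coin}$(\mathcal{G},S,h,\beta^2 k,\gamma=\beta/80)$ meets the simulated-toss requirements of Definition~\ref{d:simulate-toss}. Lemma~\ref{l:free-sample} gives $\card{bias_{S,h}^{X'}-bias_{S,h}}\leq \gamma$ except with probability exponentially small in $\beta^2 k$, with no failure branch; since $\gamma<\beta/4$, the deviation tolerance is met. The values of $i_0,i_f,\beta$ in \textsc{Find-Labeling} are precisely those prescribed by Figure~\ref{alg:coin-extended} after substituting $n = \card{X}\card{\Sigma}$, $\zeta=\varepsilon$, and $\log g = s\log\card{\Sigma}$, so the outer loop of \textsc{Find-Labeling} simulates \textsc{Find-Biased-Coin-in-Group} step for step.

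Applying Lemma~\ref{l:simulated-coin} then yields a pair $(S,h)$ with $bias_{S,h} \geq 1-\eta-\zeta = 1-\varepsilon_0-3\varepsilon$ except with probability $\exp(-\card{X}\card{\Sigma})$, which step~\ref{free:check} detects and returns as a labeling of the required value. For the runtime, the lemma bounds the total count of individual simulated tosses by $\tilde{O}(ng/\varepsilon^2)$; amortized, each one costs $O(\card{Y}\card{\Sigma})$ inside a \textsc{Free-Toss-Coin} call, because the $O(\card{Y}\card{\Sigma}s)$ precomputation of $f_{S,h,Y}$ is dominated once $k\geq 2^{i_0}$. Combining yields total time $\tilde{O}(\card{X}\card{Y}\card{\Sigma}\cdot\card{\Sigma}^s/\varepsilon^2)$, and using $\card{\Sigma}^s = 2^{O((\log^2\card{\Sigma})/\varepsilon^2)} = \card{\Sigma}^{O((1/\varepsilon^2)\log(\card{\Sigma}/\varepsilon))}$ gives the stated bound. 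The only substantive hurdle is the parameter bookkeeping matching Figures~\ref{alg:coin-extended} and \ref{alg:amplified-free}; the analytic content is fully outsourced to Lemmas~\ref{l:Russell}, \ref{l:free-sample}, and \ref{l:simulated-coin}.
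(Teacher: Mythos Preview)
Your proposal is correct and follows essentially the same approach as the paper. The paper does not write out a standalone proof for this theorem; it simply presents \textsc{Find-Labeling} as an instantiation of \textsc{Find-Biased-Coin-in-Group} with the roles you identify (Lemma~\ref{l:Russell} supplying the $2/3$-fraction guarantee, Lemma~\ref{l:free-sample} certifying the simulated-toss accuracy, and Lemma~\ref{l:simulated-coin} finishing), so you have faithfully reconstructed the intended argument.
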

In the remainder of the section we construct an oblivious verifier for free games and prove the rest of Theorem~\ref{t:free}, namely show a deterministic non-uniform algorithm.

\subsection{A Sketch for Free Games}

In this section we show that free games can be sketched using $\tilde{O}(\card{X}\card{\Sigma}^2\poly(1/\varepsilon))$ bits (as opposed to $O(\card{X}\card{Y}\card{\Sigma}^2)$ bits needed to describe the entire input game). The idea is to store the sub-game $\mathcal{G}_R$ induced on a small and carefully chosen set $R\subseteq Y$, that is, store all the constraints of the form $\pi_{(x,y)}$ for $x\in X$ and $y\in R$. We will show that this allows us to estimate the bias of every coin, as well as the value of the labeling induced by the coin on the random samples the algorithm makes.

Let $x\in X$. Let $g_{S,h,x}^R$ be the maximum over $\sigma\in\Sigma$ of the fraction of vertices $y\in R$ such that $(\sigma,f_{S,h,Y}(y))\in\pi_{(x,y)}$. We use the notation $bias^{X,R}_{S,h}\defeq (1/\card{X})\sum_{x\in X} g^R_{S,h,x}$ and $bias^{X',R}_{S,h}\defeq (1/\card{X'})\sum_{x\in X'} g^R_{S,h,x}$ for the bias of $S$, $h$ as witnessed by $R$. 

\begin{lemma}[Free game sketch]\label{l:free-sketch} Let $s$ be as in Figure~\ref{alg:amplified-free}. Then, there exists $R\subseteq Y$, $\card{R} =\ceil{ s(s+1)\log\card{\Sigma}\log\card{X}/\gamma^2}$, such that for all $S\subseteq X$, $\card{S} = s$, for all $h:S\to\Sigma$, for every $x\in X$ we have $$\card{g_{S,h,x}^R - g_{S,h,x}} \leq \gamma.$$ 
As a result, $\card{bias_{S,h}^{X,R} - bias_{S,h}}\leq \gamma$, and, for all $X'\subseteq X$,
$\card{bias_{S,h}^{X',R} - bias_{S,h}^{X'}}\leq \gamma$.
\end{lemma}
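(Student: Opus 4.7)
The plan is to prove the lemma by the probabilistic method: sample $R \subseteq Y$ uniformly at random of the prescribed size, and show that after a union bound the probability of failure is strictly less than $1$, so a good $R$ exists. The key preliminary observation is that once the game $\mathcal{G}$ and a pair $(S, h)$ with $\card{S}=s$ and $h:S\to\Sigma$ are fixed, the map $y \mapsto f_{S,h,Y}(y)$ is a deterministic function of $y$ that does not involve $R$. Hence for any fixed quadruple $(S, h, x, \sigma)$ the indicator ``$(\sigma, f_{S,h,Y}(y)) \in \pi_{(x,y)}$'' is a deterministic Boolean function of $y$. Let $\mu_{S,h,x,\sigma}$ denote its true frequency over $Y$ and $\hat{\mu}^R_{S,h,x,\sigma}$ its empirical frequency over $R$; then $g_{S,h,x}=\max_\sigma \mu_{S,h,x,\sigma}$ and $g^R_{S,h,x}=\max_\sigma \hat{\mu}^R_{S,h,x,\sigma}$.

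Next I would apply Hoeffding to each fixed quadruple: $\Prob{}{\card{\hat{\mu}^R_{S,h,x,\sigma}-\mu_{S,h,x,\sigma}}>\gamma}\leq 2e^{-2\gamma^2\card{R}}$. Then I count the tuples: at most $\binom{\card{X}}{s}\card{\Sigma}^s\cdot\card{X}\card{\Sigma}\leq \card{X}^{s+1}\card{\Sigma}^{s+1}$, so $\log$ of the number of events is $(s+1)(\log\card{X}+\log\card{\Sigma})$. For the stated $\card{R}=\ceil{s(s+1)\log\card{\Sigma}\log\card{X}/\gamma^2}$ one easily checks (assuming $s,\log\card{X},\log\card{\Sigma}\geq 1$) that $\card{R}$ dominates the required $(s+1)(\log\card{X}+\log\card{\Sigma})/(2\gamma^2)+O(1)$, so the union-bounded failure probability is below $1$ and a suitable $R$ exists. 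On this $R$ we have $\card{\hat{\mu}^R_{S,h,x,\sigma}-\mu_{S,h,x,\sigma}}\leq\gamma$ uniformly over all $(S,h,x,\sigma)$.

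To finish, I use that $\max_\sigma$ is $1$-Lipschitz in the $\ell_\infty$ norm on $\Sigma$-indexed vectors, so uniform $\gamma$-closeness per $\sigma$ yields $\card{g^R_{S,h,x}-g_{S,h,x}}\leq\gamma$ for every $(S,h,x)$, which is the main statement. The two ``as a result'' consequences are immediate from the triangle inequality after averaging $g^R_{S,h,x}-g_{S,h,x}$ over $x\in X$ and over $x\in X'$, respectively. The proof is a textbook probabilistic-method argument, so there is no real obstacle; the only point that requires care is checking that $f_{S,h,Y}$ is determined independently of $R$ (so that the quantity being estimated is a genuine fixed function of $y$), and that the counting and the Hoeffding deviation $\gamma$ match the stated sketch size.
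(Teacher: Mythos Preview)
Your proposal is correct and follows essentially the same approach as the paper's proof: sample $R$ uniformly at random, apply a Chernoff/Hoeffding bound per $(S,h,x,\sigma)$, and take a union bound over all such tuples to conclude existence via the probabilistic method. You are somewhat more explicit than the paper in noting that $f_{S,h,Y}$ is independent of $R$ and that $\max_\sigma$ is $1$-Lipschitz in $\ell_\infty$, but the argument is the same.
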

\begin{proof}
Pick uniformly at random $R\subseteq Y$ of the specified size. Let $S\subseteq X$, $\card{S}=s$, $h:S\to\Sigma$, $x\in X$. By applying a Chernoff bound for every $\sigma\in \Sigma$ and taking a union bound over $\sigma\in\Sigma$, we get that $\card{g_{S,h,x}^R - g_{S,h,x}} \leq \gamma$ except with probability smaller than $\card{\Sigma}^{-s}\card{X}^{-(s+1)}$. By a union bound over all $S$, $h$ and $x$, we get that there exists $R\subseteq Y$ of the specified size such that $\card{g_{S,h,x}^R - g_{S,h,x}} \leq \gamma$ always holds. The claims about bias follow.
\end{proof}

\subsection{Oblivious Verifier for Free Games}

\remove{
In this section we consider a variant of \textsc{Find Labeling} that sets the $\delta$ parameter to a smaller value, and checks a slightly weaker condition in the end. Our analysis of the oblivious verifier will rely on the fact that the stronger condition is likely to hold, but only the weaker condition is required. The algorithm is described in Figure~\ref{alg:amplified-free-2}.

\begin{algbox}{6.5in}{amplified-free-2}{A modified version of the algorithm from Figure~\ref{alg:amplified-free}. The changes are: $\delta$ is smaller; the condition in the end is relaxed.}
	\Procname{\textsc{Find-Labeling}$(\mathcal{G}=(G=(X,Y,X\times Y),\Sigma,\set{\pi_e}),\varepsilon_0,\varepsilon)$}
	\li Set $s= \ceil{\log(\card{\Sigma}/\varepsilon^2)/\varepsilon^2}$.
	\li Set $i_f= \log((\card{X}\card{\Sigma}+s\log\card{\Sigma})/\varepsilon^2)+\Theta(\log\log((\card{X}\card{\Sigma}+s\log \card{\Sigma})/\varepsilon))$;\;\; $\beta = \varepsilon/i_f$.
	\li Set $i_0= \log(s\log\card{\Sigma}/\beta^2)+\Theta(1)$.
	\li Sample $S\subseteq X$, $\card{S} = \ceil{\log(\card{\Sigma}/\varepsilon^2)/\varepsilon^2}$.
	\li \For $i=i_0,i_0+1,\ldots,i_f$ \Do
		\li Set $k=2^i$.
		\li \For all $h:S\to\Sigma$ \Do
					\li \textsc{Free-Toss-Coin}$(\mathcal{G},S,h,\beta^2 k,\gamma = \beta/80,\delta = (\gamma/e)2^{-k})$.
		\End
		\li If the fraction of heads is less than $1- \varepsilon_0 - 2\varepsilon - i\beta$ for all $h$, restart.
	\End
	\li\label{free:check-2}\Return labeling $f_{S,h,X}$, $f_{S,h,Y}$ with value at least $1-\varepsilon_0 -3\varepsilon-2\gamma$, if such exists.
\end{algbox}

Note that the asymptotic run-time of \textsc{Find-Labeling} does not change as a result of the decrease in $\delta$, since the run-time is largely dominated by Step~\ref{free:check}, and we assume that $\card{X}=\card{Y}$.
}

We design an oblivious verifier for \textsc{Find Labeling}, which we call \textsc{Oblivious-Verifier-Free-Execution}. The verifier gets the sketch of the input and the randomness of the algorithm \textsc{Find Labeling}, and follows the execution of the algorithm by guessing when it decides to restart. The final verifier, \textsc{Oblivious-Verifier-Free-Game}, checks all possible guesses. During its run \textsc{Oblivious-Verifier-Free-Execution} maintains $counter$ recording the low probability events it witnessed. If $counter$ ever reaches a value larger than $\card{X}\card{\Sigma}$, the verifier rejects. The verifier also maintains $\mathcal{H}$ the family of all free games consistent with the execution so far. If $\mathcal{H}$ becomes empty, the execution is designated as {\em infeasible}. Initially, $counter = 0$ and $\mathcal{H}$ contains all free games consistent with the sketch.
The final verifier checks that, no matter what were the guesses, all feasible executions of \textsc{Oblivious-Verifier-Free-Execution} accept. 
We argue that since the algorithm has error probability that is exponentially small in $\card{X}\card{\Sigma}$, the probability that the final verifier rejects is exponentially small in $\card{X}\card{\Sigma}$ as well. 

The verifier for a single execution (a single set of guesses) is described in Figure~\ref{alg:verifier-free}. Note that it uses the parameters $i_0,i_f,\beta$ of the algorithm. The final verifier is described in Figure~\ref{alg:verifier-final-free}.


\begin{algbox}{6.5in}{verifier-free}{An oblivious verifier for a single execution of \textsc{Find-Labeling} (an execution is defined by the outcomes of guesses). The final oblivious verifier checks that all feasible executions are accepted.}
	
\Procname{\textsc{Oblivious-Verifier-Free-Execution}$(\mathcal{H},\mathcal{G}_R,
\varepsilon_0,\varepsilon,r,counter)$} 
	\li\label{s:free-init}\If $\mathcal{H} = \emptyset$ \Then
		\li \Return infeasible.
	\End
	\li \If $counter > \card{X}\card{\Sigma}$ \Then
			\li \Return reject.
	\End
	\li Extract from $r$ the sample $S\subseteq X$ of the algorithm.
	\li \If $\max_{h} bias^{X,R}_{S,h}< 1 - \varepsilon_0 - 3\varepsilon - \beta/2 + \gamma$ \Then
			\li\label{s:free:+1}$counter \leftarrow counter + \log(1/\varepsilon)$
	\End
	\li\For $i=i_0,i_0+1,\ldots,i_f$ \Do
		\li $k\leftarrow 2^i$.
		\li \For all $h:S\to\Sigma$ \Do
			\li Extract from $r$ the sample $X'\subseteq X$ of the algorithm.
			\li Compute $bias^{X',R}_{S,h}$.
		\End
		\li \If $\exists h, \; \card{bias^{X',R}_{S,h} - bias^{X,R}_{S,h}} > 3\gamma$ \Then
					\li\label{s:free:+k}$counter\leftarrow counter+\beta^2 k - s\log \card{\Sigma}$.
				\End
		\li Guess if $\max_{h}bias_{S,h}^{X'}< 1-\varepsilon_0 - 2\varepsilon - i\beta$ and update $\mathcal{H}$ accordingly.
		\li\label{s:free-guess}\If guessed true  \Then
				\li Restart maintaining $\mathcal{H}$ and $counter$.
		\End
	\End
	\li\label{s:output}\Return accept iff $\max_{h} bias^{X,R}_{S,h}\geq 1 - \varepsilon_0 - 3\varepsilon - \gamma$.
\end{algbox}

\begin{algbox}{6.5in}{verifier-final-free}{The final oblivious verifier for free games.}
	\Procname{\textsc{Oblivious-Verifier-Free-Game}$(\mathcal{G}_R,\varepsilon_0,\varepsilon,randomness)$} 
	\li Let $\mathcal{H}$ contain all the free games that are consistent with $\mathcal{G}_R$.
	\li Try all guesses in \textsc{Oblivious-Verifier-Free-Execution}$(\mathcal{H},\mathcal{G}_R,\varepsilon_0,\varepsilon,randomness,0)$.
	\li Accept iff all feasible executions accept.
\end{algbox}

Next we analyze the oblivious verifier. We start by showing that when it accepts, \textsc{Find-Labeling} finds a high quality labeling (even if slightly of lower quality than in Figure~\ref{alg:amplified-free}).
\begin{lemma}
If \textsc{Oblivious-Verifier-Free-Game} accepts on the sketch of a game $\mathcal{G}$ and on randomness $r$, then \textsc{Find-Labeling}\begin{footnote}{Note that we argue about a version of \textsc{Find-Labeling} that checks a relaxed condition that is satisfied by a coin of bias $1- \varepsilon_0 - 3\varepsilon-2\gamma$ (as opposed to the version of Figure~\ref{alg:amplified-free}).}\end{footnote} produces a labeling that satisfies at least $1-\varepsilon_0 - 3\varepsilon - 2\gamma$ fraction of the edges when invoked on $\mathcal{G}$ and $r$ and with the same parameters $\varepsilon_0,\varepsilon$.
\end{lemma}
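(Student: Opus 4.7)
The plan is to pick out the single ``honest'' execution of \textsc{Oblivious-Verifier-Free-Execution}, namely the one whose guesses at Step~\ref{s:free-guess} match what actually happens during the run of \textsc{Find-Labeling} on $\mathcal{G}$ and $r$, and show that (i) this execution is feasible, so \textsc{Oblivious-Verifier-Free-Game} must accept it rather than discard it as infeasible, and (ii) acceptance of this execution, combined with the sketch's approximation guarantee, forces the \textsc{Return} step of \textsc{Find-Labeling} to fire on a sufficiently good labeling.

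For feasibility, I would argue by induction over the loop iterations that $\mathcal{G}\in\mathcal{H}$ is maintained as an invariant: initially $\mathcal{H}$ contains every free game consistent with $\mathcal{G}_R$, and $\mathcal{G}$ is one of them; after each guess at Step~\ref{s:free-guess}, $\mathcal{H}$ is restricted to games consistent with the guess, and by choice of guesses this is exactly the truth value of $\max_h bias_{S,h}^{X'}<1-\varepsilon_0-2\varepsilon-i\beta$ on $\mathcal{G}$, so $\mathcal{G}$ survives. Consequently $\mathcal{H}\neq\emptyset$ throughout, i.e.\ the execution never returns infeasible; since \textsc{Oblivious-Verifier-Free-Game} accepts, this execution must accept at Step~\ref{s:output}.

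Next, because the guesses agree with the algorithm's real restart decisions, the sample $S$ extracted by the verifier and the samples $X'$ extracted in each phase are identical to those used by \textsc{Find-Labeling}, and the sequence of restarts is the same. Thus the verifier reaches Step~\ref{s:output} exactly when \textsc{Find-Labeling} reaches its \textsc{Return} line~\ref{free:check} with the same $S$. Acceptance at Step~\ref{s:output} yields some $h:S\to\Sigma$ with
\[
bias_{S,h}^{X,R}\;\ge\;1-\varepsilon_0-3\varepsilon-\gamma.
\]
Applying Lemma~\ref{l:free-sketch} to this $S$ and $h$ gives $\card{bias_{S,h}^{X,R}-bias_{S,h}}\le\gamma$, hence
\[
bias_{S,h}\;\ge\;1-\varepsilon_0-3\varepsilon-2\gamma.
\]
Since $bias_{S,h}$ is by definition the fraction of edges satisfied by $f_{S,h,X},f_{S,h,Y}$, the labeling induced by $(S,h)$ meets the relaxed threshold checked at Step~\ref{free:check}, so \textsc{Find-Labeling} returns a labeling of value at least $1-\varepsilon_0-3\varepsilon-2\gamma$, as required.

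The only subtle point is the match between the verifier's ``honest'' execution and the real algorithm: one must verify that the randomness $r$ is consumed in the same order on both sides (sample $S$ first, then for each phase the sample $X'$ for each $h$), which is immediate from the way $r$ is parsed in both procedures, and that the restart guesses encode precisely the conditional needed to keep $\mathcal{G}$ in $\mathcal{H}$. Once those bookkeeping details are in place, the argument reduces to a single application of the sketch approximation lemma.
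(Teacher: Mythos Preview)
Your proposal is correct and follows exactly the paper's own approach: pick the execution of \textsc{Oblivious-Verifier-Free-Execution} whose guesses match the actual run of \textsc{Find-Labeling} on $\mathcal{G}$ and $r$, note it is feasible and therefore accepts, and then apply Lemma~\ref{l:free-sketch} to convert the acceptance condition $bias_{S,h}^{X,R}\ge 1-\varepsilon_0-3\varepsilon-\gamma$ into $bias_{S,h}\ge 1-\varepsilon_0-3\varepsilon-2\gamma$. The paper's proof is a two-sentence version of precisely this; you have simply spelled out the feasibility invariant and the bookkeeping that the paper leaves implicit.
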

\begin{proof}
Let $\mathcal{G}$ be the input game to \textsc{Find-Labeling}. If \textsc{Oblivious-Verifier-Free-Game} accepts with the sketch $\mathcal{G}_R$, then, in particular, the execution of \textsc{Oblivious-Verifier-Free-Execution} with the guesses that correspond to the run of \textsc{Find-Labeling} accepts. By Lemma~\ref{l:free-sketch}, \textsc{Find-Labeling} finds a labeling that satisfies $1-\varepsilon_0 - 3\varepsilon - 2\gamma$ fraction of the edges. 
\end{proof}

In order to argue that \textsc{Oblivious-Verifier-Free-Game} rejects with probability exponentially small in $\card{X}\card{\Sigma}$ we prove the following.
\begin{lemma}\label{l:free-soundness} The following hold:
\begin{itemize}
\item For any guesses, the probability that \textsc{Oblivious-Verifier-Free-Execution} rejects is exponentially small in $\card{X}\card{\Sigma}$.

\item \textsc{Oblivious-Verifier-Free-Execution} makes a number of guesses that is sufficiently smaller than $\card{X}\card{\Sigma}$ (the ration between the number of guesses and $\card{X}\card{\Sigma}$ can be made arbitrarily small by multiplying $\varepsilon$ by a suitable constant and by increasing the constant term of $i_0$).
\end{itemize}
\end{lemma}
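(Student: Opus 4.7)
The plan is to mimic the proof of Lemma~\ref{l:clique-soundness}, establishing the two items as consequences of three invariants maintained by \textsc{Oblivious-Verifier-Free-Execution}: (i) $\mathcal{H}$ always contains every free game consistent with the sketch $\mathcal{G}_R$ and with all guesses made so far; (ii) for any game in $\mathcal{H}$, the probability (over the randomness $r$) that the execution reaches a state with $counter=c$ is exponentially small in $c$; and (iii) every restart forces $counter$ to grow by a substantial amount, either $\log(1/\varepsilon)$ (when it is the first phase and a bad $h$ was sampled) or $\beta^2 k - s\log\card{\Sigma}$ (when a deviation is detected in phase $i$ with $k=2^i$).

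For invariant (i), I would argue inductively: $\mathcal{H}$ is initialized with all games consistent with $\mathcal{G}_R$, and each guess in Step~\ref{s:free-guess} prunes precisely the games inconsistent with that guess. For invariant (ii), I would justify each of the two ways $counter$ is incremented. The increment in Step~\ref{s:free:+1} requires that $\max_h bias^{X,R}_{S,h} < 1-\varepsilon_0-3\varepsilon-\beta/2+\gamma$; by Lemma~\ref{l:free-sketch} this implies $\max_h bias_{S,h}$ itself is strictly below $1-\varepsilon_0-2\varepsilon$, and by Lemma~\ref{l:Russell} with the parameters of Figure~\ref{alg:amplified-free} this event has probability at most $\varepsilon$ over the choice of $S$ for every game in $\mathcal{H}$. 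The increment in Step~\ref{s:free:+k} uses Lemma~\ref{l:free-sample}: the bad event $\card{bias^{X',R}_{S,h}-bias^{X,R}_{S,h}}>3\gamma$ implies, via Lemma~\ref{l:free-sketch} applied twice, a deviation $\card{bias^{X'}_{S,h}-bias_{S,h}}>\gamma$, which has probability exponentially small in $\beta^2 k$; the $-s\log\card{\Sigma}$ absorbs the union bound over the $\card{\Sigma}^s$ choices of $h:S\to\Sigma$.

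For invariant (iii), I would appeal to Lemma~\ref{l:simulated-coin} applied to the coin-finding view of \textsc{Find-Labeling}: a restart in phase $i$ means either the sampled group has true bias below $1-\varepsilon_0-2\varepsilon-\beta/2$ (the first-phase case, caught by Step~\ref{s:free:+1}), or some $h$ with bias close to the threshold had its estimated head fraction deviate from the true bias by at least $\beta/2$ in phase $i$ or $i-1$, which the verifier detects in Step~\ref{s:free:+k} since the estimation accuracy afforded by $R$ and by Lemma~\ref{l:free-sketch} is $O(\gamma)\ll\beta/2$. Combining the invariants: the first item follows because rejection occurs only when $counter>\card{X}\card{\Sigma}$ (invariant (ii) makes this exponentially unlikely) or because the final check in Step~\ref{s:output} fails, which by Lemma~\ref{l:free-sketch} forces $\max_h bias_{S,h} < 1-\varepsilon_0-3\varepsilon$, again exponentially unlikely over any game in $\mathcal{H}$ by Theorem~\ref{l:amplified-free}. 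The second item follows because each guess in Step~\ref{s:free-guess} is associated with a restart, and each restart increases $counter$ by at least $\min\{\log(1/\varepsilon),\beta^2 2^{i_0}-s\log\card{\Sigma}\}=\Omega(\log(1/\varepsilon))$ by the choice of $i_0$, so the total number of guesses is $O(\card{X}\card{\Sigma}/\log(1/\varepsilon))$, which can be made an arbitrarily small fraction of $\card{X}\card{\Sigma}$ by tightening the constants.

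The main obstacle is invariant (iii): the verifier does not actually see $bias_{S,h}^{X'}$, only its sketched proxy $bias_{S,h}^{X',R}$, so I need to carefully chain the three approximation bounds (true bias $\approx bias^R$, $bias^{X',R}\approx bias^{X'}$, $bias^{X'}\approx bias$) provided by Lemma~\ref{l:free-sketch} and Lemma~\ref{l:free-sample} to ensure that whenever the biased-coin analysis says a deviation occurred, the verifier's $3\gamma$ threshold actually triggers, and conversely that any triggering event has small probability. Provided $\gamma$ is chosen small enough relative to $\beta$ (which the constants in Figure~\ref{alg:amplified-free} arrange via $\gamma=\beta/80$), this chaining goes through cleanly.
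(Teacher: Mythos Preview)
Your proposal is correct and follows essentially the same route as the paper: the paper packages the three invariants you list as a separate Lemma~\ref{l:free-invariants} and then deduces Lemma~\ref{l:free-soundness} from it exactly as you describe, invoking Lemma~\ref{l:Russell}, Lemma~\ref{l:free-sample}, Lemma~\ref{l:free-sketch}, Lemma~\ref{l:simulated-coin}, and Theorem~\ref{l:amplified-free} at the same points. One small slip: it is not true that ``each guess in Step~\ref{s:free-guess} is associated with a restart'' --- a guess of \emph{false} advances to the next phase without restarting --- so the count of guesses is (number of restarts $+1$) times at most $i_f-i_0+1$ phases per run, not simply the number of restarts; the paper's own proof is equally terse on this point and relies on the same invariant~(iii) to bound the number of restarts.
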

The correctness of the final verifier follows from a union bound over all possible guesses. If the number of guesses is sufficiently small, then there is an exponentially small probability in $\card{X}\card{\Sigma}$ that the final verifier rejects.

In order to prove Lemma~\ref{l:free-soundness}, we show the following invariants.
\begin{lemma}\label{l:free-invariants}
The following invariants are maintained throughout the run of \textsc{Oblivious-Verifier-Free-Execution}:
\begin{enumerate}
\item\label{i:free-1} $\mathcal{H}$ consists of all the free games that are consistent with the sketch and the guesses so far.
\item\label{i:free-2} The probability that $counter=c$ is exponentially small in $c$.
\item\label{i:free-3} If \textsc{Oblivious-Verifier-Free-Execution} restarts in phase $i$, then either in the current phase or in the previous $counter$ increases by at least $\beta^2 2^{i-1} - s\log\card{\Sigma}$, or $i=i_0$ and $counter$ increases by $\log(1/\varepsilon)$.
\end{enumerate} 
\end{lemma}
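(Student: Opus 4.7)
The plan is to model the proof closely on the analogous proof for approximate clique (Lemmas~\ref{i:clique-1}, \ref{i:clique-2}, \ref{i:clique-3}) since the verifier \textsc{Oblivious-Verifier-Free-Execution} has the same structure as \textsc{Oblivious-Verifier-Clique-Execution}. I would prove the three invariants one at a time, in order.

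For invariant~\ref{i:free-1}, I would use a straightforward induction on the execution. Initially $\mathcal{H}$ is set to be all free games consistent with $\mathcal{G}_R$, so the base case holds. The only step that modifies $\mathcal{H}$ is the guessing step (just before line~\ref{s:free-guess}), in which the verifier guesses whether $\max_h bias_{S,h}^{X'}<1-\varepsilon_0-2\varepsilon-i\beta$ and removes from $\mathcal{H}$ every game inconsistent with the guess; this preserves the invariant.

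For invariant~\ref{i:free-2}, I would account for each source of increase to $counter$ by identifying a low-probability event and bounding its probability. The increase of $\log(1/\varepsilon)$ in step~\ref{s:free:+1} occurs only when $\max_h bias^{X,R}_{S,h}<1-\varepsilon_0-3\varepsilon-\beta/2+\gamma$; by Lemma~\ref{l:free-sketch}, for every game $\mathcal{G}\in\mathcal{H}$ this implies $\max_h bias_{S,h}<1-\varepsilon_0-3\varepsilon-\beta/2+2\gamma$, and by the sampling Lemma~\ref{l:Russell} (applied with an $\varepsilon'$ slightly smaller than $\varepsilon$) this event has probability at most $\varepsilon$ over the choice of $S\subseteq X$, i.e.\ exponentially small in $\log(1/\varepsilon)$. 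The increase of $\beta^2 k - s\log\card{\Sigma}$ in step~\ref{s:free:+k} is triggered when some $h$ has $|bias^{X',R}_{S,h}-bias^{X,R}_{S,h}|>3\gamma$; by Lemma~\ref{l:free-sketch} this forces $|bias^{X'}_{S,h}-bias_{S,h}|>\gamma$ for some $h$, and by Lemma~\ref{l:free-sample} the probability (over $X'$) of deviation by more than $\gamma$ for a fixed $h$ is exponentially small in $\beta^2 k$; a union bound over the $\card{\Sigma}^{s}$ possible $h$ costs the $s\log\card{\Sigma}$ subtracted in the increment. Summing these independent contributions, the probability that $counter$ ever reaches value $c$ is at most $\exp(-\Omega(c))$.

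For invariant~\ref{i:free-3}, I would argue directly from when a restart can happen, adapting the reasoning of Lemma~\ref{l:simulated-coin}. A restart in phase $i$ occurs precisely when the verifier guesses (in line~\ref{s:free-guess}) that $\max_h bias_{S,h}^{X'}<1-\varepsilon_0-2\varepsilon-i\beta$. Consider the two cases. If $i=i_0$: then to be consistent with the guess, every $\mathcal{G}\in\mathcal{H}$ must satisfy $\max_h bias_{S,h}<1-\varepsilon_0-2\varepsilon-i_0\beta+\beta/4\le 1-\varepsilon_0-3\varepsilon-\beta/2$ (using $\beta/4$ slack for sampling deviation on $X'$, absorbed into $\varepsilon$), and by Lemma~\ref{l:free-sketch} this forces $\max_h bias^{X,R}_{S,h}<1-\varepsilon_0-3\varepsilon-\beta/2+\gamma$, triggering step~\ref{s:free:+1}. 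If $i>i_0$: the coin survived the previous phase, where at least one $h^*$ had $bias^{X'_{\text{prev}}}_{S,h^*}\ge 1-\varepsilon_0-2\varepsilon-(i-1)\beta$, but now the same $h^*$ fails with $bias^{X'}_{S,h^*}<1-\varepsilon_0-2\varepsilon-i\beta$; hence in at least one of the two phases $|bias^{X'}_{S,h^*}-bias_{S,h^*}|>\beta/2$, which by Lemma~\ref{l:free-sketch} (and since $3\gamma<\beta/2$) yields $|bias^{X',R}_{S,h^*}-bias^{X,R}_{S,h^*}|>3\gamma$, so step~\ref{s:free:+k} fires with $k=2^i$ or $k=2^{i-1}$, giving $counter$ increase at least $\beta^2 2^{i-1}-s\log\card{\Sigma}$.

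The main obstacle will be the bookkeeping in invariant~\ref{i:free-3}: carefully tracking the three small slacks (the $\gamma$ in the sketch approximation, the $\beta/4$ sampling slack on $X'$, and the $3\gamma$ threshold in step~\ref{s:free:+k}) and verifying that the chosen constants in Figure~\ref{alg:amplified-free} make the inequalities go through consistently across the two phases in the non-initial case. Once this accounting is done, invariant~\ref{i:free-2} then gives Lemma~\ref{l:free-soundness} by noting that $counter\le\card{X}\card{\Sigma}$ at rejection, and invariant~\ref{i:free-3} bounds the number of restarts (hence the number of guesses) by $counter/\min(\log(1/\varepsilon),\beta^2 2^{i_0-1}-s\log\card{\Sigma})$, which by the choice of $i_0$ is $o(\card{X}\card{\Sigma})$.
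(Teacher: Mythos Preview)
Your treatment of invariants~\ref{i:free-1} and~\ref{i:free-2} matches the paper's proof. The gap is in your handling of invariant~\ref{i:free-3} for the case $i=i_0$.

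You write: ``to be consistent with the guess, every $\mathcal{G}\in\mathcal{H}$ must satisfy $\max_h bias_{S,h}<1-\varepsilon_0-2\varepsilon-i_0\beta+\beta/4$''. This is not justified. The guess constrains $\max_h bias^{X'}_{S,h}$, not $\max_h bias_{S,h}$; the ``$\beta/4$ slack for sampling deviation on $X'$'' you invoke is the probabilistic guarantee of Lemma~\ref{l:free-sample}, not a deterministic implication. There can be games $\mathcal{G}\in\mathcal{H}$ with large $bias_{S,h}$ for which this particular $X'$ happens to be a bad sample, and for those games step~\ref{s:free:+1} need not fire. Moreover, your displayed inequality $1-\varepsilon_0-2\varepsilon-i_0\beta+\beta/4\le 1-\varepsilon_0-3\varepsilon-\beta/2$ is arithmetically false: it would require $i_0\beta\ge\varepsilon+3\beta/4$, but with $\beta=\varepsilon/i_f$ and $i_0<i_f$ one has $i_0\beta<\varepsilon$.

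The paper avoids this by not splitting on ``$i=i_0$ vs.\ $i>i_0$'' at all. It invokes Lemma~\ref{l:simulated-coin} once, obtaining the dichotomy that holds for every phase (including $i=i_0$): either $\max_h bias_{S,h}\le 1-\varepsilon_0-2\varepsilon-\beta/2$ (only possible when $i=i_0$), in which case the sketch lemma forces step~\ref{s:free:+1} to fire; or for some $h$ the reported $bias^{X'}_{S,h}$ deviates from $bias_{S,h}$ by more than $\beta/2$ in the current or previous phase, in which case the sketch lemma gives $\lvert bias^{X',R}_{S,h}-bias^{X,R}_{S,h}\rvert>\beta/2-2\gamma>3\gamma$ and step~\ref{s:free:+k} fires. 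Your $i>i_0$ argument correctly rederives the second alternative; the fix is to allow that same alternative when $i=i_0$ (deviation in phase $i_0$ itself), rather than asserting that only the small-bias alternative can occur there.
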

\begin{proof}
It's clear that Invariant~\ref{i:free-1} holds.
Next we show that Invariant~\ref{i:free-2} holds. Initially $counter$ is set to $0$. The increase in Step~\ref{s:free:+1} is justified by Lemma~\ref{l:Russell} and the design
of the sketch (Lemma~\ref{l:free-sketch}). 
The increase in Step~\ref{s:free:+k} is justified by Lemma~\ref{l:free-sample} and the design
of the sketch (Lemma~\ref{l:free-sketch}). 

Next we show that Invariant~\ref{i:free-3} holds.
Suppose that there is a restart in phase $i$. Let $k=2^i$.
Lemma~\ref{l:simulated-coin} ensures that either in the current phase or in the previous one $\max_h bias_{S,h}^{X'}$ deviates from $\max_h bias_{S,h}$ by more than an additive $\beta/2$, or it's the first phase and $\max_h bias_{S,h}$ is smaller than $1-\varepsilon_0-2\varepsilon - \beta/2$. We handle both cases:

\begin{enumerate}
\item 
Suppose that this is the first phase and $\max_h bias_{S,h}<1-\varepsilon_0 - 2\varepsilon - \beta/2$. By the design of the sketch (Lemma~\ref{l:free-sketch}), we have $\max_h bias_{S,h}^{X,R}<1-\varepsilon_0 - 2\varepsilon -\beta/2 +\gamma$, and hence $counter$ increases as required in Step~\ref{s:free:+1}.

\item Suppose that either in the current phase or in the previous one $\max_h bias_{S,h}^{X'}$ deviates from $\max_h bias_{S,h}$ by more than an additive $\beta/2$. By the design of the sketch (Lemma~\ref{l:free-sketch}), either in this phase or in the previous there exists $h:S\to\Sigma$ such that  $\card{bias_{S,h}^{X',R}-bias_{S,h}^{X,R}}>3\gamma$. In this case, $counter$ increases appropriately in Step~\ref{s:+k} of the appropriate phase. 
\end{enumerate}
\end{proof}

We can now prove Lemma~\ref{l:free-soundness} from the invariants of Lemma~\ref{l:free-invariants}.

\begin{proof}(of Lemma~\ref{l:free-soundness} from Lemma~\ref{l:free-invariants})
Let us show that the first item in Lemma~\ref{l:free-soundness} follows from Invariant~\ref{i:free-2}. The verifier only rejects if $counter > \card{X}\card{\Sigma}$, or if it reached Step~\ref{s:output} and rejected. 
The invariant ensures that the probability that $counter > \card{X}\card{\Sigma}$ is exponentially small in $\card{X}\card{\Sigma}$. 
Theorem~\ref{l:amplified-free} ensures that the probability that $\max_h bias_{S,h}< 1 -\varepsilon_0 - 3\varepsilon$ is exponentially small in $\card{X}\card{\Sigma}$. The design of the sketch (Lemma~\ref{l:free-sketch}) ensures that if $\max_h bias_{S,h}\geq 1 -\varepsilon_0 - 3\varepsilon$, then  $\max_h bias_{S,h}^{X,R}\geq  1 -\varepsilon_0 - 3\varepsilon -\gamma$.

The second item follows from Invariant~\ref{i:free-3}, since $counter < \card{X}\card{\Sigma}$, and, $counter$ increases in large increments: Invariant~\ref{i:clique-3} ensures that the increments only depend on $\varepsilon$ and $i_0$, and can be made arbitrarily large by multiplying $\varepsilon$ by a small constant and by adding to $i_0$ a large constant.
\end{proof}

A non-uniform deterministic algorithm for free games follows, concluding the proof of Theorem~\ref{t:free}.
\begin{thm}\label{t:free-det}
There is a deterministic non-uniform algorithm that given a free game $\mathcal{G}$ with vertex sets $X,Y$, alphabet $\Sigma$ and $val(\mathcal{G})\geq 1-\varepsilon_0$, finds a labeling to the vertices that satisfies $1-\varepsilon_0 - O(\varepsilon)$ fraction of the edges. The algorithm runs in time $\tilde{O}(\card{X}\card{Y} \card{\Sigma}^{O((1/\varepsilon^2)\log(\card{\Sigma}/\varepsilon))})$.
\end{thm}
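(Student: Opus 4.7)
The plan is to apply the generic derandomization result (Theorem~\ref{t:derandomization}) to the Las Vegas algorithm \textsc{Find-Labeling} from Theorem~\ref{l:amplified-free}, using \textsc{Oblivious-Verifier-Free-Game} as the accompanying oblivious verifier. The correctness of this verifier and its small rejection probability were already established in Lemma~\ref{l:free-soundness} via the invariants of Lemma~\ref{l:free-invariants}: if the verifier accepts then \textsc{Find-Labeling} succeeds on the input, and for every input the probability that it rejects on a uniformly random randomness string is exponentially small in $\card{X}\card{\Sigma}$.

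To plug this into Theorem~\ref{t:derandomization} I need to quantify the sketch size and choose the repetition parameter $t$. By Lemma~\ref{l:free-sketch} the sketch $\mathcal{G}_R$ consists of the subgame induced on $R\subseteq Y$ with $\card{R} = \tilde{O}(s^2\log\card{\Sigma}/\gamma^2)$, where $s = \ceil{\log(\card{\Sigma}/\varepsilon^2)/\varepsilon^2}$ and $\gamma = \Theta(\beta)$. The sketch stores one constraint $\pi_{(x,y)}\subseteq\Sigma\times\Sigma$ per edge $(x,y)\in X\times R$, so its size in bits is bounded by
\[
S \;=\; \card{X}\cdot\card{R}\cdot\card{\Sigma}^2 \;=\; \tilde{O}\!\left(\card{X}\,\card{\Sigma}^{O((1/\varepsilon^2)\log(\card{\Sigma}/\varepsilon))}\right),
\]
which is absorbed into the same $\card{\Sigma}^{O((1/\varepsilon^2)\log(\card{\Sigma}/\varepsilon))}$ factor that already appears in the run-time of \textsc{Find-Labeling}. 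The verifier's error is $\varepsilon_V = 2^{-\Omega(\card{X}\card{\Sigma})}$, so setting $t = \Theta(S/(\card{X}\card{\Sigma})) = \tilde{O}(\card{\Sigma}^{O((1/\varepsilon^2)\log(\card{\Sigma}/\varepsilon))})$ yields $\varepsilon_V < 2^{-S/t}$, satisfying the hypothesis of Theorem~\ref{t:derandomization}.

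Invoking Theorem~\ref{t:derandomization} then produces a non-uniform deterministic algorithm whose run-time is $T\cdot t$, where $T = \tilde{O}(\card{X}\card{Y}\card{\Sigma}^{O((1/\varepsilon^2)\log(\card{\Sigma}/\varepsilon))})$ is the run-time of \textsc{Find-Labeling}. Since $t$ is itself of the form $\card{\Sigma}^{O((1/\varepsilon^2)\log(\card{\Sigma}/\varepsilon))}$, the product $T\cdot t$ remains $\tilde{O}(\card{X}\card{Y}\card{\Sigma}^{O((1/\varepsilon^2)\log(\card{\Sigma}/\varepsilon))})$, which is exactly the run-time claimed in Theorem~\ref{t:free-det}. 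The resulting non-uniform algorithm hard-wires a single randomness string on which \textsc{Find-Labeling}, repeated $t$ times, returns a labeling of value at least $1-\varepsilon_0-O(\varepsilon)$ on every input of the given size.

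The only place where care is required is verifying that the two sources of ``$\card{\Sigma}^{O((1/\varepsilon^2)\log(\card{\Sigma}/\varepsilon))}$'' compose correctly: one coming from the run-time of the randomized algorithm and the other from $t$ (which scales with the sketch size divided by $\card{X}\card{\Sigma}$). Since both exponents have the same order and the base is $\card{\Sigma}$, their product merely doubles the constant in the exponent and is absorbed by the $O(\cdot)$ notation. Re-scaling $\varepsilon$ by a constant (as foreshadowed in Lemma~\ref{l:free-soundness}) handles the $O(\varepsilon)$-slack between the $1-\varepsilon_0-3\varepsilon-2\gamma$ guarantee of the repeated Las Vegas algorithm and the $1-\varepsilon_0-O(\varepsilon)$ target of the theorem.
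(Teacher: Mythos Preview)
Your proposal is correct and follows exactly the approach the paper intends: apply Theorem~\ref{t:derandomization} to \textsc{Find-Labeling} using \textsc{Oblivious-Verifier-Free-Game}, bound the sketch size via Lemma~\ref{l:free-sketch}, and check that the resulting repetition factor $t$ is absorbed into the $\card{\Sigma}^{O((1/\varepsilon^2)\log(\card{\Sigma}/\varepsilon))}$ term. The paper itself does not spell out these computations (it simply states that the deterministic algorithm ``follows''), so your write-up is a faithful expansion; note only that your bound on $S$ is quite loose---the sketch is actually just $\tilde{O}(\card{X}\card{\Sigma}^2\poly(1/\varepsilon))$ bits---but this looseness is harmless for the final run-time.
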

We remark that there is an alternative way to prove Theorem~\ref{t:free-det} similarly to Remark~\ref{r:clique-alternative}.

\section{From List Decoding to Unique Decoding of Reed-Muller Code}


\subsection{A Randomized Algorithm With Error Probability Roughly $1/\card{\field}$}\label{s:RM-corr}

Let $\field$ be a finite field, and let $\dims > 3$ and $\degree<\card{\field}$ be natural numbers.
First we describe a randomized algorithm with error probability $\card{\field}^{-\Omega(1)}$ for reducing the Reed-Muller list decoding problem with parameters $\field$, $\dims$, $\degree$ to the Reed-Muller unique decoding problem. The algorithm is based on the idea of self-correction (see, e.g.,~\cite{STV} and the references there). The algorithm is described in Figure~\ref{alg:self-corr}. On input $f:\vecspace\to\field$ it picks a random line $\ell$ and finds all the polynomials that agree with $f$ on about $\rho$ fraction of the points in $\ell$. We'll show that if $f$ agrees with an $\dims$-variate polynomial $p$ on $\rho$ fraction of the points in $\vecspace$, then, except with probability roughly $1/\card{\field}$ over the choice of $\ell$, there is about $\rho$ fraction of the points on $\ell$ on which $f$ agrees with $p$. Hence, the restriction of $p$ to $\ell$ is likely to be one of the polynomials that the algorithm finds.
The algorithm outputs a list of functions $g_1,\ldots,g_k:\vecspace\to\field$. Each $g_i$ corresponds to one of the polynomials in the line list. The algorithm computes each $g_i$ by iterating over all $z\in\vecspace$ and considering the plane $s$ spanned by $\ell$ and $z$. 
Again, except for fraction roughly $1/\card{\field}$ of the $z\in\vecspace$, there is about $\rho$ fraction of the points on $s$ on which $f$ agrees with $p$. The algorithm sets $g_i(z)=f(z)$ if there is a unique polynomial that agrees with $f$ on about $\rho$ fraction of the points in $s$ and with $g_i$'s polynomial on $\ell$.


\begin{algbox}{6.5in}{self-corr}{A randomized algorithm with error probability $\card{\field}^{-\Omega(1)}$ that finds $g_1,\ldots,g_k:\vecspace\to\field$, $k\leq O(1/\rho)$, such that for every polynomial $p$ of degree at most $\degree$ that agrees with $f$ on $\rho$ fraction of the points in $\vecspace$ there is $g_i$ that agrees with $p$ on at least $1-\epsilon$ fraction of the points.}
	\Procname{\textsc{Self-Correct}$(f,\rho,\epsilon)$}
	\li\label{s:sample}Pick uniformly at random $x,y\in\vecspace$, $y\neq\vec{0}$.
	\li\label{s:uni-list}Find all univariate $p^{(1)}_{x,y},\ldots,p^{(k)}_{x,y}$ so
$\card{\sett{t\in\field}{f(x+ty) = p^{(j)}_{x,y}(t)}}\geq(\rho-\epsilon)\cdot\card{\field}$.
\li \For $z\in\vecspace$ such that $z-x,y$ are independent \Do 
		\li\label{s:bi-list}Find $q^{(1)},\ldots,q^{(k')}$: $\card{\sett{t_1,t_2\in\field}{f(x+t_1 y+t_2 (z-x)) = q^{(j)}(t_1,t_2)}}\geq(\rho-\epsilon)\cdot\card{\field}^2$.
		\li \For $1\leq i\leq k$ \Do 
			\li \If $\exists !\; 1\leq j\leq k',\; p^{(i)}_{x,y}(t)= q^{(j)}(t,0)$ for all $t\in\field$ \Then
			\li Set $g_{i}(z) = q^{(j)}(0,1)$.
			\End
		\End
	\End
	\li \Return $g_1,\ldots,g_k$.
\end{algbox}

Steps~\ref{s:uni-list} and~\ref{s:bi-list} that require list decoding of Reed-Solomon code can be performed in time $\poly(\card{\field})$. Therefore, the run time of the algorithm is $O(\card{\vecspace}\poly(\card{\field}))$. A standard choice of parameters is $\card{\field} = \poly\log\card{\vecspace}$, and it leads to a run-time of $\tilde{O}(\card{\vecspace})$.
Next we prove the correctness of the algorithm.

We'll need the following lemma about list decoding for polynomials.
\begin{lemma}[List decoding]\label{l:short-list}
Fix a finite field $\field$ and natural numbers $\dims$ 
and $\degree<\card{\field}$. Let
$f:\vecspace\rightarrow\field$. Then, for any $\rho \geq
2\sqrt{\frac{\degree}{\card{\field}}}$, if
$q_1,\ldots,q_k:\vecspace\rightarrow\field$ are {\sf different}
polynomials of degree at most $\degree$, and for every $1\leq
i\leq k$, the polynomial $q_i$ agrees with $f$ on at least
$\rho$ fraction of the points, i.e.,
$\Prob{x\in\vecspace}{q_i(x) = f(x)}\geq
\rho$, then $k\leq\frac{2}{\rho}$.
\end{lemma}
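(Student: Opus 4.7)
The plan is to apply Schwartz--Zippel to control pairwise agreement of the $q_i$, and then run a sharpened Cauchy--Schwarz (second-moment) calculation on the multiplicity function. Setting $N = \card{\vecspace}$, $A_i = \sett{x \in \vecspace}{q_i(x) = f(x)}$, and $\delta = \degree/\card{\field}$, I would first note that $\card{A_i} \geq \rho N$ and---because $q_i - q_j$ is a nonzero polynomial of degree at most $\degree$ and hence has at most $(\degree/\card{\field})\card{\vecspace}$ zeros by Schwartz--Zippel---$\card{A_i \cap A_j} \leq \delta N$ for $i \neq j$. The hypothesis $\rho \geq 2\sqrt{\degree/\card{\field}}$ translates directly to $\delta \leq \rho^2/4$, which is the only place the assumed lower bound on $\rho$ is used.

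Next I would introduce the multiplicity $m(x) = \card{\sett{i}{q_i(x) = f(x)}}$, so that $\sum_x m(x) = \sum_i \card{A_i} \geq k\rho N$ and $\sum_x m(x)(m(x)-1) = \sum_{i \neq j} \card{A_i \cap A_j} \leq k(k-1)\delta N$. Cauchy--Schwarz on the support gives $(\sum_x m(x))^2 \leq N \cdot \sum_x m(x)^2$; using $\sum_x m(x)^2 = \sum_x m(x) + \sum_x m(x)(m(x)-1)$ and setting $\alpha = (1/N)\sum_x m(x)$, this becomes the one-variable inequality $\alpha(\alpha - 1) \leq k(k-1)\delta$.

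The pivotal move---and the reason this yields the Sudan-type $O(1/\rho)$ bound rather than the Johnson-type $O(1/\rho^2)$ one---is to feed $\alpha \geq k\rho$ into the \emph{left}-hand side, not the trivial upper bound $\alpha \leq k$ on the right. In the regime $k \geq 1/\rho$ we have $\alpha \geq k\rho \geq 1$, and since $t \mapsto t(t-1)$ is monotone on $[1/2,\infty)$ this gives $k\rho(k\rho - 1) \leq k(k-1)\delta$. Dividing by $k$ and plugging in $\delta \leq \rho^2/4$ yields $\rho(k\rho - 1) \leq (k-1)\rho^2/4$, which rearranges to $(3/4)k\rho^2 + \rho^2/4 \leq \rho$, so $k \leq 4/(3\rho) - 1/3 < 2/\rho$. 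In the complementary regime $k < 1/\rho$, the desired bound $k < 2/\rho$ is immediate.

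The main obstacle I anticipate is purely bookkeeping: the monotonicity $\alpha(\alpha-1) \geq k\rho(k\rho-1)$ requires $k\rho \geq 1$, so the argument has to be split into the two cases above, and one has to resist the tempting but weaker step of upper-bounding $\alpha$ by $k$ on the right---which would recover only $k = O(1/\rho^2)$ and lose the factor of $\rho$ that makes the lemma useful for the $O(1/\rho)$-sized list produced in Theorem~\ref{t:RM}.
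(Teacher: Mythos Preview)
Your argument is correct, but the paper's proof is noticeably shorter and takes a different counting route. Both proofs share the same key ingredient---Schwartz--Zippel gives $\card{A_i\cap A_j}\le (\degree/\card{\field})\card{\vecspace}$ for $i\neq j$---but diverge after that. The paper uses the Bonferroni (inclusion--exclusion) lower bound
\[
\card{\vecspace}\ \ge\ \Bigl|\bigcup_i A_i\Bigr|\ \ge\ \sum_i\card{A_i}-\sum_{i<j}\card{A_i\cap A_j}\ \ge\ k\rho\card{\vecspace}-\binom{k}{2}\frac{\degree}{\card{\field}}\card{\vecspace},
\]
assumes $k=\lfloor 2/\rho\rfloor+1$, and derives a contradiction directly: $k\rho>2$ while $\binom{k}{2}\degree/\card{\field}\le 1$ (using $2/\rho\le\sqrt{\card{\field}/\degree}$ and $\degree\le\card{\field}$). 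There is no case split and no Cauchy--Schwarz.

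Your second-moment approach via the multiplicity function is a genuine alternative: it is closer in spirit to Johnson-bound style arguments, and your careful feeding of $\alpha\ge k\rho$ into the \emph{left} side of $\alpha(\alpha-1)\le k(k-1)\delta$ is exactly what upgrades the bound from $O(1/\rho^2)$ to $O(1/\rho)$. You even extract a slightly sharper constant ($k\le 4/(3\rho)-1/3$) in the main case. The cost is the extra case analysis on $k\rho\gtrless 1$ and a longer calculation; the paper's inclusion--exclusion route sidesteps both.
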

\begin{proof}
Let $\rho \geq 2\sqrt{\frac{\degree}{\card{\field}}}$, and
assume on way of contradiction that there exist $k =
\floor{\frac{2}{\rho}}+1$ different polynomials
$q_1,\ldots,q_k:\vecspace\to\field$ as stated.

For every $1\leq i\leq k$, let
$A_i\defeq\sett{x\in\vecspace}{q_i(x)=f(x)}$. By
inclusion-exclusion,
$$\card{\vecspace}\geq\card{\bigcup_{i=1}^{k} A_i} \geq
\sum_{i=1}^{k}\card{A_i} - \sum_{i\neq j}\card{A_i\cap A_j}$$ By
Schwartz-Zippel, for every $1\leq i\neq j\leq k$, $\card{A_i\cap
A_j}\leq \frac{\degree}{\card{\field}}\cdot\card{\vecspace}$.
Therefore, by the premise,
$$\card{\vecspace}\geq
k\rho\card{\vecspace} -
\binom{k}{2}\frac{\degree}{\card{\field}}\card{\vecspace}$$ On one
hand, since $k>\frac{2}{\rho}$, we get $k\rho> 2$. On the
other hand, since
$\frac{2}{\rho}\leq\sqrt{\frac{\card{\field}}{\degree}}$ and
$\degree\leq\card{\field}$, we get
$\binom{k}{2}\leq\frac{\card{\field}}{\degree}$. This results in a
contradiction.
\end{proof}

Let $p$ be an $\dims$-variate polynomial of degree at most $\degree$ over $\field$ that agrees with $f$ on at least $\rho$ fraction of the points $x\in\vecspace$. We will show that most likely one of the $g_i$'s that the algorithm outputs is very close to $p$.

In the following lemma we argue that the restriction of $p$ to the line defined by $x$ and $y$ is likely to appear in the line list.
\begin{lemma}[Sampling]\label{l:line-sample} 
Except with probability $\rho/(\epsilon^2\card{\field})$ over the choice of $x$ and $y$, for at least $(\rho-\epsilon)\card{\field}$ elements $t\in\field$ we have $f(x+ty)=p(x+ty)$.
\end{lemma}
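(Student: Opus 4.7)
The plan is to apply Chebyshev's inequality to indicator variables counting the agreement of $f$ with $p$ on the random line $L_{x,y} = \{x+ty : t\in\field\}$, exploiting the pairwise independence of points on a line. Concretely, define $X_t = \mathbb{1}[f(x+ty) = p(x+ty)]$ for each $t\in\field$, and let $S = \sum_{t\in\field} X_t$ be the number of agreements on the line. Since $x$ is uniform over $\vecspace$, the point $x+ty$ is uniform over $\vecspace$ for any fixed $t$, so $\Expc{}{X_t}\geq \rho$ and hence $\Expc{}{S}\geq\rho\card{\field}$.

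The second step is the variance bound, which relies on the pairwise behavior of line points. For any $t_1\neq t_2$ in $\field$ and conditional on $y\neq\vec{0}$, the map $(x,y)\mapsto (x+t_1 y, x+t_2 y)$ is a bijection from $\vecspace \times (\vecspace\setminus\{\vec{0}\})$ onto the set of ordered pairs of distinct points in $\vecspace$. Letting $A=\{z\in\vecspace : f(z)=p(z)\}$ and $\alpha=\card{A}/\card{\vecspace}\geq\rho$, a direct calculation gives
\[
\Expc{}{X_{t_1}X_{t_2}} = \frac{\card{A}(\card{A}-1)}{\card{\vecspace}(\card{\vecspace}-1)} = \alpha^2 + \frac{\alpha(\alpha-1)}{\card{\vecspace}-1},
\]
so $\mathrm{Cov}(X_{t_1},X_{t_2}) = \alpha(\alpha-1)/(\card{\vecspace}-1)\leq 0$. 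Consequently $\Var{}{S} \leq \sum_t \Var{}{X_t} \leq \card{\field}\cdot\rho(1-\rho)\leq \rho\card{\field}$.

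Finally, Chebyshev's inequality yields
\[
\Prob{}{S < (\rho-\epsilon)\card{\field}} \;\leq\; \Prob{}{|S - \Expc{}{S}|\geq \epsilon\card{\field}} \;\leq\; \frac{\Var{}{S}}{\epsilon^2\card{\field}^2} \;\leq\; \frac{\rho}{\epsilon^2\card{\field}},
\]
which is exactly the claimed bound. I do not expect a serious obstacle here: the only delicate point is the conditioning $y\neq\vec{0}$, but this is already built into the lemma statement (where $y$ is drawn uniformly from $\vecspace\setminus\{\vec{0}\}$), and excluding a single value of $y$ only affects the joint distribution of pairs by the factor $\card{\vecspace}/(\card{\vecspace}-1)$, which is precisely what produces the negative covariance term above and is therefore harmless.
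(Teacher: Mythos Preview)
Your proof is correct and follows essentially the same approach as the paper: a second-moment (Chebyshev) argument using that distinct points on a uniformly random line are pairwise independent (or, with the $y\neq\vec{0}$ conditioning, slightly negatively correlated). The paper is slightly terser---it takes $\alpha=\rho$ and asserts independence of $X_t,X_{t'}$ for $t\neq t'$ directly---while you are more careful about the $y\neq\vec{0}$ conditioning, but the substance is identical.
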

\begin{proof}
The lemma follows from a second moment argument. Let $A\subseteq\vecspace$, $\card{A} = \rho\card{\vecspace}$ contain elements $z\in\vecspace$ such that $f(z)=p(z)$. For uniform $x,y\in\vecspace$, for $t\in\field$ let $X_t$ be an indicator random variable for $x+ty\in A$. Let $X=(1/\card{\field})\sum X_t$. We have $\Expc{}{X} = \rho$. For every $t\neq t'\in\field$ we have that $X_t$ and $X_{t'}$ are independent. Hence, $\Expc{}{X^2}= (1/\card{\field})^2\sum_{t, t'}\Expc{}{X_t X_{t'}} = (1/\card{\field})^2\sum_{t}\Expc{}{X_t} = \rho/\card{\field}$. By Chebychev inequality,
$$\Prob{}{\card{X - \rho}>\epsilon}\leq \frac{\Expc{}{X^2}}{\epsilon^2} = \frac{\rho}{\epsilon^2\card{\field}}.$$
The lemma follows.
\end{proof}

The same holds for the planes defined by most $z\in\vecspace$.
\begin{lemma}[Sampling]\label{l:subspace-sample} 
Except with probability $\rho/(\epsilon^2\card{\field})$ over the choice of $z$, $x$ and $y$, for at least $(\rho-\epsilon)\card{\field}^2$ elements $t_1,t_2\in\field$, we have $f(x+t_1y+t_2(z-x))=p(x+t_1y+t_2(z-x))$.
\end{lemma}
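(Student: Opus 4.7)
I will follow the same second-moment template as Lemma~\ref{l:line-sample}, adapted from lines to planes. Set $A = \sett{w \in \vecspace}{f(w) = p(w)}$, so that $\card{A} = \rho\card{\vecspace}$. For independent uniform $x, y, z \in \vecspace$ and each $(t_1, t_2) \in \field^2$, let $X_{t_1, t_2}$ be the indicator of $x + t_1 y + t_2 (z - x) \in A$, and set $X = \frac{1}{\card{\field}^2}\sum_{t_1, t_2} X_{t_1, t_2}$. Then $\Expc{}{X} = \rho$, and the event to be excluded (``fewer than $(\rho - \epsilon)\card{\field}^2$ agreeing pairs'') is exactly $X < \rho - \epsilon$.

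The crux is pairwise independence of the $X_{t_1, t_2}$. Writing the sample point as $(1 - t_2) x + t_1 y + t_2 z$, I will show that for any $(t_1, t_2) \ne (t_1', t_2')$ the joint distribution of the two corresponding plane points is uniform on $\vecspace \times \vecspace$. This reduces to checking that the $2 \times 3$ coefficient matrix
\[ \begin{pmatrix} 1 - t_2 & t_1 & t_2 \\ 1 - t_2' & t_1' & t_2' \end{pmatrix} \]
has rank $2$: if one row were a scalar multiple $\lambda$ of the other, matching the first and third coordinates gives $t_2 = \lambda t_2'$ and $1 - t_2 = \lambda(1 - t_2')$, which combine to force $\lambda = 1$, and then $(t_1, t_2) = (t_1', t_2')$, contradicting the hypothesis. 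Hence the induced $\field$-linear map $\vecspace^3 \to \vecspace^2$ is surjective, the image is uniform, and the two indicators are independent.

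Given pairwise independence, $\Var{}{X} = \card{\field}^{-4}\sum_{(t_1,t_2)} \Var{}{X_{t_1,t_2}} \le \rho/\card{\field}^2$, and Chebyshev yields
\[ \Prob{}{\card{X - \rho} > \epsilon} \le \frac{\Var{}{X}}{\epsilon^2} \le \frac{\rho}{\epsilon^2 \card{\field}^2} \le \frac{\rho}{\epsilon^2 \card{\field}}, \]
which is even stronger than the claimed bound; translating $X \ge \rho - \epsilon$ back to counting $(t_1, t_2)$ completes the argument. The only step requiring care is the pairwise-independence check; no separate casework is needed for degenerate events such as ``$y = 0$'' or ``$z - x$ parallel to $y$'', since the rank-$2$ argument is purely algebraic and imposes no genericity hypothesis on $(x, y, z)$.
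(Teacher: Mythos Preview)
Your proposal is correct and follows exactly the second-moment template the paper intends (the paper gives no separate proof for this lemma, merely remarking that ``the same holds'' as in Lemma~\ref{l:line-sample}); your rank-$2$ check on the $2\times 3$ coefficient matrix is a clean way to certify the pairwise independence that the line case gets for free. As you note, the argument in fact yields the sharper bound $\rho/(\epsilon^{2}\card{\field}^{2})$, and the paper's stated $\rho/(\epsilon^{2}\card{\field})$ is just the looser line-case bound carried over verbatim.
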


From Lemmas~\ref{l:line-sample} and~\ref{l:subspace-sample} it follows that except with probability roughly $1/\card{\field}$ restrictions of $p$ appear both in the line list and in most planes lists. Next we'll argue that for most $z\in\vecspace$ it's unlikey that the restriction of $p$ to the plane is not the unique polynomial in the plane list that agrees with $p$ on the line.

\begin{lemma}\label{l:RM-uniqueness}
The probability over the choice of $x$, $y$ and $z$ that there are $1\leq j< i\leq k'$ such that $q^{(j)}(t,0)\equiv q^{(i)}(t,0)$ even though $q^{(j)}\not\equiv q^{(i)}$, is at most $4\degree/((\rho-\epsilon)^2\card{\field})$.
\end{lemma}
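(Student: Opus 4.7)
The plan is to condition on the plane $\Pi = \sett{x + t_1 y + t_2 (z - x)}{t_1, t_2 \in \field}$ and show the conditional probability of the event is already small; averaging over $\Pi$ then gives the unconditional bound. First I would invoke Lemma~\ref{l:short-list} on the plane, with $\dims = 2$ and agreement parameter $\rho - \epsilon$ (the hypotheses of Theorem~\ref{t:RM} guarantee $\rho - \epsilon \geq 2\sqrt{\degree/\card{\field}}$), to conclude $k' \leq 2/(\rho - \epsilon)$.

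Next I would verify that, conditional on $\Pi$, the line $L = \sett{x + ty}{t \in \field}$ is distributed uniformly over the $\card{\field}^2 + \card{\field}$ lines contained in $\Pi$. This is a straightforward counting argument: for each flag $L_0 \subset \Pi_0$, the number of triples $(x, y, z)$ with $x \in L_0$, $y$ a nonzero direction vector of $L_0$, and $z - x$ a direction vector of $\Pi_0$ independent of $y$ is $\card{\field}\cdot(\card{\field}-1)\cdot(\card{\field}^2-\card{\field}) = \card{\field}^2(\card{\field}-1)^2$, independent of the flag; hence $(L, \Pi)$ is uniform over the set of all flags, and conditioning on $\Pi$ leaves $L$ uniform over the lines of $\Pi$.

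For the core estimate, fix any pair of distinct plane polynomials $q^{(i)} \neq q^{(j)}$ and write $h = q^{(i)} - q^{(j)}$, a nonzero bivariate polynomial on $\Pi$ of degree at most $\degree$. The number of lines in $\Pi$ on which $h$ vanishes identically is at most $\degree$: each such line is cut out by a linear equation $\lambda = 0$ which must divide $h$, distinct lines produce pairwise non-proportional, hence coprime, linear factors, and their product therefore divides $h$. So $\Prob{L}{h|_L \equiv 0 \mid \Pi} \leq \degree/(\card{\field}^2 + \card{\field})$. A union bound over the $\binom{k'}{2} \leq 2/(\rho-\epsilon)^2$ pairs of distinct plane polynomials then gives conditional probability at most $2\degree/((\rho-\epsilon)^2(\card{\field}^2 + \card{\field})) \leq 4\degree/((\rho-\epsilon)^2\card{\field})$, uniformly in $\Pi$, and hence unconditionally.

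The main subtlety I expect is the uniformity step, since the algorithm's randomness is over $(x, y, z)$ rather than directly over flags, and one must confirm that the constraints $y \neq 0$ and $z - x$ independent of $y$ do not bias the induced distribution on $(L, \Pi)$. Once that bookkeeping is done, the algebraic ingredient reduces to the standard observation that a nonzero bivariate polynomial of degree at most $\degree$ has at most $\degree$ linear factors.
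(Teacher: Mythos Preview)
Your proposal is correct and follows essentially the same approach as the paper: condition on the plane, invoke Lemma~\ref{l:short-list} to bound $k'$, and union-bound over pairs the probability that the random line falls in the agreement locus of two distinct plane polynomials. You are in fact more careful than the paper on two points: you verify explicitly that $(L,\Pi)$ is uniform over flags (the paper simply asserts the conditional distribution), and your linear-factor count gives a per-pair probability of $\degree/(\card{\field}^2+\card{\field})$, sharper by a factor of roughly $\card{\field}$ than the paper's $\degree/\card{\field}$; you then relax this to match the stated bound.
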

\begin{proof}
Fix $1\leq j< i\leq k'$.
Suppose that one first picks the three dimensional subspace $s$ and then picks $x,y,z\in\vecspace$ such that $\sett{x+t_1y+t_2(z-x)}{t_1,t_2\in\field}$. The probability over the choice of $x$ and $y$, that $q^{(j)}$ agrees with $q^{(i)}$ on the line $\sett{x+ty}{t\in\field}$ is at most $\degree/\card{\field}$. By Lemma~\ref{l:short-list}, there are at most $2/(\rho-\epsilon)$ polynomials in the list of $s$. Taking a union bound over all choices of $1\leq j< i\leq k'$ results in the lemma.
\end{proof}

Hence, except with probability $O(\delta)$ over $x$, $y$ for $\delta = \max\set{\degree/(\rho^2\card{\field}),\rho/(\epsilon^2\card{\field})}$, the restriction of $p$ to the line $\sett{x+ty}{t\in\field}$ appears as $p_{x,y}^{(i)}$, and, moreover, $g_i$ agrees with $p$ on all but $O(\delta)$ fraction of the $z$ (note that only a small fraction $\card{\field}^{2}/\card{\vecspace}$ of the $z\in\vecspace$ satisfy that $z-x,y$ are dependent).

\subsection{Finding Approximate Codewords as Finding a Biased Coin}

We describe an analogy between finding a list of approximate polynomials and finding a biased coin. We think of picking a line and finding a list decoding of $f$ on the line as picking a coin. The coin picking algorithm is described in Figure~\ref{alg:pick-poly-coin}. We think of sampling $z\in\vecspace$ and checking whether the line list decoding is consistent with the list decoding on the subspace defined by $z$ and the line as a coin toss that falls on ``heads'' if there is consistency. The coin tossing algorithm is described in Figure~\ref{alg:toss-poly-coin}. 

\begin{algbox}{6.5in}{pick-poly-coin}{A coin corresponds to a line in $\vecspace$ and the list decoding of $f$ on the line.}
	\Procname{\textsc{RM-Pick-Coin}$(f,\rho,\epsilon)$}
	\li Pick uniformly at random $x,y\in\vecspace$, $y\neq\vec{0}$.
	\li Find univariate polynomials $p^{(1)}_{x,y},\ldots,p^{(k)}_{x,y}$ so
$\card{\sett{t\in\field}{f(x+ty) = p^{(j)}_{x,y}(t)}}\geq(\rho-\epsilon)\cdot\card{\field}$.	
	\li \Return $x,y,p^{(1)}_{x,y},\ldots,p^{(k)}_{x,y}$.
\end{algbox}

\begin{algbox}{6.5in}{toss-poly-coin}{A coin toss corresponds to picking a uniform $z\in\vecspace$ and checking whether the line list decoding is consistent with the list decoding on the subspace defined by the line and $z$.}
	\Procname{\textsc{RM-Toss-Coin}$(f,\rho,\epsilon,x,y,p^{(1)}_{x,y},\ldots,p^{(k)}_{x,y})$}
	\li Pick uniformly $z\in\vecspace$ independent of $x,y$.
	\li Find $q^{(1)},\ldots,q^{(k')}$ so $\card{\sett{t_1,t_2\in\field}{f(x+t_1y+t_2 (z-x)) = q(t_1,t_2)}}\geq(\rho-\epsilon)\cdot\card{\field}^2$.
	\li \For $1\leq i\leq k$ \Do 
			\li \If $\neg\exists ! 1\leq j\leq k',\; p^{(i)}_{x,y}(t)\equiv q^{(j)}(t,0)$ \Then
			\li \Return ``tails''.
			\End
		\End
	\li \Return ``heads''.
\end{algbox}

\begin{algbox}{6.5in}{recover}{An algorithm that uses a biased coin (given by $x,y,p^{(1)}_{x,y},\ldots,p^{(k)}_{x,y}$) to find a short list of functions $g_1,\ldots,g_k:\vecspace\to\field$ such that for every polynomial $p$ of degree at most $\degree$ that agrees with $f$ on $\rho$ fraction of the points in $\vecspace$ there is $g_i$ that agrees with $p$ on at least $1-\epsilon$ fraction of the points.}
\Procname{\textsc{RM-Interpolate}$(f,\rho,\epsilon,x,y,p^{(1)}_{x,y},\ldots,p^{(k)}_{x,y})$}
\li \For $z\in\vecspace$ independent of $x,y$ \Do 
		\li Find $q^{(1)},\ldots,q^{(k')}$ so $\card{\sett{t_1,t_2\in\field}{f( x+t_1 y+t_2 (z-x)) = q(t_1,t_2)}}\geq(\rho-\epsilon)\cdot\card{\field}^2$.
		\li \For $1\leq i\leq k$ \Do 
			\li \If $\exists !\; 1\leq j\leq k',\; p^{(i)}_{x,y}(t)= q^{(j)}(t,0)$ for all $t\in\field$ \Then
			\li Set $g_{i}(z) = q^{(j)}(0,1)$.
			\End
		\End
	\End
	\li \Return $g_1,\ldots,g_k$.
\end{algbox}

Lemmas~\ref{l:line-sample},~\ref{l:subspace-sample} and~\ref{l:RM-uniqueness} ensure that a biased coin is picked with at least a constant probability for sufficiently large $\rho>\epsilon>0$ and sufficiently small $\degree<\card{\field}$.
Note that both picking a coin and tossing it take short time $\poly(\card{\field})$. Hence, if we use $\tilde{O}(\card{\vecspace})$ coin tosses to find a biased coin and $\card{\field} = \poly\log\card{\vecspace}$, then we get an algorithm with $\tilde{O}(\card{\vecspace})$ run-time. 
Moreover, using a biased coin one can compute a short list of approximate polynomials as in Figure~\ref{alg:recover}.

\begin{lemma}
Assume that $x,y,p^{(1)}_{x,y},\ldots,p^{(k)}_{x,y}$ define a biased coin (namely, a coin that falls on ``heads'' with probability at least $1-O(\delta)$ for $\delta$ as in Section~\ref{s:RM-corr}), and that $\epsilon$ is smaller than $O(\delta)$ from Section~\ref{s:RM-corr}. Then, for every $\dims$-variate polynomial $p$ of degree at most $\degree$ over $\field$ there exists $g_i$ in the list computed by \textsc{RM-Interpolate} that agrees with $p$ on at least $1-O(\delta)$ fraction of the points.
\end{lemma}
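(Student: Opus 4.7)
The plan is to show that for any degree-$\degree$ polynomial $p$ agreeing with $f$ on at least $\rho$ fraction of $\vecspace$, there is an index $i_0$ for which $p^{(i_0)}_{x,y}=p|_\ell$ (with $\ell=\{x+ty:t\in\field\}$), and for this $i_0$ the output $g_{i_0}$ agrees with $p$ on a $1-O(\delta)$ fraction of $\vecspace$. To pin down $i_0$, I would first apply Lemma~\ref{l:line-sample} to $p$: the restriction $p|_\ell$ agrees with $f$ on at least $(\rho-\epsilon)\card{\field}$ line points except with probability $O(\delta)$ over $(x,y)$, and when that happens $p|_\ell$ is forced into the line list built by \textsc{RM-Pick-Coin} by the defining property of the list. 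Since the biased-coin hypothesis is itself a $1-O(\delta)$ event over $(x,y)$, a union bound lets us assume both hold simultaneously for the $(x,y)$ output by \textsc{RM-Pick-Coin}.

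Next I would call $z\in\vecspace$ \emph{good} if (i) $z-x$ and $y$ are linearly independent, (ii) \textsc{RM-Toss-Coin} on this $z$ returns heads, (iii) $p|_{s_z}$ agrees with $f$ on at least $(\rho-\epsilon)\card{\field}^2$ points of the plane $s_z=\{x+t_1y+t_2(z-x):t_1,t_2\in\field\}$, and (iv) no two distinct polynomials in the plane list for $s_z$ agree identically on the line $\ell$. Condition (i) excludes at most an $O(\card{\field}^2/\card{\vecspace})$ fraction of $z$ (negligible because $\dims>3$); the biased-coin assumption bounds the failure of (ii) by $O(\delta)$; Lemma~\ref{l:subspace-sample} bounds the failure of (iii) by $O(\delta)$; and Lemma~\ref{l:RM-uniqueness}, combined with the short-list bound of Lemma~\ref{l:short-list}, bounds the failure of (iv) by $O(\delta)$.

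For a good $z$ I would argue that \textsc{RM-Interpolate} is forced to set $g_{i_0}(z)=p(z)$. Condition (iii) places $p|_{s_z}$ in the plane list, say $p|_{s_z}=q^{(j')}$. Condition (ii) furnishes a unique $j$ with $p^{(i_0)}_{x,y}(t)\equiv q^{(j)}(t,0)$. Since $q^{(j')}(t,0)=p|_\ell(t)=p^{(i_0)}_{x,y}(t)$, this $q^{(j')}$ is also a candidate for the match, and condition (iv) rules out having two such candidates, so $j=j'$ and $q^{(j)}=p|_{s_z}$. The algorithm then sets $g_{i_0}(z)=q^{(j)}(0,1)=p|_{s_z}(0,1)=p(z)$. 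A union bound over the failure events of (i)--(iv) delivers the claim.

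I expect the main subtlety to be the coupling step in the first paragraph: the biased-coin property is a deterministic condition on the particular $(x,y)$ handed to \textsc{RM-Interpolate}, while placing $p|_\ell$ in the line list is a probabilistic event over the uniform $(x,y)$. One must observe that both events occur with probability $1-O(\delta)$ over the random sampling done inside \textsc{RM-Pick-Coin}, so they co-occur for the coin that is ultimately accepted; everything downstream is a routine union bound that exploits the uniqueness guarantee built into the biased-coin test.
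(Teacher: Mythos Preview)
Your argument does not prove the lemma as stated. The hypothesis fixes a particular $x,y$ (and hence a particular line $\ell$) that happens to define a biased coin; there is no residual randomness in $x,y$. Yet your first paragraph invokes Lemma~\ref{l:line-sample} to place $p|_\ell$ in the line list, and your conditions (iii) and (iv) invoke Lemmas~\ref{l:subspace-sample} and~\ref{l:RM-uniqueness}---all statements about \emph{random} $(x,y)$ or $(x,y,z)$. You correctly flag this as ``the main subtlety,'' but your proposed resolution---observing that both events have probability $1-O(\delta)$ over the sampling in \textsc{RM-Pick-Coin} and hence co-occur---establishes a different statement: that a random coin is, with high probability, simultaneously biased and good for $p$. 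That may suffice for the overall algorithm, but it is not the lemma, which is a deterministic claim about any fixed biased coin.

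The paper's proof takes a different route that sidesteps this. It argues by contradiction that the biased-coin hypothesis \emph{alone} forces $p|_\ell$ into the line list: assuming $p|_\ell$ is absent, an averaging argument over the planes through the fixed line $\ell$ (using only that $p$ agrees with $f$ on a $\rho$-fraction of $\vecspace$) shows that for at least an $\epsilon$-fraction of $z$ the restriction $p|_{s_z}$ appears in the plane list; the paper then argues these $z$ produce ``tails,'' so the bias is at most $1-\epsilon<1-O(\delta)$, a contradiction. With $p|_\ell$ thus secured in the line list, the paper simply refers back to the analysis in Section~\ref{s:RM-corr} for the remaining $1-O(\delta)$ agreement of $g_{i_0}$ with $p$. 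The idea you are missing is this deterministic extraction of ``$p|_\ell$ is in the line list'' from the bias hypothesis itself, rather than from the randomness of $(x,y)$.
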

\begin{proof} Let $p$ be an $\dims$-variate polynomial of degree at most $\degree$ over $\field$ that agress with $f$ on at least $\rho$ fraction of the points $x\in\vecspace$. 
Assume on way of contradiction that none of $p^{(1)}_{x,y},\ldots,p^{(k)}_{x,y}$ is $p$ restricted to the line $\sett{x+ty}{t\in\field}$ (otherwise, we are done as we argued in Section~\ref{s:RM-corr}). Take $\epsilon$ sufficiently smaller than $O(\delta)$ in the lemma.
For at least $\epsilon$ fraction of the $z\in\vecspace$ such that $z-x,y$ are independent, the fraction of points $x+t_1y+t_2(z-x)$ with $t_1,t_2\in\field$ on which $f$ agrees with $p$, is at least $\rho-\epsilon$. For those choices of $z$, the coin falls on ``tails'', hence the bias of the coin is at most $1-\epsilon$, which is a contradiction.
\end{proof}

Therefore, \textsc{Find-Biased-Coin} when using \textsc{RM-Pick-Coin} and \textsc{RM-Toss-Coin}, and when followed by \textsc{RM-Interpolate} to obtain the list of approximate polynomials from the coin, solves the list decoding to unique decoding problem for the Reed-Muller code in time $\tilde{O}(\card{\vecspace}\poly(\card{\field}))$ and with error probability exponentially small in $\card{\vecspace}\log\card{\field}$.
Since the input $f,\rho,\epsilon$ is of size $\card{\vecspace}\log\card{\field}$, we also get a deterministic non-uniform algorithm that runs in similar time.
For convenience, we repeat Theorem~\ref{t:RM} that we just proved.
\begin{theorem}
Let $\field$ be a finite field, let $\degree$ and $\dims>3$ be natural numbers and let $0<\rho,\epsilon<1$, such that $\degree\leq \card{\field}/10$, $\epsilon >\sqrt[3]{2/\card{\field}}$ and $\rho>\epsilon + 2\sqrt{\degree/\card{\field}}$. 
There is a randomized algorithm that given $f:\vecspace\to\field$, finds a list of $l=O(1/\rho)$ functions $g_1,\ldots,g_l:\vecspace\to\field$, such that for every $\dims$-variate polynomial $p$ of degree at most $\degree$ over $\field$ that agrees with $f$ on at least $\rho$ fraction of the points $x\in\vecspace$, there exists $g_i$ that agrees with $p$ on at least $1-\epsilon$ fraction of the points $x\in\vecspace$. The algorithm has error probability exponentially small in $\card{\vecspace}\log\card{\field}$ and it runs in time $\tilde{O}(\card{\vecspace}\poly(\card{\field}))$. It implies a deterministic non-uniform algorithm with the same run-time. 
\end{theorem}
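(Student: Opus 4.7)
The plan is to turn the constant error self-correction algorithm from Figure~\ref{alg:self-corr} into a biased coin problem, apply \textsc{Find-Biased-Coin} from Section~\ref{s:biased} to reach exponentially small error, and then invoke Adleman's union bound directly (no sketching needed, since the input has size $\card{\vecspace}\log\card{\field}$).

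First I would set up the analogy explicitly. A coin is a tuple $(x,y,p^{(1)}_{x,y},\ldots,p^{(k)}_{x,y})$ produced by \textsc{RM-Pick-Coin}: a random line $\ell=\sett{x+ty}{t\in\field}$ together with the Reed--Solomon list decoding of $f$ on $\ell$. A coin toss, defined by \textsc{RM-Toss-Coin}, picks an independent $z\in\vecspace$, list-decodes $f$ on the plane $s$ spanned by $\ell$ and $z$, and returns ``heads'' iff each $p^{(i)}_{x,y}$ extends uniquely to a bivariate polynomial in the plane list. By Lemmas~\ref{l:line-sample},~\ref{l:subspace-sample} and~\ref{l:RM-uniqueness}, combined with the hypotheses $\degree\leq\card{\field}/10$, $\epsilon>\sqrt[3]{2/\card{\field}}$, $\rho>\epsilon+2\sqrt{\degree/\card{\field}}$, a uniformly chosen coin has bias at least $1-O(\delta)$ with constant probability, where $\delta=\max\set{\degree/(\rho^2\card{\field}),\rho/(\epsilon^2\card{\field})}$; moreover, any coin of bias at least $1-\epsilon$ yields, via \textsc{RM-Interpolate}, a list $g_1,\ldots,g_k$ in which every relevant polynomial $p$ has some $g_i$ that agrees with $p$ on at least $1-O(\delta)$ fraction of the points (this was shown in Section~\ref{s:RM-corr}).

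Next I would invoke \textsc{Find-Biased-Coin} with parameters $\eta=O(\delta)$ and $\zeta=\Theta(\epsilon)$, where the coin source is \textsc{RM-Pick-Coin} and each toss is a single call to \textsc{RM-Toss-Coin}. Each pick and each toss takes $\poly(\card{\field})$ time (the list-decoding steps for univariate and bivariate Reed--Solomon/Reed--Muller codes run in $\poly(\card{\field})$). Setting $n=\Theta(\card{\vecspace}\log\card{\field})$, the lemma in Section~\ref{s:biased} bounds the total number of coin tosses by $\tilde{O}(n/\zeta^2)=\tilde{O}(\card{\vecspace}\log\card{\field})$, so the amplified search takes $\tilde{O}(\card{\vecspace}\poly(\card{\field}))$ time and outputs a biased coin except with probability $\exp(-\Omega(\card{\vecspace}\log\card{\field}))$. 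Feeding this coin into \textsc{RM-Interpolate} and iterating over $z\in\vecspace$ adds another $\card{\vecspace}\poly(\card{\field})$ factor, so the total run-time is $\tilde{O}(\card{\vecspace}\poly(\card{\field}))$ with error $\exp(-\Omega(\card{\vecspace}\log\card{\field}))$. To package this as a Las Vegas algorithm, after interpolation I would verify the output by re-sampling planes and checking that the produced list contains approximations of every large-agreement polynomial; any failure causes a $\bot$ output, preserving the exponentially small error.

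Finally, for the deterministic non-uniform algorithm, since the input size $N=\card{\vecspace}\log\card{\field}$ and the error probability is $2^{-\Omega(N)}$, I would apply Adleman's argument: take the union bound over all $2^N$ possible inputs to find a single randomness string on which the Las Vegas algorithm succeeds for every input, and hard-wire it as advice. This is the $s(N)=N$ case of Theorem~\ref{t:derandomization} with the trivial oblivious verifier that simply runs the algorithm and checks its self-verification. The resulting deterministic non-uniform algorithm inherits the $\tilde{O}(\card{\vecspace}\poly(\card{\field}))$ run-time. The main obstacle I anticipate is calibrating the interplay between $\epsilon$, $\rho$, $\degree$, $\card{\field}$ and the parameter $\delta$ in \textsc{Find-Biased-Coin}: one must ensure that the gap between the bias of a ``good'' coin (at least $1-O(\delta)$) and the cutoff below which \textsc{RM-Interpolate} fails (roughly $1-\epsilon$) is at least $\Theta(\epsilon)$, which is exactly what the hypotheses $\degree\leq\card{\field}/10$ and $\epsilon>\sqrt[3]{2/\card{\field}}$ are tailored to guarantee.
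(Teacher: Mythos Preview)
Your proposal is correct and follows essentially the same route as the paper: cast \textsc{Self-Correct} as a biased-coin problem via \textsc{RM-Pick-Coin} and \textsc{RM-Toss-Coin}, amplify with \textsc{Find-Biased-Coin} to error $\exp(-\Omega(\card{\vecspace}\log\card{\field}))$, recover the list with \textsc{RM-Interpolate}, and then apply Adleman directly (no sketch) since the input size is $\card{\vecspace}\log\card{\field}$.

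One small remark: your Las Vegas packaging step (``verify the output by re-sampling planes and checking that the produced list contains approximations of every large-agreement polynomial'') is both unnecessary and, as stated, not implementable---you do not know the large-agreement polynomials, so you cannot check against them. The paper sidesteps this: once the error is already $2^{-\Omega(N)}$ for input size $N$, Adleman's union bound applies to the Atlantic City algorithm as is, and for combining runs of a list-outputting algorithm one can simply take the union of the lists (as noted after Theorem~\ref{t:derandomization}). You can safely drop that paragraph.
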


\section{Open Problems}

\begin{itemize}
\item We obtained efficient non-uniform deterministic algorithms. It would be very interesting to convert them to uniform algorithms.

\item What other algorithms can be derandomized using our method? Can more sophisticated sketching and sparsification techniques be used to handle algorithms on sparse graphs? The applications in this paper have Atlantic City algorithms that run in sub-linear time, but we do not think that the method is limited to such problems. It will be interesting to find concrete examples.

\item What lower bound can one prove on the number of coin tosses needed to find a biased coin? What if the target bias is not known, yet it is known that a large fraction of the coins achieve that target? Solving the latter would yield an algorithm for \textsc{Free Games} that handles games with general value, rather than value close to $1$. 

\item Can one derandomize algorithms using pseudorandom generators and make use of the fact that the pseudorandomness should look random only to a distinguisher that is a verifier for the algorithm? Can one use the existence of an oblivious verifier to construct better psuedorandom generators?

\item Are there deterministic algorithms for \textsc{Max-Cut} on dense graphs that run in time $\tilde{O}(\card{V}^2 + (1/\varepsilon)^{O(1/\gamma\varepsilon^2)})$ or even $O(\card{V}^2 + 2^{O(1/\gamma\varepsilon^2)})$ instead of $\tilde{O}(\card{V}^2(1/\varepsilon)^{O(1/\gamma\varepsilon^2)})$?
Recall that the randomized algorithm of Mathieu and Schudy~\cite{MS} runs in time $O(\card{V}^2 + 2^{O(1/\gamma^2\varepsilon^2)})$. Are there deterministic algorithms for (approximate) \textsc{Clique} that run in time $\tilde{O}(\card{V}^2 + 2^{O(1/(\rho^3\varepsilon^2))})$ instead of $\tilde{O}(\card{V}^2 2^{O(1/(\rho^3\varepsilon^2))})$?

\item The run-times of our algorithms have $\poly\log n$ factors coming from our algorithm for the biased coin problem and from the size of the sketches. Can they be eliminated?

\end{itemize}

\section*{Acknowledgements}
Dana Moshkovitz is grateful to Sarah Eisenstat for her collaboration during the long preliminary stages of this work. Many thanks to Scott Aaronson, Noga Alon, Bernard Chazelle, Shiri Chechik, Shayan Oveis Gharan, Oded Goldreich, David Karger, Guy Moshkovitz, Michal Moshkovitz, Richard Peng, Seth Pettie, Vijaya Ramachandran, Aaron Sidford, Dan Spielman, Bob Tarjan, Virginia Vassilevska Williams, Avi Wigderson and Ryan Williams for discussions.

\bibliographystyle{plain}
\bibliography{bibfile}

\end{document}